\declaretheorem[name=Theorem,numberwithin=section]{theorem}
\declaretheorem[name=Proposition,numberwithin=section,sibling=theorem]{proposition}
\declaretheorem[name=Lemma,numberwithin=section,sibling=theorem]{lemma}
\declaretheorem[name=Corollary,numberwithin=section,sibling=theorem]{corollary}
\declaretheorem[name=Definition,numberwithin=section,sibling=theorem,style=definition]{definition}
\declaretheorem[name=Example,numberwithin=section,sibling=theorem,style=definition]{example}
\declaretheorem[name=Remark,numberwithin=section,sibling=theorem,style=remark]{remark}
\tikzstyle{empty}=[shape=circle, tikzit fill={rgb,255: red,191; green,191; blue,191}]
\tikzstyle{dot}=[fill=black, draw=black, shape=circle]
\tikzstyle{hook}=[right hook->, draw=black, tikzit draw=magenta]
\tikzstyle{mono2}=[draw=black, >->]
\tikzstyle{epi2}=[draw=black, ->>]
\tikzstyle{to}=[draw=black, ->]
\tikzstyle{equal-arrow}=[-, double equal sign distance]
\tikzstyle{dashed-arrow}=[dashed, ->]
\tikzstyle{epi}=[draw=black, -, epi2]
\tikzstyle{mono}=[draw=black, -, mono2]
\tikzstyle{hook}=[draw=black, -, right hook->]
\title{Operator Spaces, Linear Logic and the Heisenberg-Schrödinger Duality of Quantum Theory}
\author{
  \IEEEauthorblockN{Bert Lindenhovius}
  \IEEEauthorblockA{
    \textit{Institute for Mathematical Methods in} \\
    \textit{Medicine and Data Based Modeling} \\
    \textit{Johannes Kepler University} \\
    Linz, Austria
  }
  \and
  \IEEEauthorblockN{Vladimir Zamdzhiev}
  \IEEEauthorblockA{
    \textit{Université Paris-Saclay, CNRS, ENS Paris-Saclay, Inria,} \\
    \textit{Laboratoire Méthodes Formelles,} \\
    91190, Gif-sur-Yvette, France
  }
}
\begin{document}

\maketitle

\begin{abstract}
  We show that the category $\OS$ of operator spaces, with complete
  contractions as morphisms, is locally countably presentable and a model of
  Intuitionistic Linear Logic in the sense of Lafont. We then describe a model
  of Classical Linear Logic, based on $\OS$, whose duality is compatible with
  the Heisenberg-\Schrod{} duality of quantum theory. We also show that $\OS$
  provides a good setting for studying pure state and mixed state quantum
  information, the interaction between the two, and even higher-order quantum
  maps such as the quantum switch.
\end{abstract}

\section{Introduction}
\label{sec:intro}

The Heisenberg-\Schrod{} duality establishes a duality
between two pictures of quantum theory. In the \Schrod{} picture, it is often
said that one modifies the state of the quantum system while keeping the
observable invariant. In the Heisenberg picture, it is the other way round, i.e.
the observable is modified while keeping the state invariant. There is a rich
mathematical theory, based on functional analysis \cite{pedersen:analysisnow}, noncommutative geometry \cite{connes:ncg}, and
operator algebras (e.g. von Neumann algebras) \cite{Blackadar17,kadisonringrose:oa1,kadisonringrose:oa2,takesaki}, that can
be used to describe this duality and that can make precise what we mean by
``state'' and ``observable''. 

What is missing from the picture (figuratively) is whether this duality can be
understood using Linear Logic \cite{linear-logic}. Indeed, Linear Logic (LL)
has been influential in our understanding of dualities, because it provides us
with a rich logical setting, equipped with numerous logical connectives, that
allow us to study dualities. This is reflected quite nicely via Polarised
Linear Logic \cite{pll} which makes it clear how certain LL formulas may be
assigned positive or negative logical polarities (see also \cite{pll-fixpoints}
for more information). Furthermore, the semantics of linear logic has been
extensively studied and the relevant categorical models are well-known
\cite{mellies-linear-logic,seely-linear,barr-autonomous-book} which gives us
another way to reason about the relevant dualities.

Our paper is concerned with the following question:
\begin{center}
  Can we use linear logic semantics to study the Heisenberg-\Schrod{} duality?
\end{center}
We provide strong evidence in support of an affirmative answer. In particular,
we construct a model of LL whose duality is compatible with the
Heisenberg-\Schrod{} duality and where system descriptions in the \Schrod{}
picture correspond to formulas with a positive logical polarity, whereas those
in the Heisenberg picture correspond to formulas with a negative logical
polarity. In the process of constructing this model, we prove mathematical
results in the intersection of noncommutative geometry and category theory, via
the theory of operator spaces \cite{effros-ruan,blecher-merdy,pisier}.

In Section \ref{sec:categorical-background}, we recall some preliminaries on
\emph{locally presentable categories} \cite{lp-categories-book}. This is a very
strong categorical property that is essential for establishing our results. The
main category on which we base our development is $\OS$, the category of
\emph{operator spaces} with complete contractions as morphisms. Operator spaces
are mathematical structures from the field of noncommutative geometry that are
widely recognised as the noncommutative (or quantum) generalisation of Banach
spaces. In Section \ref{sec:os}, we recall some preliminaries on operator
spaces, we recall that this is indeed a nice setting for the mathematical
description of the Heisenberg-\Schrod{} duality, and then we prove that $\OS$
is locally countably presentable. This, together with its structure as a
symmetric monoidal closed category, implies that it is a model of
Intuitionistic Linear Logic (ILL). In Section \ref{sec:pure-mixed}, we show
that $\OS$ can be used to study both pure state and mixed state quantum
information and the interaction between the two via semantic methods based on
ILL. We show that it is expressive enough to model higher-order quantum maps,
such as the quantum switch \cite{quantum-switch}, and we also showcase the use
of the \emph{Haagerup tensor product} \cite{effros-ruan} for such maps and its
potential interesting connections to BV-logic \cite{bv-logic}. In Section
\ref{sec:ll-hs}, we construct a model of Classical Linear Logic (CLL), i.e.
full linear logic, based on $\OS$ through the use of the Chu construction
\cite{Chu}. We show that the duality in the resulting model is compatible with
the Heisenberg-\Schrod{} duality and that it behaves well in a polarised sense:
von Neumann algebras (viewed as dual operator spaces) correspond to formulas
with negative logical polarity, whereas their predual operator spaces
correspond to formulas with positive logical polarity. In this polarised sense,
the multiplicative disjunction corresponds to the spatial tensor product of von
Neumann algebras, which is how systems are composed in the Heisenberg picture,
whereas the multiplicative conjunction corresponds to the monoidal tensor of
$\OS$ applied to their preduals, which is how systems are composed in the
\Schrod{} picture. In Section \ref{sec:discuss}, we discuss future work and
other related works.

\section{Categorical Preliminaries}
\label{sec:categorical-background}

In this section we recall some background on locally presentable categories.
They were originally introduced by Gabriel and Ulmer
\cite{gabriel-ulmer}. The background we present here is mostly based on
the textbook accounts in \cite{lp-categories-book,borceux2}.

\begin{definition}[$\alpha$-directed Poset]
  A partially ordered set (poset) $\Lambda$ is \emph{directed} whenever every finite
  subset $X \subseteq \Lambda$ has an upper bound in $\Lambda.$ If $\alpha$ is a regular
  cardinal, we say that a poset $\Lambda$ is \emph{$\alpha$-directed} if every
  subset $X \subseteq \Lambda,$ with cardinality strictly smaller than $\alpha$, has
  an upper bound in $\Lambda.$
\end{definition}

In particular, the notion of a directed set coincides with that of an
$\aleph_0$-directed set. We also say that an $\aleph_1$-directed set is
\emph{countably directed}. Indeed, a poset $\Lambda$ is $\aleph_1$-directed iff
every countable subset $X \subseteq \Lambda$ has an upper bound in $\Lambda.$


\begin{definition}[$\alpha$-directed Diagram]
  \label{def:countably-directed-colimit}
  An \emph{$\alpha$-directed diagram} in a category $\CC$ is a functor $D
  \colon \Lambda \to \CC,$ where $\Lambda$ is an $\alpha$-directed poset viewed
  as a posetal category in the obvious way.
  A colimiting cocone $(C, \{c_\lambda \colon D_\lambda \to C\}_{\lambda \in \Lambda} )$
  of $D$ is called an $\alpha$-\emph{directed colimit}.
  When $\alpha = \aleph_0$, we say that $D$ is a \emph{directed diagram}, when
  $\alpha = \aleph_1$ we say that $D$ is a \emph{countably-directed} diagram,
  and likewise for the colimit.
\end{definition}

An important concept in the theory of locally presentable categories is the
idea of an $\alpha$-presentable object that we define next. These objects are
particularly well-behaved in the theory.

\begin{definition}
  \label{def:presentable-object}
  Let $\alpha$ be a regular cardinal. An object $A$ of a category $\CC$ is
  \emph{$\alpha$-presentable} whenever the hom-functor $\CC(A,-) \colon \CC \to
  \Set$ preserves $\alpha$-directed colimits. When $\alpha = \aleph_0$ we say
  that $A$ is \emph{finitely-presentable} and when $\alpha = \aleph_1$ we say
  that $A$ is \emph{countably-presentable}.
\end{definition}

\begin{example}
  \label{ex:presentable-objects}
  In the category $\Set$, the finitely-presentable objects are the
  finite sets and the countably-presentable objects are the countable
  sets. In the category $\Vect_{\mathbb K}$ of vector spaces over a field
  $\mathbb K$, the finitely-presentable objects are the finite-dimensional
  vector spaces, whereas the countably-presentable objects are the
  countably-dimensional vector spaces \cite{finitary-functors}.
  In $\Ban$, the only finitely-presentable object is the
  zero-dimensional Banach space and the countably-presentable objects
  coincide with the separable ones \cite[Remark 2.7]{ban-separable}.
\end{example}

The most notable difference between the locally presentable structures of
$\Ban$ and $\OS$ is given by the \emph{strong generators} of the categories, a
notion that we recall next.

\begin{definition}
  A \emph{generating set} for a category $\CC$ is a (small) set of objects
  $\mathcal S \subseteq \Ob(\CC)$, such that for any pair of \emph{distinct}
  parallel morphisms $f, g \colon A \to B$ of $\CC$, there exists an object $S
  \in \mathcal S$ and a morphism $s \colon S \to A$ with the property that $f
  \circ s \neq g \circ s.$ A generating set $\mathcal S$ is \emph{strong}
  whenever the following condition holds: for any proper monomorphism $m \colon
  A \to B$ (i.e. a monomorphism that is not an isomorphism), there exists an
  object $S \in \mathcal S$ and a morphism $f \colon S \to B$ which does not
  factorise through $m$. An object $S$ is called a \emph{(strong) generator} if
  the singleton $\{ S \}$ is a (strong) generating family.
\end{definition}

\begin{example}
  \label{ex:strong-generators}
  In the category $\Set$, the terminal object $1$ (any singleton set) is a
  strong generator. In the category $\Vect_{\mathbb K}$, the field $\mathbb K$
  is a strong generator. In the category $\mathbf{Top}$, the terminal object 1
  (a singleton set with the unique choice of topology) is a generator, but it
  is not strong. In fact, $\mathbf{Top}$ does not have a strong generating set
  at all. In the category $\Ban$, the complex numbers $\mathbb C$ is a strong
  generator, whereas in $\OS$, the operator space $\mathbb C$ is a generator,
  but it is not strong (see Section \ref{sec:os}).
\end{example}

\begin{definition}[{\cite[Definition 5.2.1]{borceux2}}]
  \label{def:locally-presentable}
  Let $\alpha$ be a regular cardinal. A category $\CC$ is locally
  \emph{$\alpha$-presentable} if $\CC$ is cocomplete and it has a strong
  generating set consisting of $\alpha$-presentable objects. When $\alpha =
  \aleph_0$, we say that $\CC$ is locally \emph{finitely} presentable and when
  $\alpha = \aleph_1$, we say that $\CC$ is locally \emph{countably}
  presentable. A category $\CC$ is \emph{locally presentable} if it is locally
  $\alpha$-presentable for some regular cardinal $\alpha.$
\end{definition}

\section{The Category of Operator Spaces}
\label{sec:os}


We begin by recalling some background on operator spaces and the
Heisenberg-\Schrod{} duality. Then we establish results related to the
categorical structure of $\OS$.
Our companion paper \cite{OS-cat} provides
detailed proofs of the more difficult and technical results (e.g. local presentability)
and provides relevant references to the literature that we used to prove many of the
other (straightforward) results.

\subsection{Preliminaries on Operator Spaces}
\label{sub:os-preliminaries}

We begin by recalling some background on operator spaces and we also use this
as an opportunity to fix notation. Most of the material we present in this
subsection is standard and it is based on the textbook accounts
\cite{effros-ruan,blecher-merdy,pisier}.

\begin{definition}[Matrix Space]
  Let $V$ be a vector space. We write $\MM_n(V)$ for the vector space
  consisting of the $n \times n$ matrices with matrix entries in the vector
  space $V$ (with vector space operations defined componentwise).  When $V =
  \mathbb C$, we often write $\mathbb M_n$ for $\mathbb M_n(\mathbb C).$
\end{definition}

The vector space $\MM_n$ can be equipped with a Banach space norm via the linear isomorphism
$\MM_n \cong B(\mathbb C^n)$ with the space of (bounded) linear
operators on the Hilbert space $\mathbb C^n$. The space $B(\mathbb C^n)$ has a
canonical norm (i.e.  the operator norm) which can then be used to define a
norm on $\MM_n$ via the above isomorphism. Writing $M_n$ for the
corresponding Banach space, the aforementioned linear isomorphism now becomes
an isometric isomorphism $M_n \cong B(\mathbb C^n).$

\begin{definition}
  \label{def:operator-space}
  An \emph{(abstract) operator space} is a complex vector space $X$ together with a family of norms
  \[ \{ \norm{\cdot}_n \colon \MM_n(X) \to [0, \infty) \ |\ n \in \mathbb N \} , \]
  such that:
  \begin{enumerate}
    \item[(B)] The pair $(\MM_1(X), \norm{\cdot}_1)$ is a Banach space;
    \item[(M1)] $\|x\oplus y\|_{m+n}=\max\{\|x\|_m,\|y\|_n\}$
    \item[(M2)] $\|\alpha x\beta\|_m\leq\|\alpha\|\|x\|_m\|\beta\|$
  \end{enumerate}
  for each $n,m\in\mathbb N$, $x\in\MM_m(X)$, $y\in\MM_n(X)$, $\alpha,\beta\in M_{m}$.
  Here $x\oplus y\in \MM_{m+n}(X)$ is defined as the matrix 
  \[ 
    x\oplus y\eqdef
    \begin{bmatrix}
      x & 0\\
      0 & y
    \end{bmatrix}
  \]
  and $\alpha x \beta$ is defined through the obvious generalisation of matrix
  multiplication.
  For an operator space $X$, we write $M_n(X)$ for the normed space $(\MM_n(X),
  \norm{\cdot}_n)$. We also write $\norm{\cdot} \colon X \to [0, \infty)$ for the norm
  defined on $X$ through the linear isomorphism $M_1(X) \cong X.$ We
  call a family of norms that satisfies the above criteria an \emph{operator
  space structure} (OSS) on the vector space $X$.
\end{definition}

Given an operator space $X$, it is clear that $X$ is a Banach space with
respect to the norm $\norm{\cdot}$ above and textbook results show that
each space $M_n(X)$ is also a Banach space. Therefore, to give an OSS on $X$
is to give a sequence of Banach spaces $M_n(X)$ that satisfy (M1) and (M2).
We often describe operator space structures in this manner.
Next, we introduce some of the types of morphisms of operator spaces.

\begin{definition}
  \label{def:os-maps}
  Let $u \colon X\to Y$ be a linear map between vector spaces $X$ and $Y$. We write
  $u_n:\MM_n(X)\to\MM_n(Y)$ for the linear map  $[x_{ij}]\mapsto [u(x_{ij})]$,
  i.e. the map defined by component-wise application of $u$. If
  $X$ and $Y$ are operator spaces, we say that $u$ is \emph{completely bounded} if
  $\|u\|_{\mathrm{cb}}\eqdef\sup_{n\in\mathbb N}\|u_n\|<\infty$, where $\norm{u_n}$
  is the operator norm of the map $u_n \colon M_n(X) \to M_n(Y)$ between
  the indicated Banach spaces.
  We write $\CB(X,Y)$ for the Banach
  space of all completely bounded maps from $X$ to $Y$ equipped with the
  $\| \cdot \|_{\mathrm{cb}}$-norm. Furthermore, we say that a linear map $u \colon X\to Y$
  between operator spaces is:
  \begin{itemize}
    \item a \emph{complete contraction}, if $u_n$ is a contraction for each $n \in \mathbb N$, equivalently if $\norm{u}_{\mathrm{cb}} \leq 1$;
    \item a \emph{complete isometry}, if $u_n$ is an isometry for each $n \in \mathbb N$;
    \item a \emph{complete quotient map} (also known as \emph{complete metric surjection})
      if $u_n$ is a quotient map for each $n \in \mathbb N$, i.e. if every map $u_n \colon M_n(X) \to M_n(Y)$ maps the open unit ball of $M_n(X)$ \emph{onto} the open unit ball of $M_n(Y)$.
    \item a \emph{completely isometric isomorphism}, if $u$ is a surjective complete isometry.
  \end{itemize}
\end{definition}

\begin{remark}
  The map $u_n \colon M_n(X) \to M_n(Y)$ is called the $n$-th
  \emph{amplification} of $u$ and may be thought of as the map $\id \otimes u
  \colon M_n \otimes X \to M_n \otimes Y.$ See the \emph{injective} (also known
  as \emph{minimal}) tensor product of operator spaces \cite[Chapter 8]{effros-ruan}.
  Amplifications are important in quantum theory because they model the behaviour of operations
  when adding auxiliary systems. Operator spaces allow us to talk about the continuity, boundedness,
  contractivity, etc. of the amplifications $u_n$.
\end{remark}

\begin{remark}
  In order to avoid repetition, all maps between vector spaces, Banach spaces,
  and operator spaces are implicitly assumed to be linear unless stated
  otherwise.
\end{remark}

Let $V$ be a vector space and let $X$ be a Banach space for which there is a
linear isomorphism $\varphi \colon V\to X$. Then $V$ can be equipped with a norm
$\| \cdot \|$ defined by $\|v\|\eqdef\|\varphi(v)\|$, and when equipped with this
norm, $V$ becomes a Banach space and $\varphi$ lifts to an isometric isomorphism $\varphi \colon V \to X.$
If, moreover, $X$ is an operator space, then $V$ can be equipped
with an OSS by defining for each $n\in\mathbb N$ the norm $\|\cdot \|_n$ on
$\MM_n(V)$ by $\|[x_{ij}]\|_n\eqdef\|[\varphi(x_{ij})]\|_n$ for each
$[x_{ij}]\in \MM_n(V)$. With this OSS, $\varphi \colon V \to X$ lifts to a
completely isometric isomorphism.
Similarly, if $\varphi \colon V \to X$ is an
isometry from a Banach space $V$ to an operator space $X$, the same
construction allows us to equip $V$ with an OSS such that $\varphi$ lifts to a
complete isometry. This gives a convenient way to define operator space
structures that are used in the literature.

\begin{example}
  The vector space of complex numbers $\mathbb C$ has a unique operator space
  structure \cite[Chapter 3]{pisier} given by $M_n(\mathbb C) \eqdef M_n$ (see previous example).
\end{example}

\begin{example}
  For an operator space $X$, each of the Banach spaces $M_n(X)$ have a
  canonical operator space structure as well. In particular, we can define a
  norm on $\MM_m(M_n(X))$ through the linear isomorphism
  $\MM_m(M_n(X)) \cong M_{mn}(X)$ and these norms determine an OSS on $M_n(X).$
  This gives the canonical OSS on $M_n = M_n(\mathbb C)$.
\end{example}

\begin{example}
  \label{ex:bounded-hilb}
  Given two Hilbert spaces $H$ and $K$, the space $B(H,K)$ of all bounded
  operators from $H$ to $K$ has a canonical OSS that is defined through the
  linear isomorphism $\MM_{n}(B(H,K))\cong B(H^{\oplus n},K^{\oplus n})$ and
  the Banach space structure of the latter space, where $H^{\oplus n}$ is the
  $n$-fold Hilbert space direct sum of $H$ with itself \cite[1.2.2]{blecher-merdy}. When $H = K$, we simply
  write $B(H)$ for the corresponding operator space. If $H = \mathbb C^n$,
  then $B(H) \cong M_n$ completely isometrically. The matrix space
  $\mathbb M_{m,n}$ of $m \times n$ complex matrices can be equipped with an OSS through the linear isomorphism
  $\mathbb M_{m,n} \cong B(\mathbb C^{\oplus n}, \mathbb C^{\oplus m})$ and the OSS of the latter space,
  and we write $M_{m,n}$ to refer to it.
\end{example}

\begin{example}\label{ex:C-star-algebra-OSS}
  Every von Neumann algebra (e.g. $B(H)$), and more generally, every C*-algebra has a
  canonical OSS. If $A$ is a C*-algebra (von Neumann algebra), then each matrix
  space $\MM_n(A)$ has a unique norm under which it can be equipped with the
  structure of a C*-algebra (von Neumann algebra). These norms then give $A$
  its canonical OSS. In the special case where we consider the von Neumann
  algebra $B(H)$, this is consistent with the OSS from Example
  \ref{ex:bounded-hilb}.
\end{example}

\begin{example}\cite[Section 3.2]{effros-ruan}
    Let $X$ and $Y$ be operator spaces. Then $\CB(X,Y)$ has a canonical
    operator space structure that can be defined via the linear isomorphism $
    \MM_n( \CB(X,Y) )\cong \CB(X,M_n(Y))  $ and the Banach space structure of
    the latter space.
\end{example}

A special case of the previous example is obtained if $Y=\mathbb C$. Since for
any commutative C*-algebra $A$ and any bounded linear map $\varphi:X\to A$ we
have $\|\varphi\|_{\mathrm{cb}}=\|\varphi\|$ \cite[Proposition 2.2.6]{effros-ruan}, it
follows that $\CB(X,\mathbb C)=B(X,\mathbb C)=X^*$, where $X^*$ denotes the
Banach space dual of $X$. Hence we have a linear isomorphism $\MM_n(X^*)\cong
\CB(X,M_n)$ \cite[p. 41]{effros-ruan}, which defines a norm on $\MM_n(X^*)$.
Denoting the corresponding normed space by $M_n(X^*)$, we obtain an isometric
isomorphism $M_n(X^*) \cong \CB(X,M_n)$ which yields an operator space structure
on $X^*$.

\begin{example}
  Let $H$ be a Hilbert space and let $T(H) \subseteq B(H)$ be the subset
  consisting of the \emph{trace-class operators}. We omit the details here, but
  what is important to know is that we have a bounded linear functional $\trace \colon T(H) \to \mathbb C$ 
  which gives the trace of an operator. The space $T(H)$ becomes a Banach space when equipped with the
  trace norm and then we obtain the well-known Banach space duality $T(H)^* \cong B(H)$.
  Furthermore, $T(H)$ has a canonical OSS which is inherited from the
  isometry $T(H) \to T(H)^{**} \cong B(H)^*$ and the OSS of the latter space.
  Equipped with this OSS, we have a completely isometric isomorphism $T(H)^* \cong B(H)$
  \cite[Theorem 3.2.3]{effros-ruan}.
\end{example}

\begin{example}
  The matrix space $\MM_n(\mathbb C)$ can be equipped with another OSS that is
  called the \emph{trace class} OSS. This can be achieved through the linear
  isomorphism $\MM_n(\mathbb C) \cong T(\mathbb C^n)$ and the OSS of the
  latter space. We write $T_n$ for the corresponding operator space.
\end{example}

\begin{remark}
  Thanks to a fundamental result due to Ruan (see \cite[Theorem
  2.3.5]{effros-ruan}), for every abstract operator space $X,$ we can construct
  a Hilbert space $H$ such that there exists a completely isometric embedding
  $X \hookrightarrow B(H)$. Conversely, a closed linear subspace $X \subseteq
  B(H)$ inherits an OSS from $B(H)$ such that the inclusion is completely
  isometric, and we refer to $X$ as a \emph{concrete} operator space. Ruan's
  result establishes the equivalence between the two notions and
  justifies using the term ``operator space''.
\end{remark}

\subsection{Preliminaries on the Heisenberg-\Schrod{} Duality}
\label{sub:hs-duality}

As we have already pointed out, both $B(H)$ and $T(H)$ have canonical operator
space structures. Furthermore, we have $B(H) \cong T(H)^*$ completely
isometrically, and we say that $B(H)$ is the dual space of $T(H)$. We also
write $T(H) \cong B(H)_*$ to indicate that $T(H)$ is completely isometrically
isomorphic to the \emph{predual} of $B(H).$
The subset $T(H)$ forms a two-sided ideal of $B(H)$ and the bilinear form
\begin{align}
  \trace \colon T(H) \times B(H) &\to \mathbb C \label{eq:hs-type} \\
  \trace(x,b) &= \trace(xb) \label{eq:hs-def}
\end{align}
provides us with the structure of a \emph{dual pairing}. In the above expression
$\trace(xb)$ stands for the trace of the operator $xb \in T(H)$.
If $\psi \colon T(H_1) \to T(H_2)$ is a bounded linear map, then there exists a \emph{unique} bounded linear map $\psi^t \colon B(H_2) \to B(H_1)$, called the transpose of $\psi$, such that
\begin{equation}
  \label{eq:hs-dual-system}
  \trace(\psi(x), b) = \trace(x, \psi^t(b)) 
\end{equation}
for every $x \in T(H_1)$ and $b \in B(H_2)$. The transpose $\psi^t$ is determined by
the Banach space adjoint $\psi^*$ and if we trivialise the isomorphism $B(H) \cong
T(H)^*$, then the two coincide. Therefore, intuitively, one may think of the
transpose $\psi^t$ and the dual map $\psi^*$ as being essentially the same thing
(modulo isomorphism). The dual pairing \eqref{eq:hs-type} can be used to define
the Heisenberg-\Schrod{} duality and we proceed to formulate it in greater
detail.

Recall that a linear map $\varphi \colon B(H_1) \to B(H_2)$ is called:
\begin{itemize}
  \item \emph{positive}, if it preserves positive elements;
  \item \emph{completely-positive}, if the map $\varphi_n \colon M_n(B(H_1)) \to M_n(B(H_2))$ is positive for every $n \in \mathbb N$
    with respect to the canonical von Neumann algebra structures of $M_n(B(H_i))$;
  \item \emph{unital}, if $\varphi(1_{H_1}) = 1_{H_2}$, i.e. it preserves the unit element;
  \item \emph{normal}, if there exists a (necessarily unique) bounded linear map $\psi \colon T(H_2) \to T(H_1),$ such that
    $\varphi = \psi^t$, i.e. $\varphi$ is the transpose of another bounded linear map\footnote{Equivalently, maps that are ultraweakly continuous.}.
\end{itemize}
It follows that every (completely-)positive map $\varphi$ is bounded and every normal linear map is also bounded.

Turning to the \Schrod{} picture, a bounded linear map $\psi \colon T(H_1) \to T(H_2)$ is called
\emph{trace-preserving} if $\trace(\psi (x)) = \trace(x)$ for every $x \in T(H_1)$
and we may similarly define what it means for $\psi$ to be
(completely-)positive. This allows us to formulate the following correspondence
between maps in the Heisenberg and \Schrod{} pictures of quantum theory.

\begin{restatable}{proposition}{hsCorrespond}
  \label{prop:hs-correspondence}
  Let $H_1$ and $H_2$ be Hilbert spaces. The transpose operation $(-)^t$ gives
  a bijective correspondence between bounded linear maps $\psi \colon T(H_1) \to T(H_2)$
  and normal linear maps $\psi^t \colon B(H_2) \to B(H_1)$. Furthermore:
\begin{enumerate}
  \item $\psi$ is (completely-)positive iff $\psi^t$ is (completely)-positive;
  \item $\psi$ is trace-preserving iff $\psi^t$ is unital;
  \item $\psi$ is a (complete) contraction iff $\psi^t$ is a (complete) contraction.
\end{enumerate}
\end{restatable}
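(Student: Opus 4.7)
The plan is to build the transpose $\psi^t$ from the Banach space adjoint $\psi^*$, use that $B(H) \cong T(H)^*$ completely isometrically, and then verify each of the three properties by pushing the characterization through the non-degenerate dual pairing $\trace \colon T(H)\times B(H)\to \C$.

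First, I would construct the transpose and prove bijectivity. Given a bounded linear $\psi\colon T(H_1)\to T(H_2)$, form its Banach space adjoint $\psi^*\colon T(H_2)^*\to T(H_1)^*$ and pre/postcompose with the canonical isomorphisms $B(H_i)\cong T(H_i)^*$ (described before the statement) to obtain $\psi^t\colon B(H_2)\to B(H_1)$. By definition of the Banach adjoint, this $\psi^t$ satisfies equation \eqref{eq:hs-dual-system}. Uniqueness follows because $T(H_1)$ separates points of $B(H_1)$ via $\trace$ (non-degeneracy of the pairing). By definition, $\psi^t$ is normal; conversely, every normal $\varphi\colon B(H_2)\to B(H_1)$ arises this way essentially tautologically. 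Hence $(-)^t$ is a bijection.

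For property (2), since $\trace(x)=\trace(x\cdot 1_{H_1})=\trace(x,1_{H_1})$ and similarly for $H_2$, the identity $\trace(\psi(x))=\trace(x)$ for all $x$ becomes, via \eqref{eq:hs-dual-system}, $\trace(x,\psi^t(1_{H_2}))=\trace(x,1_{H_1})$; by non-degeneracy this is equivalent to $\psi^t(1_{H_2})=1_{H_1}$. For property (3), the Banach adjoint is isometric, so $\norm{\psi}=\norm{\psi^*}=\norm{\psi^t}$, giving the non-complete case. For the completely bounded case, I would invoke the completely isometric isomorphism $T(H)^*\cong B(H)$, together with the fact that the OSS on $\CB(X,Y)$ is given by $M_n(\CB(X,Y))\cong \CB(X,M_n(Y))$. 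Under these identifications the $n$-th amplification of $\psi^t$ corresponds (canonically) to the transpose of the $n$-th amplification of $\psi$, hence the cb-norms agree.

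For property (1), recall the standard characterization: a trace-class operator $x$ is positive iff $\trace(xb)\ge 0$ for every positive $b\in B(H)$, and conversely a bounded operator $b$ is positive iff $\trace(xb)\ge 0$ for every positive $x\in T(H)$. Using this together with \eqref{eq:hs-dual-system}, the statement "$\psi$ preserves positivity" and "$\psi^t$ preserves positivity" both unfold to the same quantified condition $\trace(x,\psi^t(b))=\trace(\psi(x),b)\ge 0$ for all positive $x$ and $b$. For complete positivity, I would apply the same argument at the level of the amplifications $\psi_n$ and $(\psi^t)_n$, using the compatible von Neumann algebra isomorphisms $M_n(B(H))\cong B(H^{\oplus n})$ and $M_n(T(H))\cong T(H^{\oplus n})$, together with the observation that $(\psi^t)_n=(\psi_n)^t$ under these identifications.

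The main obstacle will be the bookkeeping in the completely bounded and completely positive cases: namely, verifying carefully that the amplification operation commutes with taking the transpose, so that $(\psi_n)^t=(\psi^t)_n$ modulo the canonical isomorphisms $M_n(T(H))\cong T(H^{\oplus n})$ and $M_n(B(H))\cong B(H^{\oplus n})$ (and that the trace pairings also transport correctly). Once this compatibility is in hand, both (1) and (3) reduce cleanly to their non-complete counterparts applied to each $\psi_n$.
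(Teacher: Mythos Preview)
Your proposal is correct and takes essentially the same approach as the paper: both rely on the Banach adjoint together with the completely isometric identification $T(H)^*\cong B(H)$, the dual-pairing characterization of positivity, non-degeneracy of the pairing for unitality, and preservation of (cb\nobreakdash-)norms under adjoints. The paper's proof is almost entirely citation-based (Pedersen, St{\o}rmer, Heinosaari--Ziman, Pisier) whereas you unpack those same arguments directly; the amplification--transpose compatibility $(\psi_n)^t=(\psi^t)_n$ that you flag as the main obstacle is indeed a routine computation and is exactly what underlies the cited result from Pisier.
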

\begin{proof}
  These facts are already known, see Appendix \ref{app:hs} for more information
  and references.
\end{proof}

It follows that there exists a bijective correspondence
\begin{equation}
  \label{eq:hs-duality}
  \CPTP(T(H_1), T(H_2)) \cong \NCPU(B(H_2), B(H_1))
\end{equation}
between completely-positive trace-preserving (CPTP) maps and normal
completely-positive unital (NCPU) maps. The \emph{quantum operations} (also
known as \emph{quantum channels}) in the \Schrod{} picture correspond to CPTP
maps $T(H_1) \to T(H_2)$, whereas the quantum operations in the Heisenberg
picture correspond to NCPU maps $B(H_2) \to B(H_1)$. The
correspondence \eqref{eq:hs-duality} is known as the \emph{Heisenberg-\Schrod{}
duality}.

\subsection{Limits and Colimits in $\OS$}
\label{sub:os-colimits}

Now we show how (co)products and (co)equalisers are constructed
in the category of operator spaces.

\begin{definition}
  We write $\OS$ for the (locally small) category whose objects are the
  operator spaces and whose morphisms are the linear complete contractions
  between them.
\end{definition}

It is easy to see that $f \colon X \to Y$ in $\OS$ is an isomorphism in the
categorical sense iff $f$ is a completely isometric isomorphism of operator
spaces. The category $\OS$ has a zero object, written $0$, which is given by
the zero-dimensional operator space. We write $0_{X,Y} \colon X \to Y$ for the
zero maps, which we sometimes abbreviate by writing $0$ as well.

Recall that, given an indexed family $(X_\lambda)_{\lambda\in\Lambda}$ of Banach 
spaces, their $\ell^\infty$-direct sum is the Banach space
\[
  \bigoplus^\infty_{\lambda\in\Lambda}X_\lambda \eqdef \left\{ 
(x_\lambda)_{\lambda \in \Lambda} \in \prod_{\lambda\in\Lambda}X_\lambda   \ |\ 
  \sup_{\lambda\in\Lambda}\norm{x_\lambda} < \infty \right\}
\]
equipped with the $\ell^\infty$-norm, i.e. $\norm{(x_\lambda)_{\lambda\in\Lambda}}_\infty\eqdef\sup_{\lambda\in\Lambda}\norm{x_\lambda}$.

\begin{proposition}[\cite{OS-cat}]
  \label{prop:products}
  Let $(X_\lambda)_{\lambda\in\Lambda}$ be an indexed family of operator spaces. Then $\bigoplus^\infty_{\lambda\in\Lambda}X_\lambda$ can be equipped with an OSS such that 
  \begin{itemize}
    \item[(a)] $M_n\left(\bigoplus_{\lambda\in\Lambda}^\infty X_\lambda\right)=\bigoplus^\infty_{\lambda\in \Lambda}M_n(X_\lambda)$.
    \item[(b)] for each $\kappa\in\Lambda$, the canonical inclusion $\iota_\kappa:X_\kappa\to \bigoplus^\infty_{\lambda\in\Lambda}X_\lambda$ is a complete isometry;
    \item[(c)] for each $\kappa\in \Lambda$, the canonical projection $\pi_\kappa \colon \bigoplus^\infty_{\lambda\in \Lambda}X_\lambda \to X_\kappa$ is a complete quotient map;
    \item[(d)] the operator space $\bigoplus^\infty_{\lambda\in\Lambda}X_\lambda$ together with the projections $\pi_\lambda$ constitute the categorical product of the family $(X_\lambda)_{\lambda \in \Lambda}.$
  \end{itemize}
\end{proposition}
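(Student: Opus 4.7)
The plan is to first define a candidate operator space structure on $\bigoplus^\infty_{\lambda \in \Lambda} X_\lambda$ directly and then verify Ruan's axioms (B), (M1), (M2) of Definition~\ref{def:operator-space}. The natural coordinate rearrangement gives a linear isomorphism $\MM_n\bigl(\bigoplus^\infty_{\lambda} X_\lambda\bigr) \cong \prod_\lambda \MM_n(X_\lambda)$, which allows me to \emph{define} $\norm{[x_{ij}]}_n \eqdef \sup_\lambda \norm{[x^\lambda_{ij}]}_{n,\lambda}$, where $x_{ij} = (x^\lambda_{ij})_\lambda$. One first checks this supremum is finite, since $\norm{[x^\lambda_{ij}]}_{n,\lambda} \leq n^2 \max_{i,j} \sup_\lambda \norm{x^\lambda_{ij}} < \infty$ by the definition of $\bigoplus^\infty_\lambda X_\lambda$. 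By construction, $(M_n(\bigoplus^\infty_\lambda X_\lambda), \norm{\cdot}_n)$ is exactly the $\ell^\infty$-direct sum of the Banach spaces $M_n(X_\lambda)$, establishing (a). Axiom (B) holds because $\norm{\cdot}_1$ reduces to the original $\ell^\infty$-norm on the direct sum; (M1) follows because $\max$ commutes with $\sup$; and (M2) follows because the inequality $\norm{\alpha x \beta}_{m,\lambda} \leq \norm{\alpha} \norm{x^\lambda}_m \norm{\beta}$ holds pointwise in $\lambda$ and is preserved under taking suprema.

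For (b), the inclusion $\iota_\kappa$ is zero in every component other than $\kappa$, so $\norm{(\iota_\kappa)_n([y_{ij}])}_n = \sup_\lambda \norm{[(\iota_\kappa(y_{ij}))^\lambda]}_{n,\lambda} = \norm{[y_{ij}]}_{n,\kappa}$, showing $\iota_\kappa$ is a complete isometry. For (c), given $[y_{ij}]$ in the open unit ball of $M_n(X_\kappa)$, I lift it to an element of $\MM_n\bigl(\bigoplus^\infty_{\lambda} X_\lambda\bigr)$ by placing $y_{ij}$ in the $\kappa$-slot and zero elsewhere; by (b) this lift has norm $\norm{[y_{ij}]}_{n,\kappa} < 1$, so $(\pi_\kappa)_n$ maps the open unit ball onto the open unit ball, giving a complete metric surjection.

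For (d), I verify the universal property. Given complete contractions $f_\lambda \colon Y \to X_\lambda$ for each $\lambda \in \Lambda$, define $f \colon Y \to \bigoplus^\infty_{\lambda} X_\lambda$ by $f(y) \eqdef (f_\lambda(y))_\lambda$. This is well-defined because $\sup_\lambda \norm{f_\lambda(y)} \leq \sup_\lambda \norm{f_\lambda}_{\cb} \norm{y} \leq \norm{y} < \infty$, and complete contractivity follows from $\norm{f_n([y_{ij}])}_n = \sup_\lambda \norm{(f_\lambda)_n([y_{ij}])}_{n,\lambda} \leq \sup_\lambda \norm{[y_{ij}]}_n = \norm{[y_{ij}]}_n$. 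The equations $\pi_\kappa \circ f = f_\kappa$ are immediate from the definitions, and uniqueness follows because any morphism $g$ satisfying these equations must obey $(g(y))_\kappa = \pi_\kappa(g(y)) = f_\kappa(y)$ for every $\kappa$.

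No step here requires a deep idea; the main obstacle is simply bookkeeping, namely performing the identification $\MM_n(\bigoplus^\infty_\lambda X_\lambda) \cong \bigoplus^\infty_\lambda \MM_n(X_\lambda)$ cleanly and keeping track of which norm is used at each stage. Everything else is a direct computation with suprema.
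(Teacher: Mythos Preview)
Your proof is correct and follows the standard route: define the matrix norms via the coordinatewise supremum, verify Ruan's axioms pointwise, and read off (b)--(d) from the definition. The paper itself does not prove this proposition in-line; it defers to the companion paper \cite{OS-cat}, and what you have written is exactly the expected argument one finds there (or in standard references such as \cite[\S1.2.17]{blecher-merdy} and \cite[\S2.6]{pisier}). One cosmetic point: in (c) you only argue that $(\pi_\kappa)_n$ maps the open unit ball \emph{onto} the open unit ball; strictly you also need that it maps \emph{into} it, i.e.\ that $(\pi_\kappa)_n$ is contractive, but this is immediate from $\norm{[x^\kappa_{ij}]}_{n,\kappa} \leq \sup_\lambda \norm{[x^\lambda_{ij}]}_{n,\lambda}$.
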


For the coproduct construction, we first need to recall the definition of
potentially uncountable sums in a Banach space.
Let $\Lambda$ be a (potentially uncountable) index set. We say that the sum of an
indexed family $(x_\lambda)_{\lambda\in\Lambda}$ of vectors in a Banach space $X$
\emph{converges} if there is $x\in X$ such that the net $\left\{\sum_{\lambda\in F}x_\lambda:F\subseteq\Lambda\text{ finite}\right\}$ converges to $x$,
in which case we write $\sum_{\lambda\in\Lambda}x_\lambda=x$.
The convergence of the net to $x$ implies that for each
$\epsilon>0$ there is a finite subset $F\subseteq \Lambda$ such that for each
finite subset $F \subseteq G\subseteq \Lambda,$ we have
$\left\|\sum_{\lambda \in G} x_\lambda-x\right \| < \epsilon$. We also recall that a sufficient
condition for $\sum_{\lambda \in \Lambda} x_\lambda$ to exist is that $\sum_{\lambda \in \Lambda} \norm{ x_\lambda } < \infty$ and in this case
$\norm{\sum_{\lambda \in \Lambda} x_\lambda} \leq \sum_{\lambda \in \Lambda} \norm{ x_\lambda } . $

Recall that, given an indexed family $(X_\lambda)_{\lambda\in\Lambda}$ of Banach spaces,
their $\ell^1$-direct sum is the Banach space
\[ 
    \bigoplus_{\lambda\in\Lambda}^1X_\lambda \eqdef \left\{ (x_\lambda)_{\lambda\in\Lambda} \in \prod_{\lambda\in\Lambda}X_\lambda \ |\ \sum_{\lambda\in\Lambda}\norm{x_\lambda}<\infty  \right\}
\]
equipped with the $\ell^1$-norm, i.e.
$\norm{(x_\lambda)_{\lambda\in\Lambda}}_1\eqdef\sum_{\lambda\in\Lambda}\norm{x_\lambda}$.
Note that $x_\lambda = 0$ for all but countably many indices $\lambda$.

\begin{proposition}[\cite{OS-cat}]
  \label{prop:coproducts}
  Let $(X_\lambda)_{\lambda\in\Lambda}$ be an indexed family of operator spaces. Then $\bigoplus_{\lambda\in\Lambda}^1X_\lambda$ can be equipped with an OSS such that 
  \begin{itemize}
    \item[(a)] there exists a complete isometric isomorphism of dual operator spaces $\left(\bigoplus^1_{\lambda\in\Lambda}X_\lambda\right)^*\cong\bigoplus_{\lambda\in\Lambda}^\infty X_\lambda^*$;
    \item[(b)] for each $\kappa\in\Lambda$, the canonical inclusion $\iota_\kappa:X_\kappa\to \bigoplus^1_{\lambda\in\Lambda}X_\lambda$ is a complete isometry;
    \item[(c)] for each $\kappa\in \Lambda$, the canonical projection $\pi_\kappa:\bigoplus^1_{\lambda\in \Lambda}X_\lambda \to X_\kappa$ is a complete quotient map;
    \item[(d)] for each operator space $Y$, there is a complete isometric isomorphism $\CB\left(\bigoplus_{\lambda\in\Lambda}^1X_\lambda,Y\right)\cong\bigoplus^\infty_{\lambda\in\Lambda}\CB(X_\lambda,Y).$
    \item[(e)] the operator space $\bigoplus^1_{\lambda\in\Lambda}X_\lambda$ together with the inclusions $\iota_\lambda$ constitute the categorical coproduct of the family $(X_\lambda)_{\lambda \in \Lambda}.$
  \end{itemize}
\end{proposition}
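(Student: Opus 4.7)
The plan is to equip the Banach space $\bigoplus^1_{\lambda\in\Lambda} X_\lambda$ with an OSS through the canonical isometric Banach-space embedding $J \colon \bigoplus^1_\lambda X_\lambda \hookrightarrow \bigl(\bigoplus^\infty_\lambda X_\lambda^*\bigr)^*$ sending $(x_\lambda)$ to the functional $(f_\lambda) \mapsto \sum_\lambda f_\lambda(x_\lambda)$, which is absolutely convergent because $\sum_\lambda \norm{x_\lambda} < \infty$ and $\sup_\lambda \norm{f_\lambda} < \infty$. Applying Proposition~\ref{prop:products} to the family $(X_\lambda^*)_\lambda$ equips $\bigoplus^\infty_\lambda X_\lambda^*$ with an OSS, and its Banach dual inherits the canonical OSS of a dual operator space; pulling back along $J$ gives an OSS on $\bigoplus^1_\lambda X_\lambda$ for which (M1) and (M2) are automatic, since they pass to subspaces of operator spaces.

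Next I would establish (b) by duality. Under $J$, the composite $J \circ \iota_\kappa \colon X_\kappa \to (\bigoplus^\infty X_\lambda^*)^*$ coincides with the Banach dual of the canonical projection $\bigoplus^\infty X_\lambda^* \to X_\kappa^*$ precomposed with the canonical embedding $X_\kappa \hookrightarrow X_\kappa^{**}$. Since that projection is a complete quotient map by Proposition~\ref{prop:products}(c), its Banach dual is a complete isometry; combined with the complete isometric embedding into the double dual, $J \circ \iota_\kappa$ is a complete isometry, and as $J$ is itself a complete isometry by construction, so is $\iota_\kappa$.

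The heart of the argument is (d). The forward direction $\Psi \colon u \mapsto (u \circ \iota_\lambda)_\lambda$ is a complete contraction: by (b) each $\iota_\lambda$ is a complete contraction, so $\|u \circ \iota_\lambda\|_{\mathrm{cb}} \leq \|u\|_{\mathrm{cb}}$, and the analogous bound at the matrix level together with Proposition~\ref{prop:products}(a) identifies the matrix norms of $\bigoplus^\infty \CB(X_\lambda,Y)$ with suprema of component matrix norms. For the inverse, given $(u_\lambda)_\lambda$ with $M = \sup_\lambda \|u_\lambda\|_{\mathrm{cb}} < \infty$, set $u((x_\lambda)) = \sum_\lambda u_\lambda(x_\lambda)$, which converges absolutely in $Y$ with $\|u((x_\lambda))\| \leq M \sum_\lambda \norm{x_\lambda}$. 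The main technical obstacle is to show $\|u\|_{\mathrm{cb}} \leq M$: unfolding the pulled-back matrix norm via $M_n((\bigoplus^\infty X_\lambda^*)^*) \cong \CB(\bigoplus^\infty X_\lambda^*, M_n)$, a matrix $[x_{ij}] \in M_n(\bigoplus^1 X_\lambda)$ corresponds to the completely bounded map $(f_\lambda) \mapsto [\sum_\lambda f_\lambda(x_{ij}(\lambda))]$, and I would reduce first to finitely-supported tuples (norm-dense in $M_n(\bigoplus^1 X_\lambda)$) where the claim is the standard finite $\ell^1$-sum computation, then pass to the general case by absolute-convergence continuity.

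Statement (a) then follows from (d) applied to $Y = M_n$ by chaining $M_n((\bigoplus^1 X_\lambda)^*) \cong \CB(\bigoplus^1 X_\lambda, M_n) \cong \bigoplus^\infty \CB(X_\lambda, M_n) \cong \bigoplus^\infty M_n(X_\lambda^*) \cong M_n(\bigoplus^\infty X_\lambda^*)$, the final step by Proposition~\ref{prop:products}(a). Statement (c) follows from (a) together with Proposition~\ref{prop:products}(b): under the isomorphism of (a), the Banach dual of $\pi_\kappa$ corresponds to the inclusion $X_\kappa^* \hookrightarrow \bigoplus^\infty X_\lambda^*$, which is a complete isometry; the standard (complete) duality between complete quotient maps and complete isometric embeddings on the dual then shows $\pi_\kappa$ is a complete quotient map. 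Finally, (e) is immediate from (d) by restriction to closed unit balls: a tuple $(u_\lambda)$ is a complete contraction in $\bigoplus^\infty \CB(X_\lambda,Y)$ iff each $u_\lambda$ is, and under $\Psi$ this corresponds precisely to complete contractions $\bigoplus^1 X_\lambda \to Y$, which is exactly the universal property of the coproduct in $\OS$.
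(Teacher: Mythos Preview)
The paper does not prove this proposition here; it is stated with a citation to the companion paper \cite{OS-cat}, so there is no in-text argument to compare against. Your approach---equipping $\bigoplus^1_\lambda X_\lambda$ with the OSS pulled back along the canonical isometric embedding $J$ into $(\bigoplus^\infty_\lambda X_\lambda^*)^*$, then reading off (b), (d), (a), (c), (e) in that order via duality with Proposition~\ref{prop:products}---is the standard route in the operator-space literature (cf.\ \cite[1.4.13]{blecher-merdy}, \cite[\S2.6]{pisier}) and is almost certainly what the companion paper does.

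Two small comments. First, the reverse inequality in (d), i.e.\ $\|u\|_{\mathrm{cb}} \leq \sup_\lambda \|u_\lambda\|_{\mathrm{cb}}$, is the only step with real content, and your phrase ``the standard finite $\ell^1$-sum computation'' glosses over the need to identify the matrix norm inherited from $J$ on a finitely-supported element with something computable. A clean way to close this is to first embed $Y$ completely isometrically into some $B(H)$ and reduce to $Y=B(H)$; then $u$ corresponds via $\CB(Z,B(H))\cong (T(H)\ptimes Z)^*$ to a functional whose cb-norm is governed exactly by the dual description you built the OSS from. Second, for (c) you invoke ``the standard duality between complete quotient maps and complete isometries on the dual''; this is correct (it is \cite[Theorem 4.1.8]{effros-ruan} or \cite[1.4.3]{blecher-merdy}), but since $\pi_\kappa$ is already known to be a Banach quotient map and $(\pi_\kappa)^*$ is a complete isometry by your argument, you should make explicit that surjectivity of $\pi_\kappa$ is what allows you to conclude, as the dual being isometric alone does not force surjectivity. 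Neither of these is a gap in the strategy, only in the level of detail.
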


Next, we describe the construction of equalisers which is completely analogous to the case for Banach spaces.

\begin{proposition}[\cite{OS-cat}]
  \label{prop:equalisers}
  Let $f, g \colon X \to Y$ be two complete contractions. Their equaliser is
  given by the closed subspace $E \eqdef \{ x \in X\ |\ f(x) = g(x) \}$
  together with the completely isometric inclusion $E \subseteq X.$
\end{proposition}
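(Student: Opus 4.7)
The plan is to mimic the classical construction of equalisers in $\Ban$, exploiting the fact that closed linear subspaces of operator spaces inherit a canonical OSS. Concretely, I first observe that since $f$ and $g$ are complete contractions they are in particular bounded, hence continuous, so $E = (f-g)^{-1}(\{0\})$ is a closed linear subspace of $X$. I then equip $E$ with the OSS inherited from $X$ by defining, for each $n \in \nats$, $M_n(E)$ to be the closed subspace $\{[x_{ij}] \in M_n(X) \ |\ x_{ij} \in E \text{ for all } i,j\}$ of $M_n(X)$, with the restricted norm. Conditions (B), (M1), (M2) of Definition \ref{def:operator-space} are trivially inherited from the corresponding conditions in $X$, so this indeed defines an OSS on $E$, and by construction the inclusion $\iota \colon E \to X$ lifts to an isometry $\iota_n \colon M_n(E) \to M_n(X)$ for every $n$, i.e. $\iota$ is a complete isometry.

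Next I verify the universal property. By definition of $E$ we have $f \circ \iota = g \circ \iota$. Suppose $h \colon Z \to X$ is a complete contraction satisfying $f \circ h = g \circ h$. For every $z \in Z$ we get $f(h(z)) = g(h(z))$, so $h(z) \in E$; hence $h$ factors uniquely through $\iota$ as $h = \iota \circ \tilde h$ with $\tilde h \colon Z \to E$ the co-restriction of $h$, which is automatically linear. Uniqueness of $\tilde h$ is immediate since $\iota$ is a (set-theoretic) monomorphism. It only remains to check that $\tilde h$ is a complete contraction: for each $n$ and each $z \in M_n(Z)$, using that $\iota_n$ is an isometry and $h_n$ is a contraction,
\[
  \norm{\tilde h_n(z)}_{M_n(E)} = \norm{\iota_n(\tilde h_n(z))}_{M_n(X)} = \norm{h_n(z)}_{M_n(X)} \leq \norm{z}_{M_n(Z)}.
\]

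There is no real obstacle here: the construction is essentially formal once one knows that closed subspaces inherit an OSS and that the inclusion is a complete isometry. The only mild subtlety is to check that the co-restriction $\tilde h$ of a complete contraction is itself a complete contraction, but this is immediate from the isometric nature of the inclusion $\iota_n$ at each matrix level.
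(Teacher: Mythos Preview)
Your proof is correct and is exactly the standard verification one would expect. The paper itself does not include a proof of this proposition but defers it to the companion paper \cite{OS-cat}, merely remarking that the construction is ``completely analogous to the case for Banach spaces''; your argument is precisely that analogue, using that a closed subspace inherits an OSS making the inclusion completely isometric (a fact the paper also invokes just before the quotient construction).
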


In order to construct coequalisers, we first need the concept of a
quotient operator space. We recall that if $N$ is a closed subspace of a
Banach space, then $X/N$ becomes a Banach space if we define
$\norm{[x]}_{X/N}=\inf_{y\in N}\norm{x+y}=\inf_{y\in[x]}\norm{y}$ for each
$[x]\in X/N$.  Hence, the quotient map $q:X\to X/N$ is always a contraction. If
$X$ is an operator space, then $M_n(N)$ is a closed subspace of $M_n(X)$, hence
$N$ is an operator space, which leads to the next proposition.

\begin{lemma}[{\cite[Proposition 3.1.1]{effros-ruan}}]
  Let $X$ be an operator space and $N\subseteq X$ a closed subspace. Then
  there is an OSS on $X/N$ such that $M_n(X/N)=M_n(X)/M_n(N)$. Moreover, the
  quotient map $q:X\to X/N$ is a \emph{complete quotient map}.
\end{lemma}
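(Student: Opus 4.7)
The plan is to transfer the Banach space quotient construction level by level. First I would check that at the level of vector spaces there is a canonical linear isomorphism $\mathbb M_n(X/N) \cong \mathbb M_n(X)/\mathbb M_n(N)$: the map sending a matrix $[x_{ij} + N]$ to the coset $[x_{ij}] + \mathbb M_n(N)$ is well defined, linear, surjective, and has trivial kernel. Since $M_n(N)$ is a closed subspace of the Banach space $M_n(X)$ (closedness is inherited entrywise from $N \subseteq X$), the quotient $M_n(X)/M_n(N)$ is a Banach space with the usual quotient norm $\norm{[z]}_n = \inf_{c \in M_n(N)} \norm{z + c}_n$. I would use the above isomorphism to transport this norm to $\mathbb M_n(X/N)$, and then verify the three axioms (B), (M1), (M2) of Definition \ref{def:operator-space}.

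Axiom (B) is immediate because $M_1(X/N) = M_1(X)/M_1(N) = X/N$ with the Banach quotient norm. For axiom (M2), pick $\alpha,\beta \in M_m$ and $[x] \in M_m(X/N)$. For every $a \in M_m(N)$ we have $\alpha a \beta \in M_m(N)$, since $N$ is a linear subspace, hence
\[
  \norm{\alpha [x] \beta}_m = \norm{[\alpha x \beta]}_m \leq \norm{\alpha x \beta + \alpha a \beta}_m \leq \norm{\alpha}\,\norm{x+a}_m\,\norm{\beta},
\]
using (M2) for $X$; taking the infimum over $a$ yields $\norm{\alpha[x]\beta}_m \leq \norm{\alpha}\,\norm{[x]}_m\,\norm{\beta}$.

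The axiom that will require the most care is (M1). To show $\norm{[x] \oplus [y]}_{m+n} = \max\{\norm{[x]}_m, \norm{[y]}_n\}$, I would prove the two inequalities separately. The inequality $\leq$ is the easy direction: given any $a \in M_m(N)$, $b \in M_n(N)$, the element $a \oplus b$ lies in $M_{m+n}(N)$ and serves as a lift adjustment, giving $\norm{[x]\oplus[y]}_{m+n} \leq \norm{(x+a)\oplus(y+b)}_{m+n} = \max\{\norm{x+a}_m, \norm{y+b}_n\}$ by (M1) for $X$; infimising over $a$ and $b$ independently gives the bound. The reverse inequality $\geq$ is the genuine content: for a general lift $x\oplus y + c$ with $c \in M_{m+n}(N)$, the element $c$ is not block-diagonal. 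Here I would use (M2) for $X$ with the projection matrices $\alpha_1 = [I_m\ 0]$, $\beta_1 = [I_m\ 0]^T$ (and similarly for the second block) to extract the diagonal blocks: $\alpha_1(x\oplus y + c)\beta_1 = x + c_{11}$, and $\norm{\alpha_1} = \norm{\beta_1} = 1$, so $\norm{x + c_{11}}_m \leq \norm{x\oplus y + c}_{m+n}$, and similarly $\norm{y + c_{22}}_n \leq \norm{x \oplus y + c}_{m+n}$. Crucially $c_{11} \in M_m(N)$ and $c_{22} \in M_n(N)$ because $c \in M_{m+n}(N)$, so $\norm{[x]}_m \leq \norm{x\oplus y + c}_{m+n}$ and $\norm{[y]}_n \leq \norm{x \oplus y + c}_{m+n}$; taking the max and then the infimum over $c$ finishes the reverse inequality.

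Finally, under the identification $M_n(X/N) \cong M_n(X)/M_n(N)$, the amplification $q_n \colon M_n(X) \to M_n(X/N)$ is precisely the Banach quotient map $M_n(X) \to M_n(X)/M_n(N)$, which is a standard metric surjection sending the open unit ball onto the open unit ball. Hence $q$ is a complete quotient map in the sense of Definition \ref{def:os-maps}.
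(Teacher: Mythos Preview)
Your proof is correct and follows the standard direct verification of Ruan's axioms, which is essentially the argument in \cite[Proposition~3.1.1]{effros-ruan}; the paper itself gives no proof and simply cites that reference. One small remark: in the $\geq$ half of (M1) you invoke (M2) with rectangular matrices $\alpha_1 = [I_m\ 0]$ and $\beta_1 = [I_m\ 0]^T$, whereas the paper states (M2) only for square scalar matrices; to be fully self-contained you should pad these to $(m{+}n)\times(m{+}n)$ matrices and then use (M1) on the resulting $(x+c_{11})\oplus 0$ block to drop back to $M_m(X)$, but this is routine and does not affect the argument.
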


\begin{proposition}[\cite{OS-cat}]
  \label{prop:coequalisers}
  Let $f,g \colon X \to Y$ be two complete contractions. Then their coequaliser
  is given by the operator space   $E \eqdef Y / \overline{\Image(f-g)}$
  together with the complete quotient map $q \colon Y \to E :: y \mapsto [y].$
\end{proposition}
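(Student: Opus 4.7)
The plan is to verify the universal property of the coequaliser directly. First I would check that $q$ is a well-defined morphism in $\OS$ that satisfies $q\circ f = q\circ g$. Since $N \eqdef \overline{\Image(f-g)}$ is by construction a closed linear subspace of $Y$, the lemma immediately preceding the proposition endows $E = Y/N$ with an OSS for which the quotient map $q \colon Y \to E$ is a complete quotient map, hence in particular a complete contraction. Moreover, for every $x \in X$ we have $(f-g)(x) \in N$, so $[f(x)] = [g(x)]$ in $Y/N$, giving $q\circ f = q\circ g$.

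Next I would verify the universal property. Suppose $h \colon Y \to Z$ is a complete contraction with $h\circ f = h\circ g$. Since $q$ is surjective, any mediating map $\tilde h \colon E \to Z$ with $\tilde h\circ q = h$ is uniquely determined by the formula $\tilde h([y]) = h(y)$, which proves uniqueness. For existence as a linear map, note that $h$ vanishes on $\Image(f-g)$ by the hypothesis $h\circ f = h\circ g$, and since $h$ is (completely) bounded it is continuous, so $h$ vanishes on the closure $N$. Therefore $\tilde h$ is well-defined and linear.

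The one remaining point---which is the only nontrivial step---is that $\tilde h$ is a complete contraction. Here I would use precisely the fact that $q$ is a complete quotient map in the sense of Definition~\ref{def:os-maps}. Fix $n \in \mathbb N$ and let $\xi \in M_n(E)$ with $\|\xi\|_n < 1$. Since $q_n \colon M_n(Y) \to M_n(E)$ maps the open unit ball onto the open unit ball, there exists $\eta \in M_n(Y)$ with $q_n(\eta) = \xi$ and $\|\eta\|_n < 1$. Then
\[
  \|\tilde h_n(\xi)\|_n = \|\tilde h_n(q_n(\eta))\|_n = \|h_n(\eta)\|_n \leq \|\eta\|_n < 1,
\]
where the inequality uses that $h$ is a complete contraction. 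Letting $\|\xi\|_n \to 1$ yields $\|\tilde h_n(\xi)\|_n \leq \|\xi\|_n$ for all $\xi$, so $\tilde h_n$ is a contraction; as $n$ was arbitrary, $\tilde h$ is a complete contraction.

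The potential obstacle would be the complete contractivity of $\tilde h$, but as shown above this is handled cleanly by the complete quotient map property of $q$, which is exactly the OSS-level strengthening of the classical Banach space quotient statement. Thus the construction in the statement is the coequaliser of $f$ and $g$ in $\OS$.
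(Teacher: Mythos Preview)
Your argument is correct and is exactly the standard direct verification of the coequaliser universal property in $\OS$; the only minor wording quibble is the phrase ``letting $\|\xi\|_n \to 1$'', where what you actually use is that $\|\tilde h_n(\xi)\|_n < 1$ for every $\xi$ in the open unit ball already forces $\|\tilde h_n\| \leq 1$. Note that the paper itself does not give a proof but defers to the companion paper \cite{OS-cat}, so there is no in-paper argument to compare against; your proof is precisely the expected one and matches how the analogous Banach-space result is proved.
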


\begin{proposition}
  \label{prop:cocomplete}
  The category $\OS$ is complete and cocomplete.
\end{proposition}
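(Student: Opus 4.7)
The plan is to invoke the standard categorical fact that a category has all small limits if and only if it has all small products together with all binary equalisers, and dually for colimits. This reduces the statement to a straightforward assembly of the four preceding propositions.

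More concretely, I would first note that Proposition \ref{prop:products} provides all small products in $\OS$, realised by the $\ell^\infty$-direct sum construction, while Proposition \ref{prop:equalisers} provides equalisers of arbitrary parallel pairs via the closed linear subspace $\{x \in X : f(x) = g(x)\}$ with its inherited OSS. Applying the standard construction of an arbitrary limit as an equaliser of two maps between products of the domain and codomain objects of the diagram then yields completeness. Dually, Proposition \ref{prop:coproducts} supplies small coproducts via the $\ell^1$-direct sum and Proposition \ref{prop:coequalisers} supplies coequalisers via quotienting by the closure of the image of $f-g$; combining these in the dual construction gives cocompleteness.

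There is essentially no obstacle here, since the heavy lifting (verifying the universal properties in $\OS$ and, in particular, checking that the relevant operator space structures are well-defined and interact correctly with complete contractions) has already been carried out in the cited propositions. The only thing to check is that the resulting assembled (co)limits genuinely live in $\OS$, which is immediate: closed subspaces of operator spaces and quotients by closed subspaces are operator spaces, and the $\ell^\infty$- and $\ell^1$-direct sums of operator spaces are operator spaces by Propositions \ref{prop:products} and \ref{prop:coproducts}. Thus the proof reduces to a one-line appeal to the general theorem plus citations of the four propositions above.
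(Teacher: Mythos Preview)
Your proposal is correct and matches the paper's own proof, which simply says ``Combine the propositions in this subsection.'' You have merely spelled out the standard categorical reduction (limits from products and equalisers, colimits from coproducts and coequalisers) that the paper leaves implicit.
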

\begin{proof}
  Combine the propositions in this subsection.
\end{proof}

\subsection{Strong generators in $\OS$}
\label{sub:generators}

For the locally presentable structure of a category, it is useful to identify
its strong generators. This is the main purpose of this subsection. As a first
step, it is useful to classify the monomorphisms and the epimorphisms in $\OS$.
This is again analogous to $\Ban$ (see \cite{borceux1} for the Banach space case).

\begin{proposition}[{\cite{OS-cat}}]
  \label{prop:monomorphisms}
  Let $f \colon X \to Y$ be a complete contraction between operator spaces $X$
  and $Y.$ Then: 
  \begin{itemize}
    \item $f$ is a monomorphism iff $f$ is an injection;
    \item $f$ is an epimorphism iff the image $f[X]$ is dense in $Y.$
  \end{itemize}
\end{proposition}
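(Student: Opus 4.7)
The plan is to prove the two equivalences separately, treating the monomorphism claim first and the epimorphism claim second. In each case, one direction is a routine categorical argument, while the other requires constructing an explicit ``separating'' map out of or into the appropriate operator space.

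For the monomorphism characterisation, the implication ``injective $\Rightarrow$ monomorphism'' is immediate: if $f$ is injective and $f \circ g = f \circ h$ for complete contractions $g, h \colon Z \to X$, then $g(z) = h(z)$ for every $z \in Z$. For the converse, I would argue by contrapositive. Suppose $f$ is not injective; then there exists a nonzero $x \in X$ with $f(x) = 0$, and by rescaling we may assume $\|x\| \leq 1$. I would then use $\mathbb C$ as a ``test object'' and define $g \colon \mathbb C \to X$ by $g(\lambda) = \lambda x$, along with $h = 0 \colon \mathbb C \to X$. The key sublemma is that for any operator space $X$ and any $v \in X$ with $\|v\| \leq 1$, the map $\lambda \mapsto \lambda v$ is a complete contraction from $\mathbb C$ into $X$; this follows directly from the unique OSS on $\mathbb C$ (with $M_n(\mathbb C) = M_n$) and axiom (M2), since the $n$-th amplification sends a matrix $\alpha \in M_n$ to $\alpha \cdot v$. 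Then $g \neq h$ but $f \circ g = f \circ h = 0$, contradicting monicity.

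For the epimorphism characterisation, the ``dense image $\Rightarrow$ epimorphism'' direction follows from the fact that complete contractions are continuous: if $g, h \colon Y \to Z$ satisfy $g \circ f = h \circ f$, then $g$ and $h$ agree on $f[X]$, hence on its closure $Y$. For the converse, again I would argue by contrapositive. Suppose $N \eqdef \overline{f[X]} \subsetneq Y$. Invoking the quotient lemma immediately preceding the proposition, $Y/N$ carries a canonical OSS making the quotient map $q \colon Y \to Y/N$ a complete quotient map, and in particular a complete contraction. Since $f[X] \subseteq N$, we have $q \circ f = 0 = 0_{Y, Y/N} \circ f$, but $q \neq 0_{Y, Y/N}$ because $Y/N \neq 0$. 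This contradicts $f$ being an epimorphism.

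I do not expect any serious obstacle here: both directions reduce to choosing an appropriate test object together with the correct OSS. The only place where one must be slightly careful is in verifying that the constructed witnesses actually live in $\OS$ (i.e.\ are complete contractions and not merely contractions), which for the mono case uses the canonical OSS on $\mathbb C$ together with (M2), and for the epi case uses the quotient operator space structure cited from \cite{effros-ruan}. Both of these are already in hand, so the proof is essentially a transfer of the standard Banach-space argument to the operator-space setting.
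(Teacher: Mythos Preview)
Your proposal is correct and follows exactly the standard Banach-space argument transferred to $\OS$, which is precisely what the paper indicates (it defers the proof to the companion paper \cite{OS-cat} and remarks that the situation is ``completely analogous to $\Ban$''). The only place your write-up is slightly loose is the claim that $g_n(\alpha)=\alpha\cdot v$ is controlled by (M2): strictly speaking one applies (M1) to form $\mathrm{diag}(v,\dots,v)\in M_n(X)$ with norm $\|v\|$ and then (M2) with $\beta=I_n$, but this is a cosmetic point and the argument goes through.
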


Next, we classify special kinds of epimorphisms and monomorphisms that behave
better in locally presentable categories. The situation is again completely
analogous to $\Ban.$

\begin{proposition}[{\cite{OS-cat}}]
  Let $f \colon X \to Y$ be a complete contraction in $\OS$. The following are equivalent:
  \begin{enumerate}
    \item[(1)] $f$ is a regular monomorphism (epimorphism);
    \item[(2)] $f$ is a strong monomorphism (epimorphism);
    \item[(3)] $f$ is an extremal monomorphism (epimorphism);
    \item[(4)] $f$ is a complete isometry (complete quotient map).
  \end{enumerate}
\end{proposition}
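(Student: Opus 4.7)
The plan is to close the chain of implications $(1) \Rightarrow (2) \Rightarrow (3) \Rightarrow (4) \Rightarrow (1)$ in both the monomorphism and epimorphism cases. The implications $(1) \Rightarrow (2) \Rightarrow (3)$ are purely categorical and hold in any category with pullbacks/pushouts, so I would simply invoke them with a reference to \cite{borceux1}. This leaves the two substantive steps: $(3) \Rightarrow (4)$ and $(4) \Rightarrow (1)$.

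For the monomorphism direction, I would handle $(3) \Rightarrow (4)$ by factoring through the image. Given an extremal mono $f \colon X \to Y$, equip $f[X] \subseteq Y$ with the subspace OSS inherited from $Y$, and write $f = \iota \circ \tilde f$, where $\iota \colon f[X] \hookrightarrow Y$ is the (completely isometric) inclusion and $\tilde f \colon X \to f[X]$ is the surjective corestriction. By Proposition \ref{prop:monomorphisms}, $\tilde f$ is an epimorphism (its image is all of $f[X]$, hence dense), so extremality forces $\tilde f$ to be an isomorphism, i.e.\ a completely isometric isomorphism. Therefore $f = \iota \circ \tilde f$ is a complete isometry. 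For $(4) \Rightarrow (1)$, given a complete isometry $f \colon X \to Y$, note that $f[X]$ is a closed subspace of $Y$. Using the quotient OSS lemma and Proposition \ref{prop:equalisers}, $f$ is the equaliser of the complete quotient map $q \colon Y \to Y/f[X]$ and the zero map $0 \colon Y \to Y/f[X]$, so $f$ is a regular monomorphism.

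The epimorphism side is dual. For $(3) \Rightarrow (4)$, given an extremal epi $f \colon X \to Y$, let $K \eqdef \ker f$ and factor $f = \tilde f \circ q$, where $q \colon X \to X/K$ is the canonical complete quotient map and $\tilde f \colon X/K \to Y$ is the induced injection. By Proposition \ref{prop:monomorphisms}, $\tilde f$ is a monomorphism, so extremality of $f$ forces $\tilde f$ to be a completely isometric isomorphism, from which it follows that $f$ is a complete quotient map. For $(4) \Rightarrow (1)$, given a complete quotient map $f \colon X \to Y$, the universal property of the quotient together with the fact that $f$ being a complete quotient map forces $\tilde f \colon X/\ker f \to Y$ to be a completely isometric isomorphism, showing via Proposition \ref{prop:coequalisers} that $f$ is (up to this isomorphism) the coequaliser of the inclusion $\ker f \hookrightarrow X$ and the zero map.

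The main subtlety, and the only place where anything beyond formal category theory is needed, is the epi direction of $(4) \Rightarrow (1)$: one must know the ``first isomorphism theorem'' for operator spaces, namely that a complete contraction is a complete quotient map precisely when the induced map from the quotient by its kernel is a completely isometric isomorphism. This is essentially the content of the lemma immediately preceding Proposition \ref{prop:coequalisers}. Everything else reduces to applying Proposition \ref{prop:monomorphisms} together with the concrete (co)equaliser constructions already established in \secref{sub:os-colimits}.
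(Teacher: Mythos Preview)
The paper itself does not prove this proposition; it defers entirely to the companion paper \cite{OS-cat}. Your overall strategy is the natural one and is essentially correct, but there is one genuine gap.

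In the monomorphism case $(3)\Rightarrow(4)$ you factor $f$ through $f[X]$ equipped with the subspace OSS inherited from $Y$. At this stage you only know that $f$ is an (extremal) monomorphism, i.e.\ injective; there is no reason for $f[X]$ to be closed in $Y$, and if it is not then $f[X]$ with the inherited norms fails condition (B) of Definition~\ref{def:operator-space} and is not an object of $\OS$ at all, so the factorisation does not live in the category. The fix is easy: factor through the closure $\overline{f[X]}$ instead. Then $\iota\colon\overline{f[X]}\hookrightarrow Y$ is still a complete isometry, and the corestriction $\tilde f\colon X\to\overline{f[X]}$ has dense image $f[X]$, hence is an epimorphism by Proposition~\ref{prop:monomorphisms}. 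Extremality of $f$ now forces $\tilde f$ to be an isomorphism (so a posteriori $f[X]=\overline{f[X]}$), and $f=\iota\circ\tilde f$ is a complete isometry as claimed. (This is exactly parallel to your epi argument, where you correctly used that $\ker f$ is automatically closed.)

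The remaining steps are fine. A minor remark: the implications $(1)\Rightarrow(2)\Rightarrow(3)$ hold in \emph{any} category, not just those with pullbacks/pushouts; the latter are only needed for the reverse implications, which you do not use. Your invocation of the ``first isomorphism theorem'' in the epi direction of $(4)\Rightarrow(1)$ is also correct: since each amplification $f_n$ is a Banach-space quotient map and $M_n(X/\ker f)=M_n(X)/M_n(\ker f)=M_n(X)/\ker f_n$, the induced $\tilde f$ is a completely isometric isomorphism.
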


\begin{proposition}[{\cite{OS-cat}}]
  \label{prop:strong-generator}
  Each of the following items is a strong generator for $\OS$:
  \begin{enumerate}
    \item[(1)] the set $\{ T_n \ |\ n \in \mathbb N \},$ consisting of the finite-dimensional trace class operator spaces;
    \item[(2)] the operator space $\oplus^1_{n \in \mathbb N} T_n  $;
    \item[(3)] the operator space $T(\ell_2)$, where $\ell_2 \eqdef \ell^2(\mathbb N)$.
  \end{enumerate}
\end{proposition}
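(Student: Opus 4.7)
The plan is to prove (1) first, and then to derive (2) and (3) from (1) via a retraction argument. The main technical input is the standard completely isometric isomorphism $\CB(T_n, Y) \cong M_n(Y)$ from operator space theory, which identifies complete contractions $T_n \to Y$ with elements of the closed unit ball of $M_n(Y)$.

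For (1), the generation property is essentially immediate: $T_1 \cong \mathbb C$ is already a generator of $\OS$, because any two distinct complete contractions $X \to Y$ disagree on some $x \ne 0$, and then $\mathbb C \to X$ sending $1 \mapsto x/\norm{x}$ is a complete contraction separating them. Strongness is the substantive part: given a monomorphism $m \colon A \to B$ (injective by Proposition \ref{prop:monomorphisms}) through which every complete contraction $T_n \to B$ factors for all $n$, the identification $\CB(T_n, -) \cong M_n(-)$ turns the factorisation hypothesis into the statement that $m_n$ surjects $\Ball(M_n(A))$ onto $\Ball(M_n(B))$. Combined with injectivity of $m_n$, a short argument then shows $m_n$ is an isometric bijection for every $n$, so $m$ is a completely isometric isomorphism. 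Contrapositively, every proper monomorphism fails to absorb some $T_n \to B$.

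For (2) and (3), I will use the general principle that if every object in a strong generating set $\mathcal S$ is a retract in $\OS$ of one common object $G$, then $\{G\}$ is itself strong generating: given a proper mono $m \colon A \to B$, part (1) supplies $f \colon T_n \to B$ not factoring through $m$, and then $f \circ r \colon G \to B$ cannot factor either, since precomposing with the section $s \colon T_n \to G$ would give a factorisation of $f$ itself. So the task reduces to exhibiting, for each $n$, a retraction of $T_n$ off of $G$. For $G = \oplus^1_n T_n$, Proposition \ref{prop:coproducts} already supplies inclusions $\iota_n$ and projections $\pi_n$ satisfying $\pi_n \circ \iota_n = \id$. For $G = T(\ell_2)$, fix an isometric inclusion $i \colon \mathbb C^n \hookrightarrow \ell_2$; then $s(T) = i T i^*$ and $r(S) = i^* S i$ are complete contractions (using $\norm{a T b}_1 \leq \norm{a} \norm{T}_1 \norm{b}$ and its matrix-level analogue), and their composite is the identity since $i^* i = \id_{\mathbb C^n}$.

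The hard part will be the strong-generator direction of (1), which genuinely requires the whole family of $T_n$ — the analogous property with $T_1 = \mathbb C$ alone fails, as noted in Example \ref{ex:strong-generators}. Once the trace duality $\CB(T_n, Y) \cong M_n(Y)$ is granted, the rest of (1) is direct, and (2) and (3) reduce to purely formal retract transfers.
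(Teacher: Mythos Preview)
The paper does not include a proof of this proposition; it is deferred entirely to the companion paper \cite{OS-cat}, so no direct comparison is possible here. Your argument is correct and uses the natural ingredients one would expect: the completely isometric identification $\CB(T_n,Y)\cong M_n(Y)$ (which is standard, e.g.\ via the injective tensor $M_n(Y)\cong M_n\otimes_{\min} Y$ together with finite-dimensional duality $M_n^*\cong T_n$) turns the factorisation hypothesis into the ball condition, and your derivation that each $m_n$ is then an isometric bijection is sound. The retract transfer for (2) and (3) is the standard device and is correctly stated.

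The only place that is slightly loose is the justification for (3): the phrase ``its matrix-level analogue'' for the trace-norm inequality is vague. A cleaner route, already available in the paper, is to pass to transposes via Proposition~\ref{prop:hs-correspondence}: $s^t\colon B(\ell_2)\to M_n$, $b\mapsto i^*bi$, is unital completely positive (compression by an isometry), hence completely contractive by Proposition~\ref{prop:cc-cp-vn}; and $r^t\colon M_n\to B(\ell_2)$, $a\mapsto iai^*$, has $k$-th amplification given by conjugation by the isometry $i^{\oplus k}$, hence is also completely contractive. This removes any need to invoke matrix-level trace-norm estimates directly.
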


It is useful to
understand why the situation is different compared to $\Ban$, where the complex
numbers $\C$ are a strong generator. The reason is similar to the
reason why the terminal object $1$ is a non-strong generator in $\Top.$
We have adjunctions
  \cstikz{triple-adjunction-top.tikz}
where $D$ is the functor that assigns the discrete topology to a set and $I$ is the functor that assigns the indiscrete one.
It is obvious that any map
$f \colon 1 \to I(\mathbb N)$ can also be seen as a map $f \colon 1 \to D(\mathbb N)$, so it factorises through
$\id \colon D(\mathbb N) \to I(\mathbb N)$, which is a proper monomorphism in
$\Top.$ Therefore $1$ is not a strong generator in $\Top$. The proof in $\OS$
is similar.

Let $U \colon \OS \to \Ban$ be the obvious forgetful functor.
Let $\Max \colon \Ban \to \OS$ be the maximal quantisation functor, i.e. the
functor that assigns to a Banach space $X$ the biggest possible OSS in the
sense that $(U \circ \Max)(X) = X$ and such that for each $n \geq 2,$
the norm $\norm{\cdot}_n$ on $M_n(X)$ is the largest norm among all such OSS norms (see
\cite[\S 3.3]{effros-ruan} for more information). We write $\Min \colon \Ban
\to \OS$ for the minimal quantisation functor, which has the property that
$(U \circ \Min)(X) = X$ and which assigns the smallest possible OSS
that is compatible with $X$ (again, see \cite[\S 3.3]{effros-ruan}).
It has already been observed by Yemon Choi that this gives us an
adjoint situation similar to the one in $\Top$ above
\cite{choi-adjunctions}, 
\begin{equation}
  \label{eq:triple-adjunctions}
  \stikz{triple-adjunction-os.tikz}
\end{equation}
see also \cite{pestov-adjunction} where a similar adjunction is used.

\begin{proposition}
  The operator space $\mathbb C$ is a non-strong generator for $\OS$.
\end{proposition}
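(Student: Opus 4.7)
The plan is to mimic the argument for $\Top$ sketched just above, using the adjoint triple $\Min \adj U \adj \Max$ in place of $D \adj U \adj I$. The key observation is that $\mathbb C$ carries a unique operator space structure, so both unit/counit computations collapse: the adjunctions yield natural bijections $\OS(\mathbb C, Y) \cong \Ban(\mathbb C, U(Y)) \cong \{v \in Y : \norm{v} \le 1\}$ for every operator space $Y$. In other words, a complete contraction out of $\mathbb C$ sees only the underlying Banach space structure of the codomain, and I will exhibit a proper monomorphism that is invisible to this test.

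First, I would pick a Banach space $X$ for which $\Min(X)$ and $\Max(X)$ are \emph{not} completely isometrically isomorphic, e.g.\ any infinite-dimensional $X$ or any finite-dimensional $X$ of sufficiently large dimension (this is standard operator-space folklore; see \cite[\S 3.3]{effros-ruan}). Next, I would verify that the underlying-set identity map $m \colon \Max(X) \to \Min(X)$ is a morphism of $\OS$: by construction, on each $M_n$ the norm of $\Max(X)$ dominates the corresponding norm of $\Min(X)$, so $m$ is a complete contraction. Since $m$ is bijective on underlying vector spaces, it is injective and hence a monomorphism by Proposition \ref{prop:monomorphisms}; since $\Min(X) \not\cong \Max(X)$ completely isometrically, its set-theoretic inverse is not completely contractive, so $m$ is not an isomorphism. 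Thus $m$ is a proper monomorphism.

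Finally, I would show that every complete contraction $f \colon \mathbb C \to \Min(X)$ factors through $m$. By the bijection above, $f$ is determined by the vector $v \eqdef f(1) \in X$ with $\norm{v} \le 1$. Define $g \colon \mathbb C \to \Max(X)$ by $g(z) \eqdef zv$; applying the same bijection (this time via $U \adj \Max$), $g$ is a complete contraction because $\norm{g(1)} = \norm{v} \le 1$. Because $m$ is the identity on underlying vector spaces, we have $m \circ g = f$, and so $f$ factors through the proper monomorphism $m$. This contradicts the strong-generator condition for $\{\mathbb C\}$, establishing the claim.

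The main thing to pin down is the first step, namely the existence of a Banach space $X$ with $\Min(X) \not\cong \Max(X)$; the rest is a routine unfolding of the adjunctions in \eqref{eq:triple-adjunctions}. This distinction is a well-known feature of the operator-space literature, so it should only require a citation rather than new work.
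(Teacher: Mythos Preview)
Your argument is correct and matches the paper's approach exactly: exhibit the identity $m \colon \Max(X) \to \Min(X)$ as a proper monomorphism through which every $\mathbb C$-point of $\Min(X)$ factors. Two small slips to fix: the triple is $\Max \adj U \adj \Min$ (not $\Min \adj U \adj \Max$), so your parenthetical ``via $U \adj \Max$'' should read ``via $\Max \adj U$ together with $\mathbb C = \Max(\mathbb C)$''; and you should add one line noting that $\mathbb C$ \emph{is} a generator (any two parallel complete contractions that disagree do so on some unit-ball element, which is a $\mathbb C$-point), since the proposition asserts both halves.
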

\begin{proof}
  It is easy to see that $\C$ is a generator (see \cite{OS-cat}). Consider the following diagram in $\OS,$
  \cstikz{non-strong-os.tikz}
  where $X = B(\ell_2)$ viewed as a Banach space.
  The identity map $\id \colon \Max(X) \to \Min(X)$ is a proper monomorphism.
  The diagram above is well-defined in $\OS$ and it commutes for any choice of
  $f \colon \C \to \Min(X)$.
\end{proof}

\subsection{Local Presentability of $\OS$}  
\label{sub:os-lp}

The only finitely-presentable object in $\OS$ (modulo isomorphism) is $0$, i.e.
the zero-dimensional operator space. The argument is straightforward (see
\cite{OS-cat}) and completely analogous to the situation in $\Ban$.
Just as in $\Ban$, the countably-presentable objects in $\OS$ coincide
with the \emph{separable} operator spaces. We recall that an operator space $X$ is separable whenever $X$ is separable
as a Banach space, i.e. when $X$ contains a countable dense subset.
\begin{restatable}{theorem}{osPresentableObject}
  \label{thm:presentable-objects}
  An operator space $X$ is countably-presentable in $\OS$ iff $X$ is separable.
\end{restatable}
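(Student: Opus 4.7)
My plan is to prove the two implications separately.

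\emph{Countably-presentable implies separable.} I would consider the poset $\Lambda$ of closed separable linear subspaces of $X$ ordered by inclusion. This is $\aleph_1$-directed because the closed linear span of countably many separable subspaces is again separable (a countable union of countable dense subsets is countable). The diagram $D : \Lambda \to \OS$ sending each $S$ to itself, with subspace inclusions (complete isometries) as transition maps, has $X$ as its colimit in $\OS$: every $x \in X$ lies in $\overline{\mathrm{span}}\{x\} \in \Lambda$, so $\bigcup_\lambda S_\lambda = X$, and the universal property is verified directly, since any compatible cocone $\{g_\lambda : S_\lambda \to Y\}$ extends to a unique $g \colon X \to Y$ with $g(x) \eqdef g_\lambda(x)$ for any $S_\lambda$ containing $x$. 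Complete contractivity of $g$ is checked on matrix entries: for $m = [m_{ij}] \in M_n(X)$, choose the finite-dimensional (hence separable) subspace $\overline{\mathrm{span}}\{m_{ij}\} \in \Lambda$ and apply the corresponding $g_\lambda$, whose amplifications are contractions. By countable presentability of $X$, the identity $\id_X$ factors as $\iota_{\lambda_0} \circ g$ for some $\lambda_0 \in \Lambda$ and some complete contraction $g \colon X \to S_{\lambda_0}$. Then $\iota_{\lambda_0}$ is both an injective complete isometry (hence mono) and a split epi, so it is a completely isometric isomorphism, whence $X \cong S_{\lambda_0}$ is separable.

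\emph{Separable implies countably-presentable.} Let $X$ be separable with countable dense subset $\{x_n\}_{n \in \nats}$, and let $D \colon \Lambda \to \OS$ be an $\aleph_1$-directed diagram with colimit $(C, \{c_\lambda\})$. I would use the description of $C$, available from Propositions \ref{prop:coproducts} and \ref{prop:coequalisers}, as a complete quotient $q \colon Q \to C$ where $Q \eqdef \bigoplus^1_\lambda D_\lambda$ is the $\ell^1$-direct sum and the kernel of $q$ is the closed subspace generated by the diagram identifications. Given a complete contraction $f \colon X \to C$, use the complete quotient map property to lift each $f(x_n)$ to some $q_n \in Q$ with controlled norm. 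Each $q_n$ has countable support in $\Lambda$ (since $\ell^1$-summability forces all but countably many coordinates to vanish), so by $\aleph_1$-directedness the countable union of all supports admits an upper bound $\mu \in \Lambda$. Pushing each $q_n$ along the transition maps into $D_\mu$ yields values $\widetilde{f}(x_n) \in D_\mu$ with $c_\mu(\widetilde{f}(x_n)) = f(x_n)$. Using the colimit-norm formula $\|c_\mu(y)\|_C = \inf_{\nu \geq \mu} \|D_{\mu,\nu}(y)\|_{D_\nu}$, any excess norm of $\widetilde{f}$ above that of $f$ can be absorbed by passing from $\mu$ to a larger $\nu \geq \mu$; applying this simultaneously across countably many matrix levels $M_k(X)$ and countably many basepoints (still $\aleph_1$-many data in total) produces a single $\nu$ at which $\widetilde{f}$ is a complete contraction on the dense subset $\{x_n\}$. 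Uniform continuity then extends $\widetilde{f}$ uniquely to a complete contraction $X \to D_\nu$ with $c_\nu \circ \widetilde{f} = f$. Faithfulness of the factorization (two lifts agreeing after composition with $c_\nu$ agree already in some $D_{\nu'}$ for $\nu' \geq \nu$) follows by applying the same norm-absorption argument to their difference.

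The main obstacle is the reverse direction, specifically ensuring that the pointwise lift $\widetilde{f}$ is a complete contraction rather than merely an abstract linear map projecting to $f$. The difficulty is that the maps $c_\mu \colon D_\mu \to C$ are complete contractions but not isometries, so naive lifts may have inflated norms. The resolution leans on the infimum description of the colimit norm combined with $\aleph_1$-directedness, which together permit absorbing the excess norm at every matrix level and every point of a countable dense subset at once within a single $D_\nu$. The forward direction, by contrast, is essentially formal once one exhibits $X$ as the $\aleph_1$-directed colimit of its closed separable subspaces.
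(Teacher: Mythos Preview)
The paper does not actually prove this theorem here; it defers to the companion paper \cite{OS-cat}. I can therefore only assess your proposal on its own merits, not compare it to the paper's argument.

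Your forward direction is correct and standard: exhibiting $X$ as the $\aleph_1$-directed colimit in $\OS$ of its closed separable subspaces (with the inherited OSS, so the inclusions are complete isometries) and then factoring $\id_X$ through one of them works exactly as you describe.

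Your reverse direction has a genuine gap. After lifting the points $f(x_n)$ individually to $\widetilde f(x_n)\in D_\mu$, the assignment $x_n\mapsto\widetilde f(x_n)$ is \emph{not linear}: even when $x_i+x_j=x_k$ in $X$ there is no reason for $\widetilde f(x_i)+\widetilde f(x_j)=\widetilde f(x_k)$ in $D_\mu$. Your phrase ``$\widetilde f$ is a complete contraction on the dense subset $\{x_n\}$'' presupposes linearity that you have not established, and without it the uniform-continuity extension step does not produce a linear map, so the result is not a morphism of $\OS$. Your norm-absorption argument via $\|c_\mu(y)\|_C=\inf_{\nu\ge\mu}\|D_{\mu,\nu}(y)\|_{D_\nu}$ addresses only the size of the lift, not its additivity.

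The fix is to feed the linear relations into the \emph{same} absorption machine. Replace $\{x_n\}$ by its $\mathbb Q(i)$-linear span $X_0$, still countable and dense, and lift every $x\in X_0$. For each of the countably many $\mathbb Q(i)$-linear relations $\sum_j\alpha_j x_{n_j}=0$ in $X_0$ one has $c_\mu\bigl(\sum_j\alpha_j\widetilde f(x_{n_j})\bigr)=0$, hence $\inf_{\nu\ge\mu}\bigl\|D_{\mu,\nu}\bigl(\sum_j\alpha_j\widetilde f(x_{n_j})\bigr)\bigr\|=0$; by $\aleph_1$-directedness (choosing $\nu$ above witnesses for every relation and every tolerance $1/m$) a single $\nu$ makes all of these vanish \emph{exactly}. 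Combining this with your matrix-level norm constraints, which are also only countably many since each $M_k(X_0)$ is countable, yields a $\mathbb Q(i)$-linear map $X_0\to D_\nu$ whose amplifications are contractive on every $M_k(X_0)$. Its unique continuous extension to $X$ is then automatically $\mathbb C$-linear and a complete contraction factoring $f$. Your essential-uniqueness argument for the factorisation is fine as stated.
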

\begin{proof}
  The proof requires considerable technical effort. It is included in our
  companion paper \cite{OS-cat}.
\end{proof}

\begin{theorem}
  \label{thm:os-locally-presentable}
  The category $\OS$ is locally countably presentable.
\end{theorem}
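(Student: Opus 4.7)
The plan is to invoke \Cref{def:locally-presentable} directly: I need to exhibit (i) cocompleteness of $\OS$ and (ii) a strong generating set all of whose objects are countably-presentable. Both ingredients have already been put in place earlier in the paper, so the argument is essentially assembly.

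First, cocompleteness is immediate from \Cref{prop:cocomplete}, which packages together the explicit constructions of coproducts (\Cref{prop:coproducts}) and coequalisers (\Cref{prop:coequalisers}) in $\OS$. So step (i) requires no additional work.

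For step (ii), I would pick the strong generating set from \Cref{prop:strong-generator}(1), namely $\mathcal S = \{T_n \mid n \in \mathbb N\}$. This is a (small) set, it is strong generating by that proposition, and each $T_n$ is finite-dimensional. In particular each $T_n$ is separable as a Banach space, hence separable as an operator space, hence countably-presentable by \Cref{thm:presentable-objects}. This gives a strong generating set of countably-presentable objects. Alternatively, one could take the single-object generator $T(\ell_2)$ from item (3) of the same proposition: $\ell_2$ is separable, so $T(\ell_2)$ is separable (for instance via its predual relationship and standard Banach space arguments), and again \Cref{thm:presentable-objects} applies. Either choice works, and the former is slightly cleaner because separability of $T_n$ is trivial.

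There is no genuine obstacle at this stage: the real technical content is hidden inside \Cref{thm:presentable-objects} (which identifies countably-presentable objects with separable operator spaces) and \Cref{prop:strong-generator} (which identifies the strong generators). If anything, the only thing worth flagging is to make sure the chosen generating set really does consist of \emph{countably}-presentable objects rather than merely $\alpha$-presentable ones for some larger $\alpha$; since finite-dimensional operator spaces are obviously separable, this check is immediate. Concluding, $\OS$ satisfies both clauses of \Cref{def:locally-presentable} with $\alpha = \aleph_1$, so it is locally countably presentable.
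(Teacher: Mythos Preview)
Your proposal is correct and follows essentially the same route as the paper: combine cocompleteness (\Cref{prop:cocomplete}), the strong generating set from \Cref{prop:strong-generator}, and the identification of countably-presentable objects with separable operator spaces (\Cref{thm:presentable-objects}) to verify \Cref{def:locally-presentable} at $\alpha=\aleph_1$. The paper's proof is the same three-line assembly, so there is nothing to add.
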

\begin{proof}
  The category $\OS$ is cocomplete (Proposition \ref{prop:cocomplete}) and it has a
  strong generator (Proposition \ref{prop:strong-generator}) consisting of
  countably-presentable objects (Theorem \ref{thm:presentable-objects}).
\end{proof}

Our development shows that the locally presentable structures of $\OS$ and
$\Ban$ are closely related to each other. In the theory of locally
$\alpha$-presentable categories, there is an important role that is played by
reflective subcategories that are closed under $\alpha$-directed colimits
\cite{lp-categories-book}. In fact, the relationship between $\Ban$ and $\OS$
satisfies this. To recognise this, recall that a \emph{full} subcategory $\CC$
of $\DD$ is reflective whenever its inclusion $\CC \hookrightarrow \DD$ has a
left adjoint. The functor $\Min$ in \eqref{eq:triple-adjunctions} is a fully
faithful right adjoint between locally countably presentable categories and it
preserves countably-directed colimits
(since $U$ preserves countably-presentable objects, which are exactly the
separable ones, and then by \cite[A.1]{henry-lambda-adjunction}).
Let
$\mathbf{Min}$ be the full subcategory of $\OS$ consisting of minimal operator
spaces, equivalently, operator spaces $X$ such that $X = \Min(U(X)).$

\begin{corollary}
  \label{cor:ban-os}
  We have an isomorphism of categories
  \[ \Ban \cong \mathbf{Min} , \]
  where $\mathbf{Min}$ is a reflective subcategory of $\OS$ closed under
  countably-directed colimits.
\end{corollary}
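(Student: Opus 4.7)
The plan is to exploit the two main facts already recorded just before the statement: the adjunction $U \dashv \Min$ (with $\Min$ fully faithful) and the preservation of countably-directed colimits by $\Min$. The corollary then splits into three parts: the isomorphism, reflectivity, and closure under countably-directed colimits.

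For the isomorphism $\Ban \cong \mathbf{Min}$, I would observe that by construction of $\Min$ the equation $U \circ \Min = \Id_{\Ban}$ holds \emph{strictly} on objects and morphisms. The functor $\Min$ therefore corestricts to a functor $\widetilde{\Min} \colon \Ban \to \mathbf{Min}$, and $U$ restricts to $\widetilde{U} \colon \mathbf{Min} \to \Ban$. By definition of $\mathbf{Min}$, every $X \in \mathbf{Min}$ satisfies $X = \Min(U(X))$, so $\widetilde{\Min} \circ \widetilde{U} = \Id_{\mathbf{Min}}$ on objects; on morphisms this holds because $\Min$ is fully faithful. Combined with $\widetilde U \circ \widetilde\Min = \Id_{\Ban}$, this gives mutually inverse functors, so $\Ban \cong \mathbf{Min}$ as an isomorphism (not merely equivalence) of categories.

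For reflectivity, I would define the candidate reflector $R \defeq \Min \circ U \colon \OS \to \mathbf{Min}$. To verify $R \dashv \iota$, where $\iota \colon \mathbf{Min} \hookrightarrow \OS$ is the inclusion, I compute for any $Y \in \OS$ and $X \in \mathbf{Min}$:
\[
\mathbf{Min}(R(Y), X) = \OS(\Min(U(Y)), \Min(U(X))) \cong \Ban(U(Y), U(X)) \cong \OS(Y, \Min(U(X))) = \OS(Y, \iota(X)),
\]
where the first equality uses that $\mathbf{Min}$ is full in $\OS$ and $X = \Min(U(X))$, the second uses that $\Min$ is fully faithful, and the third uses the adjunction $U \dashv \Min$. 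Naturality in $Y$ and $X$ is immediate from naturality of the hom-bijection for $U \dashv \Min$.

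For closure under countably-directed colimits, let $D \colon \Lambda \to \mathbf{Min}$ be a countably-directed diagram and let $C$ be its colimit in $\OS$. Since $U$ is a left adjoint (to $\Min$), it preserves all colimits, so $U(C)$ is the colimit of $U \circ D$ in $\Ban$. The functor $\Min$ preserves countably-directed colimits by the remark preceding the corollary, so $\Min(U(C))$ is the colimit of $\Min \circ U \circ D$ in $\OS$. But $D$ factors through $\mathbf{Min}$, hence $\Min \circ U \circ D = D$, so $\Min(U(C))$ and $C$ are both colimits of $D$ in $\OS$ and therefore canonically isomorphic; this isomorphism exhibits $C$ as an object of $\mathbf{Min}$ (up to the usual identification of isomorphic operator spaces).

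The arguments are structurally routine; the only subtle point—hence the step I expect to require the most care—is ensuring that the correspondence $\Ban \cong \mathbf{Min}$ is a strict isomorphism rather than merely an equivalence. This is why I emphasise the strict identity $U \circ \Min = \Id_{\Ban}$ and the defining condition $X = \Min(U(X))$ for $X \in \mathbf{Min}$, which together convert the expected equivalence into an honest isomorphism.
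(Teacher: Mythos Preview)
Your proof is correct and follows exactly the approach the paper implicitly intends: the corollary is stated without proof, relying on the preceding sentence that $\Min$ is a fully faithful right adjoint preserving countably-directed colimits, and you correctly unpack these standard categorical consequences. One minor technicality you gloss over in the last step is that $\mathbf{Min}$, defined by the \emph{strict} equality $X=\Min(U(X))$, is closed under completely isometric isomorphism (so that $C\cong\Min(U(C))$ genuinely forces $C\in\mathbf{Min}$); this holds because $\Min$ is functorial on isometric isomorphisms, but it is worth a sentence.
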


\subsection{Symmetric Monoidal Closure of $\OS$}
\label{sub:smcc}

Recall that $\Ban$ is a symmetric monoidal closed category with respect to the
projective tensor product $X \pbantimes Y$ of Banach spaces and with
internal hom given by the space of bounded linear maps $B(X,Y)$ between two
Banach spaces. The analogous result for $\OS$ is obtained by replacing the
projective tensor with the \emph{completely} projective tensor product of
operator spaces (see \cite[Chapter 7]{effros-ruan} and \cite[Section
1.5]{blecher-merdy}) and replacing $B(X,Y)$ with the operator space of
\emph{completely} bounded maps $\CB(X,Y).$

\begin{remark}
  What we call the ``completely projective'' tensor is usually referred to
  simply as the ``projective'' tensor in the literature on operator spaces.
  However, in order to avoid confusion with the Banach space projective tensor,
  we choose to use the term ``completely projective'' tensor in this paper.
  The two tensors are different in general.
\end{remark}

Recall that, if $X$ and $Y$ are vector spaces over $\mathbb C$, then the
algebraic tensor product $X\otimes Y$ of $X$ and $Y$ enjoys the following
universal property: there exists a bilinear map $\pi\colon X\times Y\to
X\otimes Y$ such that for each vector space $Z$ and each bilinear map $u\colon
X\times Y\to Z$ there is a \emph{unique} linear map $\tilde u\colon X\otimes
Y\to Z$ such that $\tilde u\circ\pi= u$. We call $\tilde u$ the
\emph{linearization} of $u$. In particular, the linearization of $\pi$ itself
is the identity on $X\otimes Y$. We call elements in the image of $\pi$
\emph{elementary tensors}. In particular, for $x\in X$ and $y\in Y$, we write
$x\otimes y\eqdef \pi(x,y)$.

We can now recall the completely projective tensor product $\ptimes$ of
operator spaces and the universal property that it enjoys.

\begin{definition}
  \label{def:projective-tensor-operator}
  Let $X$ and $Y$ be operator spaces. For an element $v \in \MM_n(X \otimes Y)$, consider the norm
  \begin{align*}
    \norm{v}_{\wedge} \eqdef \inf \{ & \norm \alpha \norm x \norm y \norm \beta \  |\ \\
    & p \in \mathbb N, q \in \mathbb N, x \in M_p(X) , y \in M_q(Y), \\
    & \alpha \in M_{n, pq}, \beta \in M_{pq, n} , \text{ and } v = \alpha (x \otimes y) \beta\} .
  \end{align*}
  Here $x\otimes y\in M_{pq}(X\otimes Y)$ is the ``tensor product of matrices'' defined by
  \[x\otimes y\eqdef [x_{ij}\otimes y_{kl}]_{(i,k),(j,l)}.\]
  Note that the expression for the norm immediately yields $\|x\otimes y\|_{\wedge}\leq \|x\|\|y\|$ for such $x\otimes y\in \MM_{pq}(X\otimes Y)$. 
  Less obviously, it is also true that $\|x\otimes y\|_{\wedge} = \|x\|\|y\|.$
  We write $X \ptimes Y$ for the completion of $X \otimes Y$ with respect to the above norm (on $\MM_1(X \otimes Y)$) and say that $X \ptimes Y$ is the 
  completely projective tensor product
  of $X$ and $Y$. More specifically, $M_n(X \ptimes Y)$ is given by the completion of $\MM_n(X \otimes Y)$ with respect to the above norm and this determines the OSS.
\end{definition}

A proof that $X\ptimes Y$ is an operator space can be found in
\cite[1.5.11]{blecher-merdy} or \cite[Section 7.1]{effros-ruan}. For the
universal property of this tensor, we recall the appropriate kinds of bilinear
maps.

\begin{definition}
  Let $X$, $Y$ and $Z$ be operator spaces. Then a bilinear map $u:X\times Y\to Z$ is called \emph{jointly completely bounded} if there exists a $K\geq 0$ such that for each $n,m\in\mathbb N$ and each $[x_{ij}]\in M_n(X)$ and each $[y_{kl}]\in M_m(Y)$, we have 
    \[ \|[u(x_{ij},y_{kl})]_{(i,k),(j,l)}\|\leq K\| [x_{ij}]\|\|[y_{kl}]\|.\]
    Note that $[u(x_{ij},y_{kl})]_{(i,k),(j,l)}\in M_{nm}(Z)$, so it is an $nm\times nm$ matrix of elements in $Z$.
  We define $\|u\|_{\mathrm{jcb}}$ to be the least $K$ satisfying the above condition.
  If $\|u\|_{\mathrm{jcb}}\leq 1$, we say that $u$ is \emph{jointly completely contractive}. 
  We write $\JCB(X\times Y, Z)$ for the space of all jointly completely bounded bilinear maps $X\times Y\to Z$. It follows that $\|\cdot\|_{\mathrm{jcb}}$ is a norm on $\JCB(X \times Y, Z)$ and it can also be seen as a matrix norm through the linear isomorphism
  \[ \mathbb M_n(\JCB(X \times Y, Z)) \cong \JCB(X \times Y, M_n(Z)) . \]
\end{definition}

Given two operator spaces $X$ and $Y$, the map
$\pi_{X,Y}\colon X\times Y\to X\otimes Y$ extends to a jointly completely
contractive map $X\times Y\to X\ptimes Y$, which we also denote by
$\pi_{X,Y}$, and which satisfies the following universal property.

\begin{proposition}[{\cite[1.5.11]{blecher-merdy}\cite[Proposition 7.1.2]{effros-ruan}}]
  \label{prop:universal-tensor}
  Given operator spaces $X$, $Y$, $Z$, and a jointly completely
  contractive (bounded) map $u\colon X\times Y\to Z,$ there is a unique
  completely contractive (bounded) map $\bar u:X\ptimes Y\to Z$ such that
  \begin{equation}\label{eq:universal property}
    \bar u\circ\pi_{X,Y} =u,
  \end{equation}
  which is obtained as the unique continuous extension of
  $\tilde u:X\otimes Y\to Z$. Moreover, we have
  $\|u\|_{\mathrm{jcb}}=\|\bar u\|_{\mathrm{cb}}$. In particular, we have a completely
  isometric isomorphism
  \begin{align*}
    JCB(X\times Y,Z) &\cong \CB(X\ptimes Y,Z) \\ 
    u &\mapsto \bar u.
  \end{align*}
\end{proposition}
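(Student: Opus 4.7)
The plan is to prove this in the standard three-stage pattern for universal properties of tensor completions: (i) use the universal property of the algebraic tensor product $X \otimes Y$ to obtain a candidate $\tilde u \colon X \otimes Y \to Z$; (ii) show $\tilde u$ is completely contractive with respect to the norm $\|\cdot\|_\wedge$; (iii) extend by continuity to the completion $X \ptimes Y$ and verify matrix-norm isometry.

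For stage (i), given a jointly completely contractive bilinear $u \colon X \times Y \to Z$, its restriction to the underlying vector-space bilinearity gives, by the universal property of the algebraic tensor, a unique linear $\tilde u \colon X \otimes Y \to Z$ satisfying $\tilde u \circ \pi = u$ on elementary tensors. For stage (ii), fix $n \in \mathbb{N}$ and $v \in \MM_n(X \otimes Y)$. Given any decomposition $v = \alpha (x \otimes y) \beta$ with $x \in M_p(X)$, $y \in M_q(Y)$, $\alpha \in M_{n,pq}$, $\beta \in M_{pq,n}$, we have
\[
  \tilde u_n(v) = \alpha \, [u(x_{ij}, y_{kl})]_{(i,k),(j,l)} \, \beta
\]
in $M_n(Z)$, since $\tilde u_n$ is just componentwise application of $\tilde u$ and $\tilde u(x_{ij} \otimes y_{kl}) = u(x_{ij}, y_{kl})$. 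Applying the operator space axiom (M2) in $Z$ together with joint complete contractivity of $u$ yields
\[
  \|\tilde u_n(v)\| \leq \|\alpha\| \, \|u\|_{\mathrm{jcb}} \, \|x\| \, \|y\| \, \|\beta\|.
\]
Taking the infimum over all decompositions shows $\|\tilde u_n(v)\| \leq \|u\|_{\mathrm{jcb}} \, \|v\|_\wedge$, so $\|\tilde u\|_{\mathrm{cb}} \leq \|u\|_{\mathrm{jcb}}$; in particular, when $u$ is jointly completely contractive, $\tilde u$ is completely contractive on $(X \otimes Y, \|\cdot\|_\wedge)$. This is the main technical step.

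For stage (iii), each $\tilde u_n$ is now a bounded linear map from a dense subspace of the Banach space $M_n(X \ptimes Y)$ into the Banach space $M_n(Z)$, so it has a unique continuous linear extension $\bar u_n$ with the same norm. These extensions are compatible (they all come from extending the single map $\tilde u_1$ and the matrix operations in (M2) are continuous), hence define a single completely contractive linear map $\bar u \colon X \ptimes Y \to Z$ with $\|\bar u\|_{\mathrm{cb}} \leq \|u\|_{\mathrm{jcb}}$. Uniqueness of $\bar u$ follows from density of $X \otimes Y$ in $X \ptimes Y$ and the fact that $\bar u$ is determined on elementary tensors by $\bar u(x \otimes y) = u(x,y)$. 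The reverse inequality $\|u\|_{\mathrm{jcb}} \leq \|\bar u\|_{\mathrm{cb}}$ follows immediately from the estimate $\|x \otimes y\|_\wedge \leq \|x\| \|y\|$ applied to matrices of elementary tensors: for $[x_{ij}] \in M_n(X)$ and $[y_{kl}] \in M_m(Y)$,
\[
  \|[u(x_{ij}, y_{kl})]\| = \|\bar u_{nm}([x_{ij} \otimes y_{kl}])\| \leq \|\bar u\|_{\mathrm{cb}} \, \|[x_{ij}]\| \, \|[y_{kl}]\|.
\]

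Finally, for the completely isometric isomorphism $\JCB(X \times Y, Z) \cong \CB(X \ptimes Y, Z)$, the map $u \mapsto \bar u$ is a linear bijection by the above. To promote the isometry $\|u\|_{\mathrm{jcb}} = \|\bar u\|_{\mathrm{cb}}$ to the matrix level, I would apply it with $Z$ replaced by $M_n(Z)$ and then use the identifications $\MM_n(\JCB(X \times Y, Z)) \cong \JCB(X \times Y, M_n(Z))$ (from the definition of the matrix OSS on $\JCB$) and $\MM_n(\CB(X \ptimes Y, Z)) \cong \CB(X \ptimes Y, M_n(Z))$ (from the canonical OSS on $\CB$), under which the correspondence $u \mapsto \bar u$ is preserved. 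The principal obstacle is stage (ii): extracting exactly the right matrix-norm estimate from joint complete contractivity via the $\alpha(x \otimes y)\beta$ decompositions; everything else is bookkeeping around density and continuity.
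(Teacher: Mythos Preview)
The paper does not supply its own proof of this proposition; it is quoted directly from the cited textbook sources. Your three-stage argument is correct and is essentially the standard proof given there: the estimate in stage~(ii) via the decomposition $v=\alpha(x\otimes y)\beta$ together with (M2) and the jcb condition is exactly how the bound $\|\tilde u\|_{\mathrm{cb}}\le\|u\|_{\mathrm{jcb}}$ is obtained in \cite[Proposition~7.1.2]{effros-ruan}, and the reverse inequality and matrix-level isometry follow as you indicate. One minor point: your use of (M2) involves rectangular scalar matrices $\alpha\in M_{n,pq}$, $\beta\in M_{pq,n}$, whereas the axiom is stated for square ones; this is harmless since (M2) extends to rectangular matrices by padding with zeros, but it is worth saying explicitly.
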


Building on the above proposition, one can show that $\OS$ is a symmetric
monoidal closed category. We believe that this might be folklore knowledge and
the monoidal closure of $\OS$ has been explicitly pointed out by Yemon Choi in
an online discussion already \cite{choi-smcc}. In presenting Proposition
\ref{prop:os-smcc}, our intention is not to claim originality of this result,
but to use it for proving additional results.

\begin{proposition}
  \label{prop:os-smcc}
  The category $\OS$ has the structure of a symmetric monoidal closed category
  with monoidal product $X \ptimes Y$, monoidal unit $\mathbb C$, and
  internal hom $\CB(X,Y)$.
\end{proposition}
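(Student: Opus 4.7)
The plan is to lift all required structure from the universal property established in Proposition \ref{prop:universal-tensor}. Bifunctoriality of $\ptimes$ is the first step: given complete contractions $f \colon X \to X'$ and $g \colon Y \to Y'$, the bilinear map $(x,y) \mapsto f(x) \otimes g(y)$ from $X \times Y$ into $X' \ptimes Y'$ is jointly completely contractive (since $\|f\|_{\mathrm{cb}}, \|g\|_{\mathrm{cb}} \leq 1$ control the $\|\cdot\|_\wedge$-norm on the codomain), so it lifts to a unique complete contraction $f \ptimes g$, and functoriality in each argument is then forced by the uniqueness clause of Proposition \ref{prop:universal-tensor}.

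Next I would construct the symmetric monoidal structure. The associator $\alpha_{X,Y,Z} \colon (X \ptimes Y) \ptimes Z \to X \ptimes (Y \ptimes Z)$ arises from iterating the universal property on the trilinear map $(x,y,z) \mapsto x \otimes (y \otimes z)$: for each fixed $x$, this yields a complete contraction $Y \ptimes Z \to X \ptimes (Y \ptimes Z)$, and the resulting bilinear map $X \times (Y \ptimes Z) \to X \ptimes (Y \ptimes Z)$ is jointly completely contractive. The braiding $\beta_{X,Y}$ is induced by the swap $(x,y) \mapsto y \otimes x$, whose joint complete contractivity is immediate from the symmetry of $\|\cdot\|_\wedge$. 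The unitor $\lambda_X \colon \C \ptimes X \to X$ comes from scalar multiplication, with candidate inverse $x \mapsto 1 \otimes x$; establishing that both are complete contractions (so that $\lambda_X$ is a complete isometry rather than merely a contraction) relies on the equality $\|x \otimes y\|_\wedge = \|x\|\|y\|$ noted after Definition \ref{def:projective-tensor-operator}. The coherence axioms (pentagon, triangle, two hexagons) hold because the two parallel morphisms in each axiom agree on elementary tensors, where they reduce to well-known identities of the algebraic tensor product, and the uniqueness part of Proposition \ref{prop:universal-tensor} forces them to coincide on the completion.

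For the closed structure, I would establish a completely isometric isomorphism
\begin{equation*}
  \CB(X \ptimes Y, Z) \cong \CB(X, \CB(Y, Z))
\end{equation*}
natural in all three arguments. Proposition \ref{prop:universal-tensor} already identifies the left-hand side with $\JCB(X \times Y, Z)$ completely isometrically, so it suffices to show that the standard currying $u \mapsto \widehat{u}$, where $\widehat{u}(x)(y) \eqdef u(x,y)$, yields a completely isometric isomorphism $\JCB(X \times Y, Z) \cong \CB(X, \CB(Y, Z))$. Linearity and bijectivity are straightforward; that it preserves matrix norms is obtained by unpacking both sides at each level $n$ via the defining isomorphisms $\MM_n(\CB(A,B)) \cong \CB(A, M_n(B))$ and $\MM_n(\JCB(X \times Y, Z)) \cong \JCB(X \times Y, M_n(Z))$, and checking that currying sends $\|\cdot\|_{\mathrm{jcb}}$ onto $\|\cdot\|_{\mathrm{cb}}$. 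Passing to unit balls at each level then produces the natural bijection $\OS(X \ptimes Y, Z) \cong \OS(X, \CB(Y, Z))$ which witnesses the adjunction $(-) \ptimes Y \dashv \CB(Y, -)$.

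The main obstacle is the bookkeeping in the closure step, specifically verifying that currying matches matrix norms exactly (rather than up to some scalar) and that the resulting isomorphism is natural in all three arguments; a secondary technical point is confirming that the unit $\mathbb C \ptimes X \to X$ is a \emph{complete isometry}, which again reduces to the elementary-tensor norm equality $\|x \otimes y\|_\wedge = \|x\|\|y\|$. Once these are in hand, bifunctoriality and coherence become essentially automatic consequences of the uniqueness clause of the universal property.
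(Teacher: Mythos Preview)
The paper does not actually supply a proof of this proposition: it is stated immediately after Proposition~\ref{prop:universal-tensor} with the remark that ``building on the above proposition, one can show that $\OS$ is a symmetric monoidal closed category'' and that this is folklore (crediting Yemon Choi). So there is nothing to compare against beyond the implicit suggestion that everything should follow from the universal property of $\ptimes$, and your proposal carries out exactly that programme. The overall strategy---bifunctoriality from linearising $(x,y)\mapsto f(x)\otimes g(y)$, coherence from uniqueness on elementary tensors, and closure via $\CB(X\ptimes Y,Z)\cong \JCB(X\times Y,Z)\cong \CB(X,\CB(Y,Z))$---is correct and is the standard argument one finds in \cite[Chapter~7]{effros-ruan}.

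One small slip worth fixing: in your associator paragraph you fix $x$ and produce a complete contraction $Y\ptimes Z\to X\ptimes(Y\ptimes Z)$, then assemble a bilinear map $X\times(Y\ptimes Z)\to X\ptimes(Y\ptimes Z)$. Linearising that gives a map $X\ptimes(Y\ptimes Z)\to X\ptimes(Y\ptimes Z)$, i.e.\ the identity, not $\alpha$. To build $\alpha\colon (X\ptimes Y)\ptimes Z\to X\ptimes(Y\ptimes Z)$ you should instead fix $z\in Z$, linearise $(x,y)\mapsto x\otimes(y\otimes z)$ to get $\varphi_z\colon X\ptimes Y\to X\ptimes(Y\ptimes Z)$ with $\|\varphi_z\|_{\mathrm{cb}}\leq\|z\|$, and then check that $(w,z)\mapsto\varphi_z(w)$ is jointly completely contractive on $(X\ptimes Y)\times Z$. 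This is the step that requires a little care (and is where the norm equality $\|x\otimes y\|_\wedge=\|x\|\|y\|$ at the matrix level is used), but it goes through. Apart from this bookkeeping error your sketch is sound.
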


\subsection{Linear Logic Exponentials}
\label{sub:exponentials}

There are (at least) two different ways to interpret the
exponential of linear logic in $\OS$. One of them is the Lafont exponential
\cite{lafont-thesis} that we describe next. We write $\CoComon$ for the
category whose objects are the \emph{cocommutative comonoids} internal to $\OS$
(with its symmetric monoidal structure from \secref{sub:smcc}) with
morphisms the \emph{comonoid homomorphisms}.

\begin{theorem}
  There exists an adjunction
  \begin{equation}
    \label{eq:cocommutative-cofree}
    \stikz{free-comonoid-adjunction-os.tikz}
  \end{equation}
  where the \emph{left} adjoint $U$ is the forgetful functor.
\end{theorem}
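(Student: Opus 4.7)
The plan is to construct the right adjoint to $U$---i.e., a cofree cocommutative comonoid functor---by invoking general results on comonoid categories in locally presentable symmetric monoidal closed categories, in the spirit of Porst (and earlier Barr/Fox). The two inputs in hand are Theorem \ref{thm:os-locally-presentable}, which gives local countable presentability of $\OS$, and Proposition \ref{prop:os-smcc}, which gives the SMCC structure. The closure is crucial: for every operator space $Y$, the functor $-\ptimes Y$ is a left adjoint and therefore preserves all small colimits in each variable.

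First I would show that $U \colon \CoComon \to \OS$ creates small colimits. Given a diagram $D \colon \mathcal{J} \to \CoComon$, let $C$ be the colimit of $UD$ in $\OS$. Since $\ptimes$ preserves colimits in each argument, the canonical comparison $\colim(UD \ptimes UD) \to C \ptimes C$ is an isomorphism, so the cone with components $UD_j \xrightarrow{\delta_j} UD_j \ptimes UD_j \to C \ptimes C$ induces a unique mediator $\delta \colon C \to C \ptimes C$. A counit $\epsilon \colon C \to \C$ is produced analogously, and coassociativity, counitality, and cocommutativity transfer because both sides of each axiom lie over colimit objects and agree on the universal cone. This makes $(C,\delta,\epsilon)$ the colimit in $\CoComon$; in particular $\CoComon$ is cocomplete and $U$ preserves colimits.

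Second, I would invoke the standard fact that $\CoComon(\mathcal{C})$ is locally presentable whenever $\mathcal{C}$ is a locally presentable SMCC. Concretely, one realises $\CoComon$ as the category of coalgebras for an accessible comonad on $\OS$ built from the symmetric tensor powers $X^{\ptimes n}$, or equivalently as a pseudo-limit of accessible categories in $\mathbf{Cat}$ obtained by cutting out the equational axioms of a cocommutative comonoid; either route shows that $\CoComon$ is locally presentable and that $U$ is accessible. This is the main technical obstacle, since a direct construction of the cofree comonoid would otherwise require a delicate transfinite limit argument over ``polynomial'' approximations.

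Finally, with $U$ a colimit-preserving accessible functor between locally presentable categories, the adjoint functor theorem for locally presentable categories \cite[Theorem 1.66]{lp-categories-book} supplies the desired right adjoint, yielding the cofree cocommutative comonoid functor and completing the proof.
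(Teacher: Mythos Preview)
Your proposal is correct and follows essentially the same route as the paper: both arguments rest on the fact that $\OS$ is a locally presentable symmetric monoidal closed category (Theorem~\ref{thm:os-locally-presentable} and Proposition~\ref{prop:os-smcc}), from which one deduces that $\CoComon$ is itself locally presentable and that $U$ is cocontinuous, and then one applies an adjoint functor theorem. The paper simply cites Porst \cite{porst-comonoids} and invokes the special adjoint functor theorem, whereas you unpack the colimit-creation step and phrase the final appeal via \cite[Theorem~1.66]{lp-categories-book}; these are the same argument at different levels of detail. One small caveat: your aside that $\CoComon$ can be realised ``as the category of coalgebras for an accessible comonad built from the symmetric tensor powers'' is somewhat circular as stated, since that comonad is precisely what the adjunction produces; the alternative description you give (as a limit of accessible categories cut out by the comonoid axioms) is the one that actually does the work in Porst's argument.
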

\begin{proof}
  The category $\OS$ is locally presentable and symmetric monoidal closed, so the proof follows using categorical arguments
  \cite[pp. 10 and pp. 13]{porst-comonoids}. The proof shows that $\CoComon$
  is also locally presentable and uses the special adjoint functor theorem
  to prove that $R$ exists.\footnote{In fact, it even
  follows that $\CoComon$ is cartesian closed \cite{porst-comonoids}.}
\end{proof}

For an operator space $X$, the right adjoint gives the \emph{free}
cocommutative comonoid $RX$ over $X$ (see \cite[\S 7.2]{mellies-linear-logic}
for more information). We write $! \colon \OS \to \OS$ for the induced comonad,
i.e the Lafont exponential.

The other exponential is induced by the adjunction
\begin{equation}
  \label{eq:s-adjunction}
  \stikz{set-os.tikz}
\end{equation}
where $\Ball$ is the unit ball functor, i.e.
$ \Ball(Y) \eqdef \{ y \in Y : \norm{y} \leq 1 \} . $
This adjunction is symmetric monoidal and it is an instance of a copower
adjunction, because we have natural isomorphisms
$ \ell^1(X) \cong \coprod_X \mathbb C$ and $\Ball(Y) \cong \OS(\mathbb C, Y) . $
This is a model of ILL in the sense of Benton \cite{benton-lnl} and we write
$S \eqdef \ell^1 \circ \Ball \colon \OS \to \OS$ for the induced
exponential.

\section{Pure and Mixed Quantum Information in $\OS$}
\label{sec:pure-mixed}

The category $\OS$ allows us to describe both pure and mixed quantum primitives
in a compositional way that is compatible with its structure as a model of ILL.
Furthermore, it also allows us to describe morphisms that model the interaction
between the pure and mixed quantum primitives.

\subsection{Morphisms for Mixed State Quantum Computation}

We begin with a couple of propositions that should clarify the connection
between completely-positive maps and complete contractions.

\begin{proposition}
  \label{prop:cc-cp-vn}
  Let $\varphi \colon B(H_1) \to B(H_2)$ be a linear unital map. Then,
  $\varphi$ is completely-positive iff $\varphi$ is a complete contraction.
\end{proposition}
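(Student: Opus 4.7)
The plan is to reduce the equivalence at all matrix levels $n$ to the analogous statement for single unital linear maps between C*-algebras, and then invoke classical facts from operator algebra theory. The key observation is that, by Example \ref{ex:C-star-algebra-OSS} and Example \ref{ex:bounded-hilb}, each matrix space $M_n(B(H_i))$ carries a canonical C*-algebra structure isomorphic to $B(H_i^{\oplus n})$, and the operator space norm on $M_n(B(H_i))$ agrees with the C*-algebra norm. Moreover, if $\varphi$ is unital, then every amplification $\varphi_n \colon M_n(B(H_1)) \to M_n(B(H_2))$ is unital as well, since $\varphi_n$ sends the identity matrix $I_n \otimes 1_{H_1}$ to $I_n \otimes 1_{H_2}$.

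For the forward direction, assume $\varphi$ is unital and completely positive. Then for each $n$, the amplification $\varphi_n$ is a unital positive linear map between the C*-algebras $M_n(B(H_1))$ and $M_n(B(H_2))$. The classical Russo--Dye type result states that every unital positive linear map between unital C*-algebras is contractive, so $\norm{\varphi_n} \leq 1$ for every $n$, hence $\norm{\varphi}_{\mathrm{cb}} \leq 1$.

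For the backward direction, assume $\varphi$ is unital and completely contractive. Then each $\varphi_n$ is a unital linear contraction between the C*-algebras $M_n(B(H_1))$ and $M_n(B(H_2))$. I would invoke the standard result (see, e.g., Paulsen's monograph on completely bounded maps, or the textbook references already cited in the paper) that any unital linear contraction between unital C*-algebras is automatically positive; applied to each $\varphi_n$, this gives that $\varphi_n$ is positive for all $n$, i.e., $\varphi$ is completely positive. (One standard route to this fact is via Stinespring--Arveson--Wittstock dilation: a unital complete contraction admits a unital CP dilation, and unitality forces the dilation to coincide with $\varphi$ itself.)

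The main obstacle is the backward direction, because the classical result that unital contractive implies positive does not follow from pure calculation; it relies on a nontrivial theorem of functional analysis. The forward direction is essentially an application of Russo--Dye and is routine once the identification $M_n(B(H_i)) \cong B(H_i^{\oplus n})$ of C*-algebras is set up. I expect the cleanest presentation is to explicitly cite the standard reference for the backward implication rather than reprove it.
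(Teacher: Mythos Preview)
Your argument is correct. The paper's own proof is a one-line citation to \cite[Corollary 5.1.2]{effros-ruan}, so there is nothing to compare structurally: you have simply unpacked the content of that reference in the special case $B(H_1)\to B(H_2)$, reducing both directions at each matrix level to the classical equivalence ``unital positive $\Leftrightarrow$ unital contractive'' for maps between unital C*-algebras (Russo--Dye for one direction, the Kadison/Paulsen result for the other). The parenthetical about Stinespring--Arveson--Wittstock dilation is unnecessary here since the level-by-level argument already closes the backward direction, but it is not wrong.
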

\begin{proof}
  Special case of \cite[Corollary 5.1.2]{effros-ruan}.
\end{proof}

The above proposition is applicable to quantum operations in the Heisenberg
picture, i.e. NCPU maps. The analogous proposition for the \Schrod{} picture is
given next.

\begin{proposition}
  \label{prop:cc-cp-tp}
  Let $\varphi \colon T(H_1) \to T(H_2)$ be a linear trace-preserving map.
  Then, $\varphi$ is completely-positive iff $\varphi$ is a complete
  contraction.
\end{proposition}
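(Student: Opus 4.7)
The plan is to reduce this claim to its Heisenberg-picture counterpart, Proposition \ref{prop:cc-cp-vn}, via the Heisenberg-\Schrod{} correspondence of Proposition \ref{prop:hs-correspondence}. The essential observation is that taking the transpose $\varphi^t \colon B(H_2) \to B(H_1)$ exchanges the roles of unitality and trace-preservation while preserving both complete positivity and complete contractivity.

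In more detail, I would first establish that $\varphi$ is bounded in both directions of the biconditional. If $\varphi$ is a complete contraction this is immediate. If $\varphi$ is completely positive, then in particular it is positive, and the trace-preserving assumption gives $\|\varphi(x)\|_1 = \trace(\varphi(x)) = \trace(x) = \|x\|_1$ for every positive $x \in T(H_1)$; decomposing an arbitrary trace class operator into its real/imaginary and positive/negative parts then yields a uniform bound on $\varphi$. In either case, $\varphi$ is a bounded linear map, so by the paragraph preceding Proposition \ref{prop:hs-correspondence} it has a (unique) normal transpose $\varphi^t \colon B(H_2) \to B(H_1)$.

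Now, because $\varphi$ is trace-preserving, part (2) of Proposition \ref{prop:hs-correspondence} tells us that $\varphi^t$ is unital. Applying Proposition \ref{prop:cc-cp-vn} to the unital normal map $\varphi^t$, we obtain that $\varphi^t$ is completely positive if and only if $\varphi^t$ is a complete contraction. Finally, translating back via parts (1) and (3) of Proposition \ref{prop:hs-correspondence}, complete positivity of $\varphi^t$ is equivalent to complete positivity of $\varphi$, and complete contractivity of $\varphi^t$ is equivalent to complete contractivity of $\varphi$. Chaining these three equivalences gives the desired biconditional.

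The only step that requires any genuine thought (as opposed to bookkeeping) is the automatic boundedness of completely positive trace-preserving maps on $T(H_1)$, which I expect to be the mildest obstacle; the rest is routine transposition. Since the automatic-boundedness fact is well known in the quantum channels literature, it is reasonable to either cite it or dispatch it in one sentence as above.
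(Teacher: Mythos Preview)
Your proposal is correct and follows exactly the route the paper takes: establish boundedness, pass to the transpose via Proposition~\ref{prop:hs-correspondence}, apply Proposition~\ref{prop:cc-cp-vn} to the now-unital map $\varphi^t$, and transport the equivalence back using parts (1) and (3). The paper's own proof is a two-line summary of precisely this argument (``In both cases $\varphi$ is a bounded map. The proof follows by combining Proposition~\ref{prop:cc-cp-vn} and Proposition~\ref{prop:hs-correspondence}.''), so your write-up is, if anything, more explicit than the original.
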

\begin{proof}
  In both cases $\varphi$ is a bounded map. The proof follows by combining
  Proposition \ref{prop:cc-cp-vn} and Proposition \ref{prop:hs-correspondence}.
\end{proof}

This shows that first-order quantum operations, i.e. CPTP/NCPU maps in the
\Schrod{}/Heisenberg picture, are indeed complete contractions and therefore
morphisms in $\OS.$ Indeed, these operations are defined in $\OS$ in the
usual way. We showcase some operations in the \Schrod{} picture
\begin{align*}
  \mathrm{state}_\rho & \colon \mathbb C \to T(H)            :: a \mapsto a \rho \\
  \mathrm{apply}_u    & \colon T(H) \to T(H)                 :: t \mapsto u t u^\dagger \\
  \mathrm{meas}       & \colon T(\ell^2(X)) \to T(\ell^2(X)) :: t \mapsto \sum_{x \in X} \ket x \bra{x} t \ket{x} \bra x
\end{align*}
and the corresponding maps in the Heisenberg picture, when defined in the
standard way, are also morphisms in
$\OS$. The CPTP map $\mathrm{state}_\rho$ prepares a mixed quantum state
determined by the density operator $\rho \in T(H)$, i.e. a positive operator
$\rho \geq 0$ with $\trace(\rho) = 1.$ The CPTP map $\mathrm{apply}_u$
describes the unitary evolution of the system (in the mixed paradigm) with
respect to a unitary operator $u \colon H \to H$. The CPTP map $\mathrm{meas}$
performs a measurement in the obvious basis determined by the set $X$. Other
kinds of measurements are also possible, but for simplicity, we elide them from
our presentation. If $X = \{0, 1\}$, then we
recover the standard measurement in the computational basis of a qubit.
If $X = \mathbb N$, then the measurement outcome may
have infinite support.

\subsection{Interaction between Pure and Mixed Quantum Primitives}

In order to gain some intuition as to why we can describe pure quantum
information within $\OS$, recall that from Example \ref{ex:bounded-hilb}, we
know that if $H$ and $K$ are Hilbert spaces, then $B(H,K)$ has a canonical OSS.
Furthermore, it has been observed that every Hilbert space $H$ may be equipped
with an OSS itself \cite[Section 3.4]{effros-ruan}. In particular, the obvious
isometry
\[ H \cong B(\mathbb C, H) \]
and the OSS of the latter space can be used to equip $H$ with an OSS. We write
$H_c$ for this OSS on $H$ and say that $H_c$ is the \emph{column Hilbert
operator space} determined by $H$ \cite[Section 3.4]{effros-ruan}.
One can then show that we have a completely isometric isomorphism 
\begin{equation}
  \label{eq:hilbc-os-smcc}
  B(H,K) \cong \CB(H_c, K_c) \qquad \text{\cite[Theorem 3.4.1]{effros-ruan}}
\end{equation}
and by restricting to the unit balls we see that
\begin{equation}
  \label{eq:hilbc-os}
  \Hilbc(H,K) \cong \OS(H_c, K_c)
\end{equation}
where $\Hilbc$ is the category whose objects are Hilbert spaces and whose
morphisms are the linear contractions between them. This category is
accessible \cite[pp. 70]{lp-categories-book}, has many other important
categorical properties \cite{hilb-contractions} and the category is clearly
relevant for the study of pure quantum information. From \eqref{eq:hilbc-os}
we see that we may identify $\Hilbc$ with a \emph{full} subcategory of $\OS$.
Note that $\Hilbc$ is not monoidal
closed and from \eqref{eq:hilbc-os-smcc} we see that the internal hom of $\OS$
gives a natural way to ameliorate this. From \eqref{eq:hilbc-os-smcc} we also
see that the higher-order structure of $\OS$ behaves very well with respect to
$\Hilbc$ and with respect to pure quantum primitives.

Keeping the above intuition in mind, let us now explain how the exponential $S$
can be used. Let $H$ and $K$ be Hilbert spaces and consider the following
\emph{non-linear} functions:
\begin{align*}
  & \Uadjoint \colon B(H,K) \to B(K,H) \\
  & \Uadjoint(f) \eqdef f^\dagger \\
  & \Uctrl \colon B(H) \to B(\mathbb C^2 \htimes H) \\
  & \Uctrl( f ) \eqdef (\ket 0 \bra 0 \otimes \id) + (\ket 1 \bra 1 \otimes f) \\
  & \Uapply \colon B(H,K) \to \CB( T(H), T(K)) \\
  & \Uapply(f) \eqdef f(\cdot)f^\dagger
\end{align*}
where we write $\htimes$ for the tensor product of Hilbert spaces.
The action of $\htimes$ on elementary tensors (e.g.
$(\ket 0 \bra 0 \otimes \id)$) is consistent with the algebraic tensor product
$\otimes$, so the notation above is appropriate. The function $\Uadjoint$
returns the Hermitian adjoint of a bounded linear map and $\Uctrl$ returns a
``controlled'' version of a bounded linear map. Both $\Uadjoint$ and $\Uctrl$
should be understood as acting on pure quantum primitives and their use in
pure quantum computation is ubiquitous -- these functions are very often
implicitly used in the description of quantum algorithms and quantum
circuits. The function $\Uapply$ takes a bounded linear map $f$ (pure quantum
primitive) and returns the completely-positive map (mixed quantum primitive)
which performs essentially the same operation as $f$, but in the sense of mixed
state computation.

The (co)domains of these higher-order functions are indeed operator spaces, as
we explained above, but the functions are not linear in the $f$ argument. In
order to deal with the problem of non-linearity, we can use the $S$ exponential
from adjunction \eqref{eq:s-adjunction} and we can show that it allows us to
\emph{linearise} certain kinds of non-linear functions.
From the universal property of \eqref{eq:s-adjunction}
\begin{equation}
  \label{eq:set-os}
\OS(SX, Y) \cong \Set(\Ball(X), \Ball(Y))
\end{equation}
we see that any function
$\underline g \colon \Ball(X) \to \Ball(Y)$ uniquely
determines a linear complete contraction $g \colon SX \to Y$ and vice-versa.
Note that the function $\underline g$ is not required to be linear.
The functions $\Uadjoint, \Uctrl,$ and $\Uapply$ (co)restrict to the unit
balls of the corresponding operator spaces (see Appendix
\ref{app:quantum-primitives}) and this allows us to define linear complete
contractions
\begin{align*}
  & \adjoint \colon S(B(H,K)) \to B(K,H) \\
  & \ctrl \colon S(B(H)) \to B(\mathbb C^2 \htimes H) \\
  & \apply \colon S(B(H,K)) \to \CB( T(H), T(K))
\end{align*}
which are now morphisms in $\OS.$ The adjunction \eqref{eq:set-os}
ensures that for any $x \in X$ with $\norm x \leq 1$, we have
\[ g(\delta_x) = \underline g(x) , \]
where $\delta_x \in S(X)$ is the element determined by the promotion (in the LL sense)
of $x$ in the usual manner -- in this case this is simply the vector whose
$x$-th component is $1$ and the rest are $0.$ Therefore, following the standard
exponential discipline in models of LL, we can work with and recover the action
of the three aforementioned functions, provided that the inputs have norm at
most one. Of course, up to normalisation, this allows us to recover the action
of these maps on the entire domains.

\subsection{Higher-order Maps with Superposition}

The \emph{quantum switch} is a higher-order map that has attracted considerable
interest \cite{quantum-switch}. It admits a natural definition within
$\OS$ as a linear complete contraction
\begin{align}
  & \qsw \colon B(H) \ptimes B(H) \to B(\mathbb C^2 \htimes H) \label{eq:qsw-type} \\
  & \qsw(f \otimes g) \eqdef (\ket 0 \bra 0 \otimes (fg) ) + (\ket 1 \bra 1 \otimes (gf) ) \label{eq:qsw-def}
\end{align}
and it appears to make essential use of superposition (the ``$+$'' in its definition).
Definition \eqref{eq:qsw-def} of $\qsw$ makes implicit use of Proposition \ref{prop:universal-tensor}
which uniquely determines the map. To understand why $\qsw$ is a complete
contraction, we have to verify that assignment \eqref{eq:qsw-def} is
bilinear and that it satisfies the required norm condition. This is the
case (Appendix \ref{app:quantum-primitives}), so $\qsw$ is
a higher-order map in $\OS$.

Operator spaces have a well-developed multilinear decomposition
theory \cite[\S 9.4]{effros-ruan} which allows us to approach difficult questions such as
whether a function makes essential use of superposition or not.
This can be achieved through the use of a different tensor product, called the
\emph{Haagerup tensor product} \cite[9.2]{effros-ruan},
\cite[1.5.4]{blecher-merdy}, and we briefly sketch how this works, see Appendix
\ref{app:quantum-primitives} for more details.

If $X$ and $Y$ are operator spaces, the Haagerup tensor product
$X \hagtimes Y$ is again an operator space such that the
algebraic tensor product $X \otimes Y$ is dense in $X \hagtimes Y.$
By using standard results, it is straightforward to prove that
$\OS$ becomes a monoidal category when equipped with $(\cdot \hagtimes \cdot)$
as tensor product. If $v \in \mathbb M_n(X \otimes Y)$, then
\[ \norm{v}_h \leq \norm{v}_{\wedge} \qquad \cite[\textrm{Proposition }1.5.13]{blecher-merdy} \]
where $\norm{\cdot}_h$ is the associated norm on $M_n(X \hagtimes Y)$.
It follows that there is a complete contraction
\[
  \iota \colon X \ptimes Y \to X \hagtimes Y \qquad \cite[\textrm{Proposition 1.5.13}]{blecher-merdy}
\]
which coincides with the identity on elementary tensors and which determines the map.
This means that it is \emph{easier} to extend a linear map
$f \colon X \otimes Y \to Z$ to a complete contraction
$X \ptimes Y \to Z$ compared to a complete contraction
$X \hagtimes Y \to Z.$ 
Indeed, if we attempt to define the quantum switch with respect to the Haagerup tensor
\begin{align}
  & \qsw' \colon B(H) \hagtimes B(H) \to B(\mathbb C^2 \htimes H) \label{eq:qsw-type-hag} \\
  & \qsw'(f \otimes g) \eqdef (\ket 0 \bra 0 \otimes (fg) ) + (\ket 1 \bra 1 \otimes (gf) ) \label{eq:qsw-def-hag}
\end{align}
then we see that the map $\qsw'$ is \emph{not} completely contractive (see
Appendix \ref{app:quantum-primitives}) and therefore it is not a morphism
in $\OS.$ In other words, $\qsw$ does not factorise through $\iota.$
This is important, as far as multilinear decompositions are concerned, because
the Haagerup tensor enjoys the following property.

\begin{restatable}{proposition}{hagDecompose}
  \label{prop:hag-decompose}
  Given operator spaces $X_1$ and $X_2$ and Hilbert spaces $H$ and $L$,
  a linear map $\varphi \colon X_1 \hagtimes X_2 \to B(H, L)$ is
  completely bounded (completely contractive) iff there exists a Hilbert
  space $K$ and completely bounded (completely contractive) maps $\psi_1 \colon X_1 \to B(K, L)$ and $\psi_2 \colon X_2 \to B(H,K)$
  such that
  \[ \varphi(x_1 \otimes x_2) = \psi_1(x_1) \psi_2(x_2) \]
  and in this case one can choose $\psi_1$ and $\psi_2$ such that $\norm{\varphi}_{\mathrm{cb}} = \norm{\psi_1}_{\mathrm{cb}} \norm{\psi_2}_{\mathrm{cb}} . $
\end{restatable}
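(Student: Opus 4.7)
The plan is to prove the two implications separately. The reverse direction follows from a routine computation using the definition of the Haagerup norm, while the forward direction is essentially a restatement of the Christensen--Effros--Sinclair / Paulsen--Smith representation theorem for completely bounded maps on the Haagerup tensor product, so I would appeal to that well-known result in the operator space literature \cite{effros-ruan,blecher-merdy}.

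For the easy direction, assume $\varphi(x_1 \otimes x_2) = \psi_1(x_1)\psi_2(x_2)$ for completely bounded $\psi_1$ and $\psi_2$. Recall that the Haagerup norm on $M_n(X_1 \otimes X_2)$ is given by
\[
\|v\|_h = \inf\{\|x\|\|y\| : v = x \odot y,\ x \in M_{n,p}(X_1),\ y \in M_{p,n}(X_2)\},
\]
where $\odot$ denotes the matrix product combined with the entrywise tensor product. Given such a factorization, the matrix $\varphi_n(v)$ is the operator-matrix product of $(\psi_1)_{n,p}(x) \in M_{n,p}(B(K,L))$ and $(\psi_2)_{p,n}(y) \in M_{p,n}(B(H,K))$, whose operator norm is bounded by $\|(\psi_1)_{n,p}(x)\| \, \|(\psi_2)_{p,n}(y)\| \leq \|\psi_1\|_{\mathrm{cb}}\|\psi_2\|_{\mathrm{cb}}\|x\|\|y\|$. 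Taking the infimum over all factorizations and then the supremum over $n$ yields $\|\varphi\|_{\mathrm{cb}} \leq \|\psi_1\|_{\mathrm{cb}}\|\psi_2\|_{\mathrm{cb}}$, which simultaneously handles the completely bounded and the completely contractive cases.

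For the hard direction, given a completely bounded $\varphi \colon X_1 \hagtimes X_2 \to B(H,L)$, the strategy is to use Ruan's theorem to represent the $X_i$ concretely in $B(H_i)$, apply a Wittstock--Paulsen Hahn--Banach-type extension to pass to a completely bounded bilinear map on the ambient $C^*$-algebras, and then Stinespring-dilate to obtain a common Hilbert space $K$ together with $*$-representation data from which one reads off $\psi_1 \colon X_1 \to B(K,L)$ and $\psi_2 \colon X_2 \to B(H,K)$ satisfying $\varphi(x_1 \otimes x_2) = \psi_1(x_1)\psi_2(x_2)$ and $\|\varphi\|_{\mathrm{cb}} = \|\psi_1\|_{\mathrm{cb}}\|\psi_2\|_{\mathrm{cb}}$. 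The main obstacle is precisely this construction of $K$ and of the two factoring maps: it depends essentially on the full Stinespring/Wittstock--Paulsen machinery and is nontrivial, so rather than reprove it I would defer to the textbook treatments in \cite{effros-ruan,blecher-merdy}, where this representation theorem is established in exactly the stated form.
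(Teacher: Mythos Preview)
Your approach is essentially the same as the paper's: both defer the hard (forward) direction for completely bounded maps to the Christensen--Effros--Sinclair / Paulsen--Smith representation theorem as recorded in \cite[Theorem 9.4.3]{effros-ruan}, and both treat the reverse direction as routine. You actually give more detail on the reverse direction than the paper does.

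There is one small omission. In the forward direction for the \emph{completely contractive} case, the textbook result only yields $\psi_1,\psi_2$ with $\|\psi_1\|_{\mathrm{cb}}\|\psi_2\|_{\mathrm{cb}} = \|\varphi\|_{\mathrm{cb}} \leq 1$, which does not by itself force each $\|\psi_i\|_{\mathrm{cb}} \leq 1$. The paper closes this gap with a one-line rescaling: if, say, $\|\psi_2\|_{\mathrm{cb}} > 1$, replace $(\psi_1,\psi_2)$ by $(\|\psi_2\|_{\mathrm{cb}}\,\psi_1,\ \psi_2/\|\psi_2\|_{\mathrm{cb}})$. You should include this step rather than asserting that the textbooks contain ``exactly the stated form''.
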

\begin{proof}
  The version of this proposition for completely bounded maps is proven in
  \cite[Theorem 9.4.3]{effros-ruan}. The version with the completely
  contractive maps follows as a simple corollary, see Appendix
  \ref{app:quantum-primitives}.
  \end{proof}

More generally, a complete contraction
$\varphi \colon X \ptimes Y \to B(H)$ admits a multilinear
decomposition like the one in Proposition \ref{prop:hag-decompose} iff
it factorises through $\iota$ as in the following diagram.
\cstikz{hag-factor.tikz}
The quantum switch $\qsw$ does not factorise as above and it does not
admit a suitable multilinear decomposition.
This observation gives further credence to the claim that
the superposition that is part of $\qsw$'s
definition appears to be essential.

Multilinear decompositions of maps, in the sense described here, are outside of
the usual remit of LL. Indeed, the Haagerup tensor cannot be
understood using LL alone. We discuss an extension of MLL
(with Mix) that would better match the Haagerup tensor in Section
\ref{sec:discuss}.

\section{Linear Logic and the Heisenberg-\Schrod{} Duality}
\label{sec:ll-hs}

The category $\OS$ is a model of ILL, but not of CLL, because the canonical
complete isometry $X \hookrightarrow X^{**}$ need not be an isomorphism. We now
explain how to build a model of CLL, based on $\OS$, whose duality is
compatible with the Heisenberg-\Schrod{} duality of quantum theory. A
distinguishing feature of our model is that the polarised multiplicative
connectives of LL correspond to two different tensor products that describe
spacewise system composition in the Heisenberg and \Schrod{} pictures,
respectively, and this works in infinite dimensions as well.

In two papers \cite{barr-autonomous-ll,barr-accessible-ll}, Barr showed how the
Chu construction \cite{Chu} can be used to build a model of CLL provided that
some strong conditions are satisfied. We show how this can be applied to
$\OS.$

\begin{definition}
  Let $\QQ$ be the category given by the following data:
  objects are triples $(X,Y,d)$ where $X$ and $Y$ are operator spaces and $d \colon X \ptimes Y \to \mathbb C$
  is a complete contraction; a morphism is a pair $(f, g) \colon (X_1, Y_1, d_1) \to (X_2, Y_2, d_2)$
  where $f \colon X_1 \to X_2$ and $g \colon Y_2 \to Y_1$ are complete contractions such that the following diagram
  \begin{equation}
    \label{eq:chu-def}
    \stikz{chu-def.tikz}
  \end{equation}
  commutes; composition
  and identities are defined as in $\OS.$ The category $\QQ$ is precisely the category $\mathbf{Chu}(\OS,\mathbb C)$,
  i.e. the Chu category of $\OS$ with dualising object given by $\mathbb C.$
\end{definition}

\begin{theorem}
  The category $\QQ$ is complete, cocomplete, $*$-autonomous, has a Lafont exponential
  and it is therefore a model of full linear logic.
\end{theorem}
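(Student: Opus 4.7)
The plan is to verify the four properties one by one, leaning heavily on the infrastructure already assembled for $\OS$: local countable presentability (Theorem~\ref{thm:os-locally-presentable}), completeness and cocompleteness (Proposition~\ref{prop:cocomplete}), and symmetric monoidal closure (Proposition~\ref{prop:os-smcc}). The $*$-autonomy of $\QQ$ is essentially immediate from the Chu construction: for \emph{any} symmetric monoidal closed category $\CC$ and any object $k \in \CC$, the Chu category $\mathbf{Chu}(\CC, k)$ is $*$-autonomous, with duality sending $(X,Y,d) \mapsto (Y,X,d \circ \sigma)$, dualising object built from $k$ and the unit of $\CC$, and tensor product defined via a pullback that uses the internal hom of $\CC$. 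Instantiating this construction with $\CC = \OS$ and $k = \mathbb C$ yields the $*$-autonomous structure on $\QQ$ at once.

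For completeness and cocompleteness I would invoke Barr's analysis of Chu categories \cite{barr-autonomous-ll,barr-accessible-ll}. When the base $\CC$ is complete, cocomplete and symmetric monoidal closed, so is $\mathbf{Chu}(\CC, k)$: limits of a diagram $(X_i, Y_i, d_i)$ are computed by taking the limit of the $X_i$ in the first coordinate and the colimit of the $Y_i$ in the second coordinate (with the pairing into $\mathbb C$ assembled using the universal properties together with the closed structure of $\CC$), and colimits dually. All of the ingredients we need are supplied by Propositions~\ref{prop:cocomplete} and~\ref{prop:os-smcc}.

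For the Lafont exponential the strategy mirrors the one used for $\OS$ itself. Barr's results moreover show that $\mathbf{Chu}(\OS, \mathbb C)$ inherits local presentability from $\OS$. Together with $*$-autonomy (hence in particular symmetric monoidal closure), Porst's theorems \cite{porst-comonoids} then apply to $\QQ$ in exactly the same manner as in the theorem cited at \eqref{eq:cocommutative-cofree}: the forgetful functor $\CoComon(\QQ) \to \QQ$ admits a right adjoint $R$ via the special adjoint functor theorem, and the induced comonad $! = U \circ R$ is the Lafont exponential on $\QQ$. Combining this with $*$-autonomy and (co)completeness gives all of the structure required to interpret full classical linear logic.

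The main technical hurdle I anticipate is the transfer of local countable presentability from $\OS$ to $\QQ$. Local presentability of $\QQ$ is what licences both the (co)limit calculations (via accessibility of the structural functors) and the use of the special adjoint functor theorem in the Lafont step, and so Theorem~\ref{thm:os-locally-presentable} is what ultimately makes the whole construction go through; once $\QQ$ is known to be locally presentable, every other piece of structure follows from standard Chu-categorical machinery.
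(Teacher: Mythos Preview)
Your treatment of $*$-autonomy and (co)completeness is fine and matches the paper's implicit reliance on the standard Chu machinery. The Lafont step, however, contains a genuine error. You claim that Barr's results show $\QQ = \mathbf{Chu}(\OS,\mathbb C)$ is itself locally presentable, and then apply Porst's theorem directly to $\QQ$. But $\QQ$ is \emph{not} locally presentable: the duality $(-)^\perp$ is an isomorphism $\QQ \cong \QQ^{\mathrm{op}}$, so if $\QQ$ were locally presentable then $\QQ^{\mathrm{op}}$ would be as well, and by the Gabriel--Ulmer result (see \cite[Theorem~1.64]{lp-categories-book}) a category that is locally presentable together with its opposite must be equivalent to a complete lattice. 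Since $\QQ$ is manifestly not a poset, this is impossible. Barr never claims local presentability of the Chu category, and the special adjoint functor theorem argument you sketch therefore does not apply to $\QQ$.

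The paper's route is different and avoids this trap: one first constructs the Lafont exponential on $\OS$ (where Porst's theorem \emph{does} apply, since $\OS$ is locally presentable and symmetric monoidal closed), and then invokes Barr's transfer theorem from \cite{barr-accessible-ll} to lift the cofree cocommutative comonoid construction along the embedding $\OS \hookrightarrow \QQ$. For that transfer Barr needs an additional hypothesis you do not mention, namely that the dualising object $\mathbb C$ is an \emph{internal cogenerator} in $\OS$; this is what guarantees the embedding is well-behaved enough to push the comonad through. So the fix is: keep your arguments for $*$-autonomy and (co)completeness, drop the claim that $\QQ$ is locally presentable, and instead cite the Lafont exponential on $\OS$ from \eqref{eq:cocommutative-cofree} together with the cogenerator property of $\mathbb C$ and Barr's transfer result.
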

\begin{proof}
  This follows immediately from \cite{barr-accessible-ll}, because $\OS$ is locally presentable, symmetric
  monoidal closed and the tensor unit $\mathbb C$ is an internal cogenerator.
  Note that the proof uses the Lafont exponential $! \colon \OS \to \OS$ from \eqref{eq:cocommutative-cofree}.
\end{proof}

We omit some of the technical details of how the relevant data is defined in
$\QQ$. Instead, we focus on aspects of $\QQ$ that have a clear
relevance to the Heisenberg-\Schrod{} duality.

The dual in $\QQ$, in the sense of $*$-autonomy, is defined as $(X,Y,d)^\perp \eqdef (Y, X, d \circ \sigma)$
on objects, where $\sigma \colon Y \ptimes X \xrightarrow{\cong} X \ptimes Y$ is the symmetry. On morphisms,
it is simply $(f,g)^\perp \eqdef (g,f).$ There is a canonical full and faithful embedding of $\OS$ into $\QQ$
which, up to natural isomorphism, and as a special case, allows us to pair up descriptions of quantum systems in the Heisenberg and \Schrod{} pictures
of quantum theory. More specifically, for a Hilbert space $H$, the bilinear form \eqref{eq:hs-type} is jointly complete contractive \cite[(7.1.12)]{effros-ruan}, hence it may be identified
with a complete contraction $\trace \colon T(H) \ptimes B(H) \to \mathbb C$. Then a pair
\begin{equation}
  \label{eq:hs-q}
  (f,g) \colon (T(H_1), B(H_1), \trace) \to (T(H_2),B(H_2),\trace)
\end{equation}
is a morphism in $\QQ$ iff $g = f^t$ in the sense of Proposition \ref{prop:hs-correspondence}.
To see this, observe that the universal property \eqref{eq:hs-dual-system} of the dual pairing that
determines the Heisenberg-\Schrod{} duality is equivalent to the commutativity of diagram \eqref{eq:chu-def}.
Therefore, in \eqref{eq:hs-q}, if $f$ is a CPTP map, then $g = f^t$, which is the corresponding NCPU map
in the Heisenberg picture and the duality $(-)^\perp$ in $\QQ$, in this case, can be equivalently understood
as applying the Heisenberg-\Schrod{} duality in the sense of
\eqref{eq:hs-duality}.

The arguments we presented in this
section, so far, apply equally well to the category $\Ban$.
However, multiple problems with $\Ban$ arise when we extend the scope of
discourse. First, transposition of matrices gives an isometric isomorphism
$(-)^T \colon M_n \cong M_n$ in $\Ban$, but this operation is \emph{not}
physically admissible when $n \geq 2$ and there is no clear way how to deal
with this problem using Banach spaces and the Chu construction. This map is not
completely contractive or completely positive when $n \geq 2$ which justifies
using methods from noncommutative geometry. Secondly, it is unclear how to
make the correspondence between the multiplicative conjunction/disjunction when using $\Ban$ (and
the Chu construction) and the appropriate way to compose systems in quantum theory.
However, when using $\OS$ and $\QQ$, this can be achieved.
Given Hilbert spaces $H_1$ and $H_2$,
we have
\begin{equation}
  \label{eq:ptimes-good}
  T(H_1) \ptimes T(H_2)    \cong     T(H_1 \htimes H_2)
\end{equation}
as operator spaces \cite[Proposition 7.2.1]{effros-ruan}. Quantum systems in the \Schrod{} picture
are indeed composed as in \eqref{eq:ptimes-good}. In the Heisenberg picture, they are composed
via the spatial tensor product of von Neumann algebras, and we explain how this arises in $\QQ.$
For objects of the form $(T(H), B(H), \trace)$, the monoidal structure of $\QQ$
allows us to recognise, in the sense of Polarised Linear Logic \cite{pll}, the
completely projective tensor $\ptimes$ as the multiplicative conjunction, and
the spatial tensor product $\stimes$ of von Neumann algebras\footnote{Viewed as dual operator spaces.} as the
multiplicative disjunction. To understand why that is the case, we recall that if $M$ and $N$
are von Neumann algebras, and if we write $M_*$ for the predual operator space of $M$ (guaranteed to exist),
then we have a completely isometric isomorphism
\[ (M \stimes N)_* \cong M_* \ptimes N_*  \qquad \text{\cite[Theorem 7.2.4]{effros-ruan}} \]
and by taking the dual of this isomorphism we see that
\begin{equation}
  \label{eq:spatial}
  M \stimes N \cong (M_* \ptimes N_*)^*
\end{equation}
which is a (well-known) way to describe the spatial tensor product, as far as
its structure as a dual operator space is concerned.

\begin{proposition}
  \label{prop:product-q}
  The monoidal product
  \[ (T(H_1), B(H_1), \trace) \otimes (T(H_2),B(H_2),\trace) \]
  in $\QQ$ is the object
  \[ (T(H_1) \ptimes T(H_2), B(H_1) \stimes B(H_2), \trace'), \]
  where $\trace'$ may be defined through the isomorphisms $T(H_1) \ptimes T(H_2) \cong T(H_1 \htimes H_2)$
  and $B(H_1) \stimes B(H_2) \cong B(H_1 \htimes H_2)$ and the canonical dual pairing on the latter spaces.
\end{proposition}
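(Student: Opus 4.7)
The plan is to compute the tensor product in $\QQ = \mathbf{Chu}(\OS, \mathbb C)$ directly from the standard Chu formula, then exploit the fact that the trace pairing is ``perfect'' in the sense that it induces the completely isometric isomorphism $B(H) \cong T(H)^*$ recalled in Section \ref{sub:hs-duality}. Recall that for objects $(A_i, B_i, d_i)$ of the Chu category, the tensor has first coordinate $A_1 \ptimes A_2$ and second coordinate the pullback $P$ of the two canonical maps $\CB(A_1, B_2) \to \CB(A_1 \ptimes A_2, \mathbb C)$ and $\CB(A_2, B_1) \to \CB(A_1 \ptimes A_2, \mathbb C)$, obtained by transposing $d_2$ and $d_1$ via the symmetric monoidal closed structure of $\OS$ from Proposition \ref{prop:os-smcc}. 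The pairing on the tensor is the evaluation induced by the canonical map $P \to \CB(A_1 \ptimes A_2, \mathbb C)$.

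Specialising to $A_i = T(H_i)$, $B_i = B(H_i)$, $d_i = \trace$, the map $B(H_i) \to \CB(T(H_i), \mathbb C) = T(H_i)^*$ transposing $d_i$ is exactly the completely isometric isomorphism $B(H_i) \cong T(H_i)^*$. Combining this with currying, both legs of the pullback become completely isometric isomorphisms onto $(T(H_1) \ptimes T(H_2))^*$; for example,
\[
\CB(T(H_1), B(H_2)) \cong \CB(T(H_1), T(H_2)^*) \cong \CB(T(H_1) \ptimes T(H_2), \mathbb C).
\]
Thus $P \cong (T(H_1) \ptimes T(H_2))^*$, provided one verifies that the two legs of the pullback agree up to the symmetry $T(H_2) \ptimes T(H_1) \cong T(H_1) \ptimes T(H_2)$; this is a formal diagram chase in a symmetric monoidal closed category.

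To match the statement, I would then chain two identifications. Using \cite[Proposition 7.2.1]{effros-ruan} together with $T(H)^* \cong B(H)$, we obtain $(T(H_1) \ptimes T(H_2))^* \cong T(H_1 \htimes H_2)^* \cong B(H_1 \htimes H_2)$. Using equation \eqref{eq:spatial}, we obtain $B(H_1) \stimes B(H_2) \cong (T(H_1) \ptimes T(H_2))^*$. Combining these yields $P \cong B(H_1) \stimes B(H_2) \cong B(H_1 \htimes H_2)$, and under these identifications the evaluation pairing $(T(H_1) \ptimes T(H_2)) \ptimes P \to \mathbb C$ becomes precisely the canonical trace pairing $\trace'$ described in the statement.

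The main obstacle is bookkeeping rather than conceptual: one must unfold the Chu tensor carefully and confirm that the two legs of the defining pullback really do collapse in the way indicated, and that the resulting pairing matches the canonical trace pairing under the chain of isomorphisms. Once this is done, each ingredient is either a standard fact about operator space duality or is already quoted in this section, so the proof reduces to concatenating these ingredients with an explicit unwinding of the Chu construction.
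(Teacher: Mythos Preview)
Your argument is correct and lands on exactly the same identifications the paper uses. The difference is one of packaging: the paper invokes \cite[Proposition 7.1]{barr-autonomous-ll} as a black box, which says directly that in any Chu category $(X,X^*,\mathrm{ev})\otimes(Y,Y^*,\mathrm{ev})\cong(X\ptimes Y,(X\ptimes Y)^*,\mathrm{ev})$, and then simply plugs in $T(H)^*\cong B(H)$ and \eqref{eq:spatial}. You instead unfold the Chu tensor by hand, observing that when the pairings $d_i$ transpose to isomorphisms $B_i\cong A_i^*$ both legs of the defining pullback become completely isometric isomorphisms onto $(A_1\ptimes A_2)^*$, so the pullback collapses. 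That is precisely the content of Barr's proposition in this special case, so you are effectively reproving it rather than citing it. Your route is more self-contained and makes the role of the perfect pairing explicit; the paper's is shorter and delegates the bookkeeping you flag as the ``main obstacle'' to Barr.
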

\begin{proof}
  Using \cite[Proposition 7.1]{barr-autonomous-ll} it follows immediately that
  \[ (X,X^*, \mathrm{ev}) \otimes (Y,Y^*, \mathrm{ev}) \cong (X \ptimes Y, (X \ptimes Y)^*, \mathrm{ev}) \]
  in $\QQ,$ where $\mathrm{ev}$ is given by evaluation.
  The proof follows using the isomorphisms $T(H)^* \cong B(H)$ and \eqref{eq:spatial}.
\end{proof}

The above proposition shows how to recover the spatial tensor product
(composition in the Heisenberg picture) from the completely projective one
(composition in the Schrödinger picture) via the Chu construction (coming from
the semantics of LL), possibly involving infinite-dimensional Hilbert spaces.
We think that this is an interesting result that lies in the intersection of
noncommutative geometry, von Neumann algebras, and semantics of LL. We hope
that this new insight can lead to a better understanding of the
Heisenberg-\Schrod{} duality from the point of view of LL and semantics.

Moving to the additives, the situation is completely analogous to the Banach
space case: the additive conjunction (disjunction) corresponds to the
$\ell^\infty$ $(\ell^1)$ direct sum, as already recognised by Girard
\cite{girard-cbs}, and the same can be said for $\QQ$, see also Proposition
\ref{prop:products} and Proposition \ref{prop:coproducts} which show that this
is how (co)products are constructed in $\OS$.

\begin{table}[t]
  \begin{center}
  \begin{tabular}{ |c|c|c| } 
    \hline
    \Schrod{} Picture & $\QQ$ & $\text{LL}_{+}$  \\
    \hline
    System description & $T(H_P)$ & $P$ \rule{0pt}{2ex} \\
    \hline
    & $T(H_P) \ptimes T(H_R) $ & \rule{0pt}{2.5ex} \\
    Quantum composition & $ \cong  $ & $P \otimes R$ \\
    & $ T(H_P \htimes H_R) $ &\\
   \hline
    Classical composition & $T(H_P) \loneplus T(H_R)$ & $P \oplus R$ \\ 
   \hline
  \end{tabular}
  \end{center}
  \caption{\Schrod{} picture and positive logical polarity.}
  \label{tab:schrod}
\end{table}

Let us explain how we can understand this situation using ideas from Polarised
Linear Logic. If we use $P,R$ to range over atomic formulas with positive
logical polarity, to which we associate Hilbert spaces $H_P,H_R$, the
operator spaces $T(H_P)$ and $T(H_R)$ contain the \emph{states} relevant to the \Schrod{} picture.
We use $N,M$ to range over atomic
formulas with negative logical polarity, to which we again associate Hilbert
spaces, and now the von Neumann algebras $B(H_N)$ and $B(H_M)$
contain the \emph{observables} relevant to the Heisenberg picture.
The Heisenberg-\Schrod{}
duality in physical contexts is often presented as the correspondence between
states and observables, hence the emphasis in the preceding sentences. Consider
Tables \ref{tab:schrod} and \ref{tab:heis}. By \emph{"Quantum composition"} we
mean the space-wise composition of quantum systems, where both quantum and
classical (i.e. non-quantum) interactions are possible between the two
subsystems. By \emph{"Classical composition"} we mean space-wise composition
where only classical interactions are possible between the two subsystems. In
order to understand intuitively why that is the case, observe that if we take
the idea of classical composition to the extreme, we have that
\[
  \ell^\infty(X) \cong \bigoplus_{X}^{\infty} \mathbb C
\]
is a \emph{commutative} von Neumann algebra, which is well-known to represent
classical information in the Heisenberg picture. Its predual $\ell^1(X)$
can be used to represent classical information in the \Schrod{} picture.

In fact, since von Neumann algebras are closed under $\ell^\infty$ direct sums,
and their preduals under $\ell^1$ direct sums, Tables \ref{tab:schrod} and
\ref{tab:heis} can be summarised (and generalised) by saying: von Neumann
algebras (viewed as dual operator spaces) correspond to formulas with negative
logical polarity and their preduals correspond to formulas with positive
logical polarity. This further reinforces the view that we suggested: 
\begin{align}
  \text{\Schrod{} picture }  &\Rightarrow \text{ positive logical polarity} \label{eq:schrod} \\
  \text{Heisenberg picture } &\Rightarrow \text{ negative logical polarity} \label{eq:heis}
\end{align}

\begin{table}[t]
  \begin{center}
  \begin{tabular}{ |c|c|c| }
    \hline
    Heisenberg Picture & $\QQ$ & $\text{LL}_{-}$ \\
    \hline
    System description & $B(H_N)$ & $N$ \rule{0pt}{2ex} \\
    \hline
    & $B(H_N) \stimes B(H_M) $ & \rule{0pt}{2.5ex} \\
    Quantum composition & $\cong$ & $N \parr M$ \\
    & $ B(H_N \htimes H_M) $ & \\
    \hline
    Classical composition & $B(H_N) \linftyplus B(H_M)$ & $N \aconj M$ \\
   \hline
  \end{tabular}
  \end{center}
  \caption{Heisenberg picture and negative logical polarity.}
  \label{tab:heis}
\end{table}

\section{Discussion and Related Work}
\label{sec:discuss}

We gave strong evidence in support of the view that the Heisenberg-\Schrod{}
duality can be studied via methods from linear logic. However, much work
remains to be done. Tables \ref{tab:schrod} and \ref{tab:heis} show that we get
the duality in a reasonable sense on the object level, but \textbf{CPTP} (or \textbf{NCPU}) is
not a \emph{full} subcategory of $\OS$ or $\QQ.$ However, we do have a faithful
embedding and the duality in $\QQ$ does indeed coincide with
\eqref{eq:hs-duality} for those morphisms that are CPTP/NCPU.
Another class of morphisms that are of interest are the completely positive
trace-non-increasing (CPTNI) maps $T(H_1) \to T(H_2)$ and the normal completely
positive subunital (NCPSU) maps $B(H_2) \to B(H_1)$. These classes of maps are
also important, especially if we want to model general recursion, and the
Heisenberg-\Schrod{} duality \eqref{eq:hs-duality} can be extended to provide a
bijective correspondence between them, so we get another isomorphism of
categories $\CPTNI \cong \NCPSU^{\mathrm{op}}.$ Just like $\CPTP$ ($\NCPU$), we
have a faithful, but not full, embedding of $\CPTNI$ ($\NCPSU$) into $\OS$ and
$\QQ.$ This means that there are undesirable maps in the models from the point
of view of mixed state quantum computation.

It is important to restrict the relevant homsets in the models to CPTP/NCPU (or
CPTNI/NCPSU if recursion is desired) and we leave that for future work. There
are established techniques in the literature that can potentially allow us to
restrict to the desired maps, e.g. methods based on gluing and orthogonality
\cite{hyland-schalk} show how to restrict the homsets of models of ILL and CLL
while still retaining the ILL and CLL structure. This is indeed promising and
these ideas have already been used to similar effect in a quantum setting
(without exponentials) in \cite{aleks-sander,quantum-caus-bv} where the authors
restrict a model based on CP maps in such a way that the restricted category
has a fully faithful embedding of $\CPTP.$ Note that, if we would add
trace-preservation (TP), as in \cite{aleks-sander,quantum-caus-bv}, on the
relevant homsets of our models, then Proposition \ref{prop:cc-cp-tp} implies
that the maps would be CPTP. Restricting to CPTNI/NCPSU (for recursion) is also
future work, where we hope to use similar ideas. What is more challenging is to
restrict the models in such a way that the noncommutative structure of operator
spaces is preserved to a sufficient degree that allows us to talk about the
Haagerup tensor, but it is not clear what is the best way to do so yet.

The isomorphism $T(H_1) \ptimes T(H_2) \cong T(H_1 \htimes
H_2)$ suggests that the completely projective tensor is well-behaved in the
\Schrod{} picture. Such an isomorphism is not known to hold for the projective
tensor of Banach spaces (to the best of our knowledge). A category defined by
Selinger \cite{selinger-towards}, based on ``normed cones'', was shown to be
$*$-autonomous, but its monoidal product is unsatisfactory for quantum
computation. Intuitively, the reason $\ptimes$ behaves better is because it
takes into account the \emph{noncommutative} structure and so it
behaves more appropriately with respect to the amplifications of operators.

The noncommutative structure of $\OS$ is essential for the description of the
Haagerup tensor and its interesting properties (\secref{sec:pure-mixed}).
This tensor is \emph{not} symmetric and for \emph{finite-dimensional} operator
spaces we have $X^* \hagtimes Y^* \cong (X \hagtimes Y)^*$ \cite[pp.
169]{effros-ruan}, thus mimicking closely the non-symmetry and self-duality of
the \emph{seq} connective from BV-logic \cite{bv-logic}. In fact, Thea Li
proved \cite{thea-internship} that the subcategory of \emph{finite-dimensional}
operator spaces is a BV-category \cite{bv-category} with respect to the
Haagerup tensor, and a model of MALL, but unsurprisingly, it appears there is
no way to model exponentials there. BV-logic does not admit a sequent calculus
presentation \cite{bv-logic-sequent} and there is no proof in the literature
that a BV-category is a \emph{sound} model of BV-logic, but some of the
important structure is indeed captured. For (infinite-dimensional) operator
spaces in $\OS$, we have a complete isometry $X^* \hagtimes Y^* \hookrightarrow
(X \hagtimes Y)^*$ \cite[pp. 168]{effros-ruan} and it would be interesting to
investigate how other tensors related to the Haagerup one (e.g. extended
Haagerup tensor) \cite{hopf-operator} would behave in $\QQ$ and any potential
connections to BV-logic. Connections between BV-categories and ``causality''
have already been studied in a quantum context
\cite{aleks-sander,quantum-caus-bv,caus3} with respect to completely-positive
maps. Note that $\OS$ is \emph{not} a BV-category, but the Haagerup tensor can
be used to reason about multilinear decompositions of complete contractions.

The category $\QQ$ clearly has a resemblance to the Coherent Banach
Spaces of Girard \cite{girard-cbs}. One of the major differences is that we work in a
noncommutative (or quantum) setting which complicates the mathematical
development. It would be interesting to say more about this relationship,
especially with respect to the $\Min$ quantisation functor and Corollary
\ref{cor:ban-os}. Unfortunately, we lack a convenient description of the Lafont exponential
based on constructions from operator space theory. Such a
description would bring us closer to being able to add an extra row to Tables
\ref{tab:schrod} and \ref{tab:heis} with the missing exponentials and being
able to add the inverse implications in \eqref{eq:schrod} and \eqref{eq:heis},
hopefully relating this to existing constructions in the operator space literature.

We focused on models, but designing appropriate logics/type systems is also
relevant. Full abstraction results for quantum lambda calculi are already known
and include models based on game semantics
\cite{quantum-game,qlc-full-abstraction,marc-phd}, enriched presheaves
\cite{qfpc}, and quantitative semantics of linear logic \cite{qlc-quantitative}
(full abstraction proven later in \cite{qlc-full-abstraction,marc-phd}). We
hope that our work can also contribute towards such results, and also language
design, after restricting to CPTNI/NCPSU maps. Note that the model in
\cite{qfpc} also starts with a locally presentable model of ILL out of which
the authors carve out a model of CLL.

Note that if one succeeds in building a categorical model of CLL whose duality coincides with
the Heisenberg-\Schrod{} duality, together with a fully faithful strong symmetric monoidal embedding
of $\CPTP/\NCPU$ (or $\CPTNI/\NCPSU$), this would be very natural from a quantum perspective, because many of the (polarised) formulas admit natural descriptions in quantum theory,
as we already showed in Tables \ref{tab:schrod} and \ref{tab:heis}. Furthermore, since the
linear implication in CLL is decomposed as $A \multimap B \equiv A^\perp \parr B$, this suggests
that the higher-order structure would also be induced in a natural way.
More generally, something that distinguishes our approach to semantics from the cited
related works, is that we use mathematics that is more traditional
to quantum theory in order to describe the higher-order structure of the
models.

To conclude, our paper highlights interesting connections between linear logic
and quantum theory by using results from noncommutative geometry
and operator spaces. We hope that our work can serve as a foundation and
building block for further semantic developments related to quantum theory.

\vspace{1mm}
\noindent\textbf{Acknowledgements.} We thank the anonymous reviewers for their
feedback which led to multiple improvements of the paper. We also thank James
Hefford, Timothée Hoffreumon, Anna Jenčová, Thea Li, Jean-Simon Pacaud Lemay,
Jiří Rosický, and Matthew Wilson for discussions and/or useful feedback. This
work has been partially funded by the Austrian Science Fund (FWF) under the
research project PAT6443523 (\href{https://doi.org/10.55776/PAT6443523}{DOI: 10.55776/PAT6443523}) and by the  French
National Research Agency (ANR) within the framework of ``Plan France 2030'',
under the research projects EPIQ ANR-22-PETQ-0007, HQI-Acquisition
ANR-22-PNCQ-0001, HQI-R\&D ANR-22-PNCQ-0002, and also by CIFRE 2022/0081.

\newpage

\bibliographystyle{IEEEtran}
\bibliography{refs}

\newpage
\onecolumn
\appendices

\newpage
\section{Omitted Proofs from Section \ref{sec:os}}
\label{app:os}

In this section we provide more detailed proofs of some propositions from \secref{sec:os}.

\subsection{The Heisenberg-\Schrod{} Duality}
\label{app:hs}

\hsCorrespond*
\begin{proof}
For any Banach space $X$, there is an isometry $X\to X^{**}$ \cite[2.3.7]{pedersen:analysisnow}, whence the assignment $\psi\mapsto\psi^t$ is injective. The bijective correspondence between bounded linear maps $\psi:T(H_1)\to T(H_2)$ and normal linear maps now follows from \cite[Proposition 2.4.12]{pedersen:analysisnow}. 
For (1), positivity of $\psi^t$ implies positivity of $\psi$ by \cite[Proposition 1.4.2.(i)]{stormer}, whose proof is based on the observation for any Hilbert space $K$ that $b\in B(K)$ is positive if and only if $\mathrm{tr}(xb)\geq 0$ for each $x\in T(K)$. Now, assume that $\psi$ is positive. Let $b\in B(H_2)$ be positive. Then for each positive $x\in T(H_1)$, we have that $\psi(x)\in T(H_2)$ is positive, hence $\mathrm{tr}(\psi(x)b)$ is positive. So $\mathrm{tr}(x\psi^t(b))$ is positive. Since $x$ is arbitrary, it follows from the observation that $\psi^t(b)$ is positive. The statement for complete positivity follows from \cite[Section 4.1.2]{HeinosaariZiman}.
The proof of (2) is straightforward, and is exactly as in \cite[Proposition 1.4.2]{stormer}, even though the setting there is slightly different. 
It is shown in \cite[Section 2.3]{pisier} that $\psi$ is completely bounded if and only if $\psi^*$ (and equivalently, $\psi^t)$ is completely bounded, in which case their cb norms coincide. Hence, (3) follows.  
\end{proof}

\subsection{Proofs related to pure and mixed quantum primitives in $\OS$}
\label{app:quantum-primitives}

We start this section with some conventions. Given a bounded operator $f:H\to K$ between Hilbert spaces, we denote its Hilbert space adjoint by $f^\dag:K\to H$. Note that $\|f^\dag\|=\|f\|$ \cite[Theorem 4.14]{Weidmann}. We denote the tensor product of two Hilbert spaces $H$ and $K$ by  $H\htimes K$, whose inner product is determined by its action on elementary tensors in the following way:
\begin{equation}\label{eq:inner product tensor product}
    \langle h_1\otimes k_1,h_2\otimes k_2\rangle = \langle h_1,h_2\rangle\langle k_1,k_2\rangle.
\end{equation} 
As a consequence, we have $\|h\otimes k\|=\|h\|\|k\|$ for each $h\otimes k\in H\otimes K$, so the norm on $H\htimes K$ is a cross norm.

Furthermore, given Hilbert spaces $H_1, H_2,\ldots, H_n$, the inner product on $\bigoplus_{i=1}^nH_i$ is defined by 
\[\langle (h_1,\ldots,h_n),(h_1',\ldots, h_n')\rangle = \sum_{i=1}^n\langle h_i,h_i'\rangle,\]
whence 
\begin{equation}\label{eq:norm inner product Hilbert sum}
\|(h_1,\ldots,h_n)\|^2=\sum_{i=1}^n\|h_i\|^2    
\end{equation}
 for each $(h_1,\ldots,h_n)\in\bigoplus_{i=1}^nH_i$. We write $H^{\oplus n}\eqdef \bigoplus_{i=1}^nH$ for each Hilbert space $H$.

We denote the standard orthonormal basis of $\mathbb C^2$ by $\{\ket 0,\ket 1\}$. The orthogonal projections on $\ket 0$ and $\ket 1$ are denoted by $p_0$ and $p_1$, respectively. In the notation of physics, we have $p_0=\ket 0\bra 0$ and $p_1=\ket 1\bra 1$. Note that $p_i\ket i=\ket i$. The inner product of $\ket i$ and $\ket j$ is denoted by $\langle i|j\rangle$. Given a vector $h$ in another Hilbert space $H$, we write sometimes $\ket{i,h}$ instead of $\ket i\otimes h\in \mathbb C^2\htimes H$.

\begin{lemma}\label{lem:vectors in C2otimesH}
    Let $H$ be a Hilbert space. Then each $k\in\mathbb C^2\htimes H$ can be written as $k=\ket 0\otimes h_0+\ket 1\otimes h_1$ for some $h_0,h_1\in H$. Moreover, we have
\[\|k\|^2 = \|h_0\|^2+\|h_1\|^2.\]
\end{lemma}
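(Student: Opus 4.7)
The plan is to exhibit a Hilbert space isomorphism $\phi \colon H \oplus H \to \mathbb{C}^2 \htimes H$ defined on pairs by
\[
  \phi(h_0, h_1) \eqdef \ket 0 \otimes h_0 + \ket 1 \otimes h_1 ,
\]
and then read off both the existence of the decomposition and the norm identity from the fact that $\phi$ is a surjective isometry.

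First I would check that $\phi$ is a well-defined isometric linear map. Linearity is immediate from the bilinearity of the tensor product. For the isometry property, expanding the inner product and applying the formula \eqref{eq:inner product tensor product} gives
\[
  \langle \ket i \otimes h_i, \ket j \otimes h_j \rangle = \langle i \mid j \rangle \, \langle h_i, h_j \rangle,
\]
which vanishes for $i \neq j$ since $\{\ket 0, \ket 1\}$ is orthonormal. Summing the four cross terms and using \eqref{eq:norm inner product Hilbert sum} yields
\[
  \|\phi(h_0, h_1)\|^2 = \|h_0\|^2 + \|h_1\|^2 = \|(h_0, h_1)\|^2,
\]
as required.

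Next I would show that $\phi$ is surjective, which supplies the existence of the decomposition. The image of $\phi$ contains every elementary tensor in $\mathbb{C}^2 \otimes H$: for $c \in \mathbb{C}$ and $h \in H$, the element $(c_0 \ket 0 + c_1 \ket 1) \otimes h$ equals $\phi(c_0 h, c_1 h)$. By linearity, the image therefore contains the algebraic tensor product $\mathbb{C}^2 \otimes H$, which is dense in $\mathbb{C}^2 \htimes H$ by definition of the Hilbert space tensor product. On the other hand, since $\phi$ is an isometry defined on the complete space $H \oplus H$, its image is closed in $\mathbb{C}^2 \htimes H$. Combining closedness with density forces the image to be all of $\mathbb{C}^2 \htimes H$, so $\phi$ is surjective and hence an isometric isomorphism.

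Given an arbitrary $k \in \mathbb{C}^2 \htimes H$, surjectivity of $\phi$ yields $h_0, h_1 \in H$ with $k = \phi(h_0, h_1) = \ket 0 \otimes h_0 + \ket 1 \otimes h_1$, and the isometry property then gives $\|k\|^2 = \|h_0\|^2 + \|h_1\|^2$. There is no real obstacle here; the only slightly delicate step is the closed-plus-dense argument used to get surjectivity, but this is standard once $\phi$ is known to be an isometry from a complete space.
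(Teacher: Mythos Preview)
Your proof is correct. The norm computation is essentially identical to the paper's: both use that $\ket 0\otimes h_0$ and $\ket 1\otimes h_1$ are orthogonal, via the inner-product formula \eqref{eq:inner product tensor product}, and then apply Pythagoras together with the cross-norm property.

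Where you differ is in the existence of the decomposition. The paper chooses an orthonormal basis $(f_\alpha)$ of $H$, observes that the vectors $\ket i\otimes f_\alpha$ form an orthonormal basis of $\mathbb C^2\htimes H$, and reads off $h_0,h_1$ directly from the basis expansion of $k$. You instead set up the linear isometry $\phi\colon H\oplus H\to\mathbb C^2\htimes H$ and argue surjectivity by ``image is closed (isometry from a complete space) and dense (contains the algebraic tensor product)''. Your route is basis-free and packages the two conclusions into a single isomorphism statement, which is arguably cleaner; the paper's route is more concrete and makes the coefficients $h_i$ explicit. Both are standard and equally short.
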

\begin{proof}
Let $(f_\alpha)_{\alpha\in A}$ be an orthonormal basis for $H$. Then elementary tensors of the form $\ket i\otimes f_\alpha$ form an orthonormal basis for $\mathbb C^2\htimes H$. Hence, we can write $k=\sum_{\alpha\in A}\lambda_{0,\alpha}\ket 0\otimes f_\alpha+\lambda_{1,\alpha}\ket 1\otimes f_\alpha$ for some $\lambda_{i,\alpha}\in\mathbb C$. Then indeed $k=\ket 0\otimes h_0+\ket 1\otimes h_1$ with $h_i\eqdef\sum_{\alpha\in A}\lambda_{i,\alpha}f_\alpha$.
Since $\ket 0$ and $\ket 1$ form an orthonormal basis for $\mathbb C^2$, we have $\langle  0| 1\rangle=0$. 
It now follows from equation (\ref{eq:inner product tensor product}) that $\langle   {0,h_0}|1, h_1\rangle=0$, where $\ket{i,h_i}\eqdef\ket i\otimes h_i$. Using Parseval's identity and the fact that the norm on $\mathbb C^2\htimes H$ is a cross norm yields
\[ \|k\|^2 =\|\ket 0\otimes h_0\|^2+\|\ket 1\otimes h_1\|^2=\|\ket 0\|^2\|h_0\|^2+\|\ket 1\|^2\|h_1\|^2=\|h_0\|^2+\|h_1\|^2. \qedhere\]
\end{proof}

\begin{proposition}
    Let $H$ and $K$ be Hilbert spaces. Then we can restrict and corestrict the following functions to the unit balls of their domains and codomains:
\begin{align*}
  & \Uadjoint \colon B(H,K) \to B(K,H) \\
  & \Uadjoint(f) \eqdef f^\dagger \\
  & \Uctrl \colon B(H) \to B(\mathbb C^2 \htimes H) \\
  & \Uctrl( f ) \eqdef (\ket 0 \bra 0 \otimes \id) + (\ket 1 \bra 1 \otimes f) \\
  & \Uapply \colon B(H,K) \to \CB( T(H), T(K)) \\
  & \Uapply(f) \eqdef f(\cdot)f^\dagger.
\end{align*}
\end{proposition}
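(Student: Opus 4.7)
The plan is to handle each of the three functions separately, in increasing order of difficulty, verifying in each case that an operator of norm at most one is sent to an operator of norm at most one (and, for $\Uapply$, a completely contractive one).

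For $\Uadjoint$, the statement is immediate: the Hilbert space adjoint satisfies $\|f^\dagger\| = \|f\|$ (recalled at the top of the appendix), so if $\|f\| \leq 1$ then $\|\Uadjoint(f)\| = \|f^\dagger\| \leq 1$.

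For $\Uctrl$, I would use Lemma~\ref{lem:vectors in C2otimesH} to write an arbitrary $k \in \mathbb C^2 \htimes H$ as $k = \ket 0 \otimes h_0 + \ket 1 \otimes h_1$ with $\|k\|^2 = \|h_0\|^2 + \|h_1\|^2$. Then $\Uctrl(f)(k) = \ket 0 \otimes h_0 + \ket 1 \otimes f(h_1)$, and applying the lemma again gives
\[ \|\Uctrl(f)(k)\|^2 = \|h_0\|^2 + \|f(h_1)\|^2 \leq \|h_0\|^2 + \|f\|^2 \|h_1\|^2 \leq \|k\|^2 \]
whenever $\|f\| \leq 1$, so $\|\Uctrl(f)\| \leq 1$ as required.

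For $\Uapply$, the main obstacle is that we need \emph{complete} contractivity of $\Uapply(f) \colon T(H) \to T(K)$ when $\|f\| \leq 1$, not just contractivity. First, at the ground level, one checks that $\Uapply(f)(x) = fxf^\dagger$ lies in $T(K)$ with $\|fxf^\dagger\|_{T(K)} \leq \|f\|\,\|x\|_{T(H)}\,\|f^\dagger\| \leq \|x\|_{T(H)}$ using the standard two-sided ideal bound on trace-class norms. For complete contractivity, I would pass to amplifications via the canonical completely isometric identification $M_n(T(H)) \cong T(H^{\oplus n})$ from Example~\ref{ex:bounded-hilb}-style reasoning. Under this identification the $n$-th amplification $(\Uapply(f))_n \colon M_n(T(H)) \to M_n(T(K))$ corresponds to $y \mapsto \hat f\, y\, \hat f^\dagger$, where $\hat f \eqdef \id_{\mathbb C^n} \otimes f \colon H^{\oplus n} \to K^{\oplus n}$ and $\hat f^\dagger$ its Hilbert adjoint. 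Since $\|\hat f\| = \|f\| \leq 1$ (and similarly for the adjoint), the trace-class ideal bound applied to $\hat f$ gives $\|\hat f y \hat f^\dagger\|_{T(K^{\oplus n})} \leq \|y\|_{T(H^{\oplus n})}$, uniformly in $n$, so $\|\Uapply(f)\|_{\cb} \leq 1$. Thus $\Uapply$ restricts and corestricts to the unit balls. The identification $M_n(T(H)) \cong T(H^{\oplus n})$ (as operator spaces) and the behaviour of the action $y \mapsto \hat f y \hat f^\dagger$ under this identification are the facts I would cite or check, and this is where essentially all of the technical content lies.
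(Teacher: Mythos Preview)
Your treatment of $\Uadjoint$ and $\Uctrl$ is correct and matches the paper exactly.

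For $\Uapply$, however, there is a genuine gap: the identification $M_n(T(H)) \cong T(H^{\oplus n})$ that you invoke is \emph{false}. Take $H = \mathbb C$. Then $T(H) = \mathbb C$ with its unique OSS, so $M_n(T(H)) = M_n$ carries the operator norm, whereas $T(H^{\oplus n}) = T(\mathbb C^n) = T_n$ carries the trace norm; these are not isometric for $n \geq 2$. The analogy with Example~\ref{ex:bounded-hilb} breaks down precisely because the operator space structure on $T(H)$ is defined dually (via $T(H) \hookrightarrow B(H)^*$), not via an ``$M_n(T(H)) = T(H^{\oplus n})$''-type rule. So your amplification computation does not go through as stated.

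The paper sidesteps this by passing to the transpose: using the dual pairing and the cyclic property of the trace, one computes $\Uapply(f)^t = f^\dagger(\cdot)f \colon B(K) \to B(H)$. Now the identification $M_n(B(K)) \cong B(K^{\oplus n})$ \emph{is} valid, and a direct estimate on $(h_j) \in H^{\oplus n}$ shows $\|(\Uapply(f)^t)_n\| \leq 1$ for all $n$. Finally Proposition~\ref{prop:hs-correspondence}(3) transfers complete contractivity from $\Uapply(f)^t$ back to $\Uapply(f)$. In effect, the Heisenberg--\Schrod{} duality is what lets you trade the awkward matrix norms on $T(H)$ for the tractable ones on $B(H)$.
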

\begin{proof}
Let $f\in \Ball(B(H,K))$, so $\|f\|\leq 1$. By \cite[Theorem 4.14]{Weidmann}, we have $\|f^\dagger\|=\|f\|\leq 1$, so $f^\dag\in\Ball(B(K,H))$. This shows that $\Uadjoint$ restricts and corestricts to a map $\Ball(B(H,K))\to\Ball(B(K,H))$.

Let $f\in \Ball(B(H))$, so $\|f\|\leq 1$. Let $k\in\mathbb C^2$. Using Lemma \ref{lem:vectors in C2otimesH}, we can write $k=\ket 0\otimes h_0+\ket 1\otimes h_1$ for some $h_0,h_1\in H$ with $\|k\|^2=\|h_0\|^2+\|h_1\|^2$. Then
\begin{align*}
    \|\Uctrl(f)k\|^2 & = \|(p_0\otimes \id_H +p_1\otimes f)(\ket 0\otimes h_0+\ket 1\otimes h_1)\|^2 & \\
    & =\|\ket 0\otimes h_0+\ket 1\otimes fh_1\|^2  & \\
    & =\|h_0\|^2+\|fh_1\|^2 & [\textrm{by Lemma }\ref{lem:vectors in C2otimesH}] \\
    & \leq\|h_0\|^2+\|f\|^2\|h_1\|^2 & \\
    & \leq \|h_0\|^2+\|h_1\|^2=\|k\|^2 & [\textrm{since }\|f\|\leq 1]. 
\end{align*}  
Hence, $\|\Uctrl(f)\|\leq 1$, so $\Uctrl(f)\in\Ball(B(\mathbb C^2\htimes H))$.

Finally, let $f\in \Ball(B(H,K))$, so $\|f\|\leq 1$.
Instead of considering $\Uapply(f)\colon T(H)\to T(K)$, we consider its transpose $\Uapply(f)^t:B(K)\to B(H)$ (see Section \ref{sub:hs-duality}). 
For each $x\in T(H)$ and each $b\in B(K)$, we have \[\mathrm{tr}(x, \Uapply(f)^t(b))=\mathrm{tr}(\Uapply(f)(x),b)=\mathrm{tr}(fxf^\dag b)=\mathrm{tr}(xf^\dag b f)=\mathrm{tr}(x,f^\dag b f),\] making use of the cyclic property of the trace \cite[Lemma 3.4.11]{pedersen:analysisnow} in the penultimate equality. It follows that $\Uapply(f)^t=f^\dag(\cdot)f$.
Now, let $n\in\mathbb N$, let $b=[b_{ij}]\in M_n(B(K))\cong B(K^{\oplus n})$. Let $h=(h_j)\in H^{\oplus n}$. Let $k\in K^{\oplus n}$ be given by $k\eqdef(fh_j)$. Using (\ref{eq:norm inner product Hilbert sum}) and $\|f\|\leq 1$, we obtain \[\|k\|^2=\sum_{j=1}^n\|fh_j\|^2\leq\sum_{j=1}^n\|f\|^2\|h_j\|^2\leq\sum_{j=1}^n\|h_j\|^2=\|h\|^2.\] Then
\begin{align*}
    \|(\Uapply(f)^t)_n(b)h\|^2 & = \|[f^\dag b_{ij}f](h_j)\|^2 = \left\|\left(\sum_{j=1}^n f^\dag b_{ij}fh_j\right)_i\right\|^2\overset{(\ref{eq:norm inner product Hilbert sum})}{=}\sum_{i=1}^n\left\|\sum_{j=1}^n f^\dag b_{ij}fh_j\right\|^2\\
    & = \sum_{i=1}^n\left\|f^\dag \sum_{j=1}^n b_{ij}k_j\right\|^2\leq \sum_{i=1}^n\|f^\dag\|^2\left\| \sum_{j=1}^n b_{ij}k_j\right\|^2=\|f\|^2\sum_{i=1}^n\left\| \sum_{j=1}^n b_{ij}k_j\right\|^2\\
    & \leq\left\|\left(\sum_{j=1}^nb_{ij}k_j\right)_i\right\|^2 = \|bk\|^2\leq\|b\|^2\|k\|^2\leq\|b\|^2\|h\|^2.
\end{align*}
Thus $\|(\Uapply(f)^t)_n\|\leq 1$ for each $n\in\mathbb N$, so $\|\Uapply(f)^t\|_\cb\leq 1$, i.e., $\Uapply(f)^t$ is a complete contraction. It now follows from Proposition \ref{prop:hs-correspondence} that also $\Uapply(f)$ is a complete contraction, so $\|\Uapply(f)\|_\cb\leq 1$, hence $\Uapply(f)\in\Ball(CB(T(H),T(K))$.
\end{proof}

We define the Haagerup tensor product. We refer to \cite[Section 9]{effros-ruan} and \cite[Section 1.5.4]{blecher-merdy} for details. 
Let $X$, $Y$ and $Z$ be operator spaces.
Let $u:X\times Y\to Z$ be a bilinear map. For $n\in\mathbb N$, define
a bilinear map $u_{(n)}\colon M_{n}(X)\times M_{n}(Y)\to M_n(Z)$ by 
\[ (f,g)\mapsto \left[\sum_{k=1}^nu(f_{ik},g_{kj})\right]_{i,j}.\]
We say that $u$ is \emph{multiplicatively bounded} if $\sup_{n\in\mathbb N}\|u_{(n)}\|<\infty$, in which case we write $\|u\|_{\mathrm{mb}}\eqdef \sup_{n\in\mathbb N}\|u_n\|$. 
If $\|u\|_{\mathrm{mb}}\leq 1$, we call $u$ \emph{multiplicatively contractive}.
We note that in \cite{blecher-merdy} multiplicatively bounded maps and multiplicatively contractive maps are simply called completely bounded bilinear maps and completely contractive bilinear maps, respectively.

We denote the space of multiplicatively bounded bilinear maps $X\times Y\to Z$
by $MB(X\times Y,Z)$, which is a Banach space when equipped with $\|\cdot\|_\mathrm{mb}$. We can equip $MB(X\times Y,Z)$ with an operator structure by the identification $\MM_n(MB(X\times Y,Z)\cong MB(X\times Y,M_n(Z))$. 

For $n,r\in\mathbb N$ and $x=[x_{ij}]\in \MM_{n,r}(X)$ and $y=[y_{ij}]\in\MM_{r,n}(Y)$, we define $x\odot y\eqdef \left[ \sum_k x_{ik}\otimes y_{kj}\right]$ in $\MM_{n}(X\otimes Y)$.  Now, the \emph{Haagerup} tensor product $X\otimes^h Y$ of $X$ and $Y$ is defined as the completion of the algebraic tensor product $X\otimes Y$ with respect to the norm $\|\cdot\|_h$ on $\MM_n(X\otimes Y)$ defined by 
\[ \|z\|_h\eqdef \inf\{\|x\|\|y\|\colon z=x\odot y, x\in M_{n,r}(X),y\in M_{r,n}(Y), r\in\mathbb N\}.\]
If $\pi:X\times Y\to X\otimes^h Y$ denotes the bilinear map $(x,y)\mapsto x\otimes y$, then for each multiplicatively bounded bilinear map $u:X\times Y\to Z$ there is a unique completely bounded map $\tilde u:X\otimes^h Y\to Z$ such that $\tilde u\circ\pi= u$ and $\|\tilde u\|_\mathrm{cb}=\|u\|_\mathrm{mb}$. This induces a complete isometric isomorphism $MB(X\times Y,Z)\to CB(X\otimes^h Y\to Z)$.

\hagDecompose*
\begin{proof}
  The version of this proposition for completely bounded maps is proven in
  \cite[Theorem 9.4.3]{effros-ruan}. The version with the completely
  contractive maps follows as a simple corollary as follows.
  If $\varphi$ is a complete contraction, we have $\|\psi_1\|_\mathrm{cb}\|\psi_2\|_\cb=\|\varphi\|_\cb\leq 1$, hence at least one of $\|\psi_1\|_\cb$ and $\|\psi_2\|_\cb$ must be smaller than or equal to $1$. If one of them is larger than $1$, say without loss of generality $\|\psi_2\|_\cb>1$, define $\tilde\psi_1\eqdef \|\psi_2\|_\cb\psi_1$ and $\tilde\psi_2\eqdef \psi_2/\|\psi_2\|_\cb$. Then both $\tilde\psi_1$ and $\tilde\psi_2$ are complete contractions and $\varphi(x_1\otimes x_2)=\tilde\psi_1(x_1)\circ\tilde\psi_2(x)$.
  The other direction is also very easy and follows as a corollary using the fact that $\varphi$ is a complete contraction iff $\norm{\varphi}_{\cb} \leq 1.$
\end{proof}

\begin{proposition}
  The quantum switch
  \begin{align}
    & \qsw \colon B(H) \ptimes B(H) \to B(\mathbb C^2 \htimes H) \\
    & \qsw(f \otimes g) \eqdef (\ket 0 \bra 0 \htimes (fg) ) + (\ket 1 \bra 1 \htimes (gf) ) 
  \end{align}
  is a complete contraction with $\|\qsw\|_\mathrm{cb}=1$. 
\end{proposition}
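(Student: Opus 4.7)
The plan is to apply Proposition \ref{prop:universal-tensor}: I will define a bilinear map $u \colon B(H) \times B(H) \to B(\mathbb C^2 \htimes H)$ by the same formula, show that it is jointly completely contractive, and then take $\qsw \eqdef \bar u$. Bilinearity is immediate from bilinearity of operator composition and of $\htimes$ on elementary tensors, so the core of the argument lies in verifying the jointly completely contractive bound, i.e.\ that for all $n, m \in \mathbb N$, all $f = [f_{ij}] \in M_n(B(H))$ and all $g = [g_{kl}] \in M_m(B(H))$,
\[
  \big\| [u(f_{ij}, g_{kl})]_{(i,k),(j,l)} \big\| \leq \|f\|\,\|g\|.
\]

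To estimate this norm, I would identify $M_{nm}(B(\mathbb C^2 \htimes H)) \cong B((\mathbb C^2 \htimes H)^{\oplus nm})$ and take a vector $\xi \in (\mathbb C^2 \htimes H)^{\oplus nm}$. Using Lemma \ref{lem:vectors in C2otimesH} componentwise, I decompose $\xi_{(j,l)} = \ket 0 \otimes h_{0,(j,l)} + \ket 1 \otimes h_{1,(j,l)}$, obtaining vectors $h_0, h_1 \in H^{\oplus nm}$ with $\|\xi\|^2 = \|h_0\|^2 + \|h_1\|^2$. Because $p_0\ket 0 = \ket 0$, $p_0\ket 1 = 0$, $p_1 \ket 0 = 0$, $p_1\ket 1 = \ket 1$, applying the quantum switch matrix kills the cross terms and yields
\[
  \big([u(f_{ij}, g_{kl})] \xi\big)_{(i,k)} = \ket 0 \otimes \big([f_{ij} g_{kl}]\, h_0\big)_{(i,k)} + \ket 1 \otimes \big([g_{kl} f_{ij}]\, h_1\big)_{(i,k)},
\]
so another application of Lemma \ref{lem:vectors in C2otimesH} together with summation over $(i,k)$ gives
\[
  \big\| [u(f_{ij}, g_{kl})] \xi \big\|^2 = \big\| [f_{ij} g_{kl}]\, h_0 \big\|^2 + \big\| [g_{kl} f_{ij}]\, h_1 \big\|^2.
\]

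The main obstacle, and the step that requires some care because $B(H)$ is noncommutative, is showing that both $\|[f_{ij} g_{kl}]\|$ and $\|[g_{kl} f_{ij}]\|$ are bounded by $\|f\|\,\|g\|$ as operators on $H^{\oplus nm}$. The key observation is that the operator $[f_{ij} g_{kl}]_{(i,k),(j,l)}$ on $H^{\oplus nm}$ factors as the composition $F \circ G$, where $F$ acts by the matrix $f$ on the first index (with norm $\|f\|$) and $G$ acts by $g$ on the second index (with norm $\|g\|$); one verifies this by expanding $F(G x)_{(i,k)} = \sum_j f_{ij}\sum_l g_{kl} x_{(j,l)}$. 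Analogously $[g_{kl} f_{ij}] = G \circ F$ (note: the two compositions do \emph{not} coincide because the entries of $f$ and $g$ do not commute in $B(H)$). Both products have norm $\leq \|f\|\,\|g\|$, which combined with the displayed equality above yields $\|[u(f_{ij}, g_{kl})]\| \leq \|f\|\,\|g\|$, as required.

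Proposition \ref{prop:universal-tensor} then delivers a unique complete contraction $\qsw \colon B(H)\ptimes B(H) \to B(\mathbb C^2 \htimes H)$ satisfying \eqref{eq:qsw-def} with $\|\qsw\|_{\cb} \leq 1$. To upgrade this to equality, I would evaluate $\qsw(\id_H \otimes \id_H) = (p_0 \htimes \id_H) + (p_1 \htimes \id_H) = \id_{\mathbb C^2 \htimes H}$, which has norm $1$ in $B(\mathbb C^2 \htimes H)$, while $\|\id_H \otimes \id_H\|_{\wedge} = 1$ in $B(H) \ptimes B(H)$. Hence $\|\qsw\| \geq 1$, and combined with $\|\qsw\|_{\cb} \leq 1$ we conclude $\|\qsw\|_{\cb} = 1$.
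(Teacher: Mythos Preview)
Your proposal is correct and follows essentially the same approach as the paper: define the bilinear $u$, decompose vectors in $(\mathbb C^2\htimes H)^{\oplus nm}$ via Lemma~\ref{lem:vectors in C2otimesH}, reduce to bounding $\|[f_{ij}g_{kl}]\|$ and $\|[g_{kl}f_{ij}]\|$, and finish with the identity to get $\|\qsw\|_{\cb}=1$. The only difference is that the paper obtains $\|[f_{ij}g_{kl}]\|\leq\|f\|\|g\|$ by citing that multiplication on $B(H)$ is multiplicatively (hence jointly completely) contractive, whereas you prove it directly via the factorisation $[f_{ij}g_{kl}]=F\circ G$ with $F,G$ amplifications on the two index slots---this is the same fact unpacked, so the arguments are equivalent.
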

\begin{proof}
Write $K\eqdef \mathbb C^2\htimes H$. Let $u:B(H)\times B(H)\to B(K)$ be the bilinear map defined by 
\[ u(f,g)=p_0\htimes fg+ p_1\htimes gf.  \]

Let $f,g\in B(H)$. 
We first show that $u$ is a joint contraction i.e., $\|u(f,g)\|\leq\|f\|\|g\|$. This might be a bit superfluous, but the essential steps are precisely the same as in proof of the `quantum' case, i.e., the proof of $u$ being a jointly complete contraction, but without all the indices appearing due to working in the `quantum' case. 

Let $k\in K$. By Lemma \ref{lem:vectors in C2otimesH} we can write $k=\ket 0\otimes h_0+\ket 1\otimes h_1$ for some $h_0,h_1\in H$. Then:
\begin{align*}
    \|u(f,g)k\|^2 & = \| (p_0\otimes fg+p_1\otimes gf)(\ket 0\otimes h_0+\ket 1\otimes h_1)\|^2 & \\
    & = \|\ket 0\otimes fgh_0+\ket 1\otimes gfh_1\|^2 & \\
    & = \|fgh_0\|^2 +\|gfh_1\|^2 & [\textrm{by Lemma }\ref{lem:vectors in C2otimesH}]\\
    & \leq \|f\|^2\|g\|^2\|h_0\|^2+\|g\|^2\|f\|^2\|h_1\|^2 & \\
    & = \|f\|^2\|g\|^2(\|h_0\|^2+\|h_1\|^2) & \\
    & = \|f\|^2\|g\|^2\|k\|^2 & [\textrm{by Lemma }\ref{lem:vectors in C2otimesH}]. 
\end{align*}
We conclude that $\|u(f,g)k\|\leq\|f\|\|g\|\|k\|$, so $\|u\|_\mathrm{jb}\leq 1$. 

For the `quantum' case, we first remark that the multiplication on $B(H)$ is a jointly complete contractive bilinear map. This follows because it is a multiplicatively contractive map \cite[1.5.4]{blecher-merdy}, and any multiplicatively contractive map is also jointly completely contractive \cite[1.5.11]{blecher-merdy}.

We furthermore recall Example \ref{ex:C-star-algebra-OSS}, which states that the OSS on $B(H)$ is provided by the identification of $\MM_n(B(H))$ with $B(H^{\oplus n})$. We will use this identification without further reference. 

Now let $[f_{ij}]\in M_n(B(H))$, $[g_{rs}]\in M_m(B(H))$. Let $k=(k_{js})_{(j,s)}\in K^{\oplus nm}.$ 
Note that $k$ is not a matrix; it has a single index $(j,s)$ that consists of a pair, because it is an element of a vector space of dimension $\dim(K)^{n\times m}$. On the other hand,  $[u(f_{ij},g_{rs})]_{(i,r),(j,s)}$ is a matrix; also its two indices $(i,r)$ and $(j,s)$ consists of pairs. 

Furthermore, by equation (\ref{eq:norm inner product Hilbert sum}), we have
$\|k\|^2=\sum_{j=1}^n\sum_{s=1}^m\|k_{js}\|^2$.
By Lemma \ref{lem:vectors in C2otimesH}, we can write $k_{js}=\ket 0\otimes h_{0,js}+\ket 1\otimes h_{1,js}$ for some $h_{i,js}\in H$, hence $\|k_{js}\|^2=\|h_{0,js}\|^2+\|h_{1,js}\|^2$. Now, if we write $h_0=(h_{0,js})$ and $h_1=(h_{1,js})$ in $H^{\oplus nm}$, then $\|h_i\|^2=\sum_{j=1}^n\sum_{s=1}^m\|h_{i,js}\|^2$ for $i=0,1$, whence 
\begin{equation}\label{eq:k-norm}
\|k\|^2=\sum_{j=1}^n\sum_{s=1}^m(\|h_{0,js}\|^2+\|h_{1,js}\|^2=\|h_0\|^2+\|h_1\|^2.
\end{equation}
Since the multiplication on $B(H)$ is a jointly completely contractive bilinear map, we have
\begin{align}
\label{ineq:multiplication-jcc-1}   \| [f_{ij}g_{rs}]\| & \leq\|[f_{ij}]\|\|[g_{rs}]\|,\\
 \label{ineq:multiplication-jcc-2}  \|[g_{rs}f_{ij}]\| & \leq \|g_{rs}\|\|f_{ij}\|.
\end{align}
Then

\begin{align*}
    \| [u(f_{ij},g_{rs})] k\|^2 & = \|[u(f_{ij},g_{rs})]_{(i,r),(j,s)}(k_{js})\|^2 & \\
    & = \left\| \left(\sum_{j=1}^n\sum_{s=1}^m u(f_{ij},g_{rs})k_{js}\right)_{(i,r)}\right\|^2 & [\text{note: norm on }K^{\oplus nm}]\\
    & =\sum_{i=1}^n\sum_{r=1}^m \left\| \sum_{j=1}^n\sum_{s=1}^m u(f_{ij},g_{rs})k_{js}\right\|^2 & [\text{by }(\ref{eq:norm inner product Hilbert sum}),\text{note: sums of norms on }K]\\
   &  =\sum_{i=1}^n\sum_{r=1}^m \left\| \sum_{j=1}^n\sum_{s=1}^m (p_0\otimes f_{ij}g_{rs}+p_1\otimes g_{rs}f_{ij})(\ket 0\otimes h_{0,js}+\ket 1\otimes h_{1,js})\right\|^2 & \\
    & =\sum_{i=1}^n\sum_{r=1}^m \left\| \sum_{j=1}^n\sum_{s=1}^m( \ket 0\otimes f_{ij}g_{rs}h_{0,js}+\ket 1\otimes g_{rs}f_{ij}h_{1,js})\right\|^2 & \\
    & =\sum_{i=1}^n\sum_{r=1}^m \left\| \ket 0\otimes\sum_{j=1}^n\sum_{s=1}^m f_{ij}g_{rs}h_{0,js}+\ket 1\otimes \sum_{j=1}^n\sum_{s=1}^m g_{rs}f_{ij}h_{1,js}\right\|^2 & \\
     & = \sum_{i=1}^n\sum_{r=1}^m \left(\left\| \sum_{j=1}^n\sum_{s=1}^m f_{ij}g_{rs}h_{0,js}\right\|^2+\left\| \sum_{j=1}^n\sum_{s=1}^m g_{rs}f_{ij}h_{1,js}\right\|^2\right) & [\text{by Lemma }\ref{lem:vectors in C2otimesH}]\\
             & = \sum_{i=1}^n\sum_{r=1}^m \left(\left\| [f_{ij}g_{rs}]_{(i,r),(j,s)}(h_{0,js}) \right\|^2+\left\| [g_{rs}f_{ij}]_{(i,r),(j,s)}(h_{1,js})\right\|^2\right) & \\
             & = \left\| [f_{ij}g_{rs}]h_0 \right\|^2+\left\| [g_{rs}f_{ij}]h_1\right\|^2 &  [\text{by }(\ref{eq:norm inner product Hilbert sum})]\\   
             & \leq \| [f_{ij}g_{rs}]\|^2\|h_0 \|^2+\| [g_{rs}f_{ij}]\|^2\|h_1\|^2 & \\
               & \leq \| [f_{ij}]\|^2\|[g_{rs}]\|^2\|h_0 \|^2+\| [g_{rs}]\|^2\|[f_{ij}]\|^2\|h_1\|^2 & [\text{by }(\ref{ineq:multiplication-jcc-1})\text{ and }(\ref{ineq:multiplication-jcc-2})]\\
               & = \|[f_{ij}]\|^2\|[g_{rs}]\|^2(\|h_0\|^2+\|h_1\|^2) & \\
               & = \|[f_{ij}]\|^2\|[g_{rs}]\|^2\|k\|^2 & [\textrm{by }(\ref{eq:k-norm})].
\end{align*}

Thus, $\|[u(x_{ij},y_{rs})]k\|\leq\|[x_{ij}]\|\|[y_{rs}]\|\|k\|$, which shows that $u$ is jointly complete contractive, i.e., $\|u\|_\mathrm{jcb}\leq 1$.

Note that $u(\id_H,\id_H)=p_0\otimes\id_H+p_1\otimes\id_H=(p_0+p_1)\otimes \id_H=\id_{\mathbb C^2}\otimes \id_H=\id_K$. Hence by choosing $f_{ii}=\id_H$ for each $i$,  $g_{rr}=\id_H$ for each $r$, $f_{ij}=0$ if $i\neq j$, and $g_{rs}=0$ if $r\neq s$, we have 
\[ u(f_{ij},g_{rs}) =\begin{cases} \id_H, & (i,r)=(j,s),\\
0, & (i,r)\neq(j,s),
\end{cases}\]
so $[u(f_{ij},g_{rs})]=\id_{K^{\oplus nm}}.$ Since the norm of the identity equals $1$, we must have $\|u\|_{\mathrm{jcb}}=1$. It follows that $\qsw=\bar u\colon B(H)\ptimes B(H)\to B(\mathbb C\htimes H)$ is a complete contraction with $\|\qsw\|_\mathrm{cb}=\|\bar u\|_\mathrm{cb}=\|u\|_{\mathrm{jcb}}=1$.
\end{proof}

\begin{proposition}
    Let $H$ be a Hilbert space, and let $u\colon B(H)\times B(H)\to B(\mathbb C^2\htimes H)$ be the bilinear map 
\[ u(f,g)= p_0\otimes fg+p_1\otimes gf.\]
Then $\|u\|_\mathrm{mb}\geq n$ for any natural number  $n\leq\dim H$. In particular, if $\dim H>1$, then $u$ is not multiplicatively contractive and if $\dim H=\infty$, then $u$ is not multiplicatively bounded. 
\end{proposition}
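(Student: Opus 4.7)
The plan is to exhibit, for each positive integer $n \leq \dim H$, a pair of matrices $f, g \in M_n(B(H))$ such that $\|u_{(n)}(f,g)\| \geq n\|f\|\|g\|$; this immediately yields $\|u\|_{\mathrm{mb}} \geq \|u_{(n)}\| \geq n$. The underlying intuition is that $u$ involves both the ordinary product $fg$ and the reversed product $gf$: the $p_0\otimes fg$ part behaves well under the matrix amplifications (ordinary multiplication on $B(H)$ is a multiplicative contraction), whereas the $p_1 \otimes gf$ part does not, and this asymmetry can be made to grow linearly in $n$ by exploiting the noncommutativity of matrix units.

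First, I would fix orthonormal vectors $e_1,\dots,e_n \in H$ (possible since $n \leq \dim H$) and write $E_{ij} \eqdef \ket{e_i}\bra{e_j} \in B(H)$, satisfying the standard relation $E_{ab}E_{cd} = \delta_{bc}E_{ad}$. Then I would define $f \in M_n(B(H))$ to be supported on its first row with entries $f_{1,k} = E_{k,1}$, and $g \in M_n(B(H))$ to be supported on its first column with entries $g_{k,1} = E_{1,k}$; all other entries are zero. Using the standard row/column operator norm formulas $\|f\|^2 = \|\sum_k f_{1,k}f_{1,k}^*\|$ and $\|g\|^2 = \|\sum_k g_{k,1}^*g_{k,1}\|$, both sums simplify to $\sum_k E_{kk} = P$, the projection onto $\mathrm{span}(e_1,\dots,e_n)$; hence $\|f\| = \|g\| = 1$.

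The core computation is to evaluate $u_{(n)}(f,g)$. Since $f$ is supported on its first row and $g$ on its first column, only the $(1,1)$-entry of $u_{(n)}(f,g)_{ij} = \sum_k u(f_{ik}, g_{kj})$ is non-zero, and it equals
\[ u_{(n)}(f,g)_{1,1} = p_0 \otimes \sum_k E_{k,1}E_{1,k} + p_1 \otimes \sum_k E_{1,k}E_{k,1} = p_0 \otimes P + n\,(p_1 \otimes E_{1,1}). \]
The crucial asymmetry is visible here: $E_{k,1}E_{1,k} = E_{kk}$ varies with $k$ and sums to the projection $P$ of norm $1$, whereas $E_{1,k}E_{k,1} = E_{1,1}$ is the same rank-one operator for every $k$, so the sum amplifies to $nE_{1,1}$. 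Evaluating this entry at the unit vector $\ket 1 \otimes e_1 \in \mathbb{C}^2 \htimes H$ yields the output $n\ket 1 \otimes e_1$ of norm $n$, so $\|u_{(n)}(f,g)\| \geq n$; combined with $\|f\|\|g\| = 1$, this gives $\|u_{(n)}\| \geq n$ and hence $\|u\|_{\mathrm{mb}} \geq n$.

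The main obstacle is identifying the right witnesses $(f,g)$: several natural first attempts (for instance $f_{ij} = g_{ij} = E_{ij}$, or block-diagonal choices) only yield ratios of $1$ or even $1/n$, because the two sums $\sum_k f_{ik}g_{kj}$ and $\sum_k g_{kj}f_{ik}$ end up comparable in size, or because the norms $\|f\|, \|g\|$ themselves grow with $n$ and absorb the amplification. The row/column construction above is calibrated precisely so that $\|f\| = \|g\| = 1$ while the reversed-product sum concentrates on a single rank-one operator, thereby isolating the noncommutative contribution of the $p_1 \otimes gf$ term and confirming that $u$ cannot be multiplicatively bounded once $\dim H = \infty$.
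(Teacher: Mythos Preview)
Your argument is correct and rests on the same mechanism as the paper's: pick matrix units so that in the reversed product $g_{kj}f_{ik}$ the summand is independent of $k$, forcing the $p_1$-component of $u_{(n)}(f,g)$ to scale like $n$. The witnesses differ, though. The paper uses a single matrix $f=g$ with $f_{ij}=E_{ji}$ (the full ``transpose'' of matrix units), verifies $\|f\|\leq 1$ by a direct computation on $H^{\oplus n}$, and then evaluates $u_{(n)}(f,f)$ on the vector $(\ket 1\otimes e_1,0,\dots,0)$; here $f_{lj}f_{il}=E_{jl}E_{li}=E_{ji}$ is independent of $l$, giving the factor $n$. Your row/column choice $f_{1k}=E_{k1}$, $g_{k1}=E_{1k}$ achieves the same amplification more transparently: only the $(1,1)$-entry of $u_{(n)}(f,g)$ survives, and the norm computations for $f$ and $g$ reduce to the standard row and column formulas. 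Conversely, the paper's construction has the minor elegance of needing only one witness ($f=g$), which shows that even the ``symmetric'' part of the bilinear form already fails to be multiplicatively bounded. Both routes yield the bound $\|u_{(n)}\|\geq n$ with the same effort.
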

\begin{proof}
Since $n$ by assumption is a natural number smaller than  or equal to $\dim H$, it follows that $H$ has at least $n$ linearly independent vectors $v_1,\ldots,v_n$, which we can choose to be orthogonal and normalized. Moreover, can extend $v_1,\ldots,v_n$ to an orthonormal basis $V$ of $H$. 

For $1\leq i\leq n$, let $e_{ij}\in B(H)$ be defined as the operator satisfying $e_{ij}v_k=\delta_{jk}v_i$, for each $1\leq k\leq n$, and $e_{ij}v=0$ for each $v\in V\setminus\{v_1,\ldots,v_n\}$. Fix $i,j,k,l\in\{1,\ldots,n\}$. Then for each $1\leq m\leq n$, we have $e_{ij}e_{kl}v_m=\delta_{lm}e_{ij}v_k=\delta_{jk}\delta_{lm}v_i=\delta_{jk}e_{il}v_m$, and $e_{ij}e_{kl}v=0=\delta_{jk}e_{il}v$ for each $v\in V\setminus\{v_1,\ldots,v_n\}$. Thus, $e_{ij}e_{kl}=\delta_{jk}e_{il}$.

We now define $f\in M_n(B(H))$ by $f=[f_{ij}]_{i,j}$ be given by $f_{ij}\eqdef e_{ji}$. Let $h=(h_i)_i\in H^{\oplus n}$. By (\ref{eq:norm inner product Hilbert sum}), we have $\|h\|^2=\sum_{i=1}^n\|h_i\|^2$. Moreover, for each $i$, we have $h_i=\sum_{v\in V}\lambda_{i,v}v$ for some $\lambda_{i,v}\in\mathbb C$, hence by Parseval's identity, we have $\|h_i\|^2=\sum_{v\in V}|\lambda_{i,v}|^2$. Then 
\begin{align*}
    \|fh\|^2& =\|[f_{ij}](h_j)\|^2=\left\|\left(\sum_{j=1}^nf_{ij}h_j\right)_i\right\|^2 \overset{(\ref{eq:norm inner product Hilbert sum})]}{=} \sum_{i=1}^n \left\|\sum_{j=1}^nf_{ij}h_j\right\|^2=\sum_{i=1}^n \left\|\sum_{j=1}^ne_{ji}\sum_{v\in V}\lambda_{j,v}v\right\|^2\\
    & = \sum_{i=1}^n \left\|\sum_{j=1}^n\sum_{k=1}^n\lambda_{j,v_k}e_{ji}v_k\right\|^2= \sum_{i=1}^n \left\|\sum_{j=1}^n\sum_{k=1}^n\lambda_{j,v_k}\delta_{ik}v_j\right\|^2= \sum_{i=1}^n \left\|\sum_{j=1}^n\lambda_{j,v_i}v_j\right\|^2\overset{[\text{Parseval}]}{=}\sum_{i=1}^ n\sum_{j=1}^n|\lambda_{j,v_i}|^2\\
    & \leq \sum_{j=1}^n\sum_{v\in V}|\lambda_{j,v}|^2=\sum_{j=1}^ n\|h_j\|^2\overset{(\ref{eq:norm inner product Hilbert sum})]}{=} \|(h_j)_j\|^2=\|h\|^2,
\end{align*}
so $\|fh\|\leq\|h\|$, hence $\|f\|\leq 1$. 

Now, define $k_1,\ldots, k_n\in \mathbb C^2\htimes H$ as follows. Let $k_1\eqdef \ket 1\otimes v_1$, and $k_j\eqdef 0$ for each $j>1$. Then $\|k\|^2=\sum_{j=1}^n\|k_j\|^2=\|k_1\|=\|\ket 1\otimes v_1\|=\|\ket 1\|\|v_1\|=1$, using that the norm on $\mathbb C^2\htimes \mathbb C^n$ is a cross norm. So $k$ is a unit vector in $\mathbb C^2\htimes H$.  
Then
\begin{align*}
    \|u_{(n)}(f,f) k\|^2 & = \left\|\left[\sum_{l=1}^n u(f_{il},f_{lj})\right]_{i,j}(k_j)_j\right\|^2 = \left\|\left(\sum_{l=1}^nu(f_{il},f_{lj})k_j\right)_i\right\|^2 \overset{(\ref{eq:norm inner product Hilbert sum})}{=} \sum_{i=1}^n\left\|\sum_{j=1}^n\sum_{l=1}^nu(f_{il},f_{lj})k_j\right\|^2
    \\
    & = \sum_{i=1}^n\left\|\sum_{l=1}^nu(f_{il},f_{l1})k_1\right\|^2  = \sum_{i=1}^n\left\|\sum_{l=1}^n(p_0\otimes f_{il}f_{l1}+p_1\otimes f_{l1}f_{il})\ket 1\otimes v_1\right\|^2= \sum_{i=1}^n\left\|\sum_{l=1}^n \ket 1\otimes f_{l1}f_{il}v_1\right\|^2
    \\
   & =  \sum_{i=1}^n\left\|\ket 1\otimes \sum_{l=1}^n f_{l1}f_{il}v_1\right\|^2 \overset{[\textrm{Lemma }\ref{lem:vectors in C2otimesH}]}{=}  \sum_{i=1}^n\left\|\sum_{l=1}^n f_{l1}f_{il}v_1\right\|^2 =  \sum_{i=1}^n\left\|\sum_{l=1}^n e_{1l}e_{li}v_1\right\|^2
   =  \sum_{i=1}^n\left\|\sum_{l=1}^n e_{1i}v_1\right\|^2\\
   & =  \sum_{i=1}^n\|n e_{1i}v_1\|^2 = \sum_{i=1}^n\|n \delta_{i1}v_1\|^2 = \|nv_1\|^2 =n^2\|v_1\|^2 = n^2.\end{align*}
   Since $\|k\|=1$, it follows that $\|u_{(n)}(f,f)\|\geq n$, and since $\|f\|\leq 1$, it follows that $\|u_{(n)}\|\geq n$. Hence we must have $\|u\|_\mathrm{mb}\geq n$. The statements about $u$ not being multiplicative contractive or multiplicatively bounded follow immediately.
\end{proof}

\begin{corollary}
    Let $H$ be a Hilbert space. If $\dim H>1$, then there is no complete contraction $\qsw'\colon B(H)\hagtimes B(H)\to B(\mathbb C^2\htimes H)$ such that
    \[\qsw'(f\otimes g)=\ket 0\bra 0 \otimes fg+\ket 1\bra 1\otimes gf\] for each $f,g\in B(H)$. If $\dim H=\infty$, then there is no such $\qsw'$ that is completely bounded.
\end{corollary}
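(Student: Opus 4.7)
The plan is to derive the corollary directly from the preceding proposition via the universal property of the Haagerup tensor product. Recall that by the universal property (stated just before Proposition \ref{prop:hag-decompose}), the assignment $u \mapsto \tilde u$ sets up a completely isometric isomorphism $MB(X\times Y, Z) \cong \CB(X \hagtimes Y, Z)$ with $\|\tilde u\|_\mathrm{cb} = \|u\|_\mathrm{mb}$. So any hypothetical $\qsw'$ on $B(H)\hagtimes B(H)$ is in bijective correspondence with the bilinear map $u \colon B(H)\times B(H) \to B(\mathbb C^2\htimes H)$ given by $u(f,g) = p_0\otimes fg + p_1\otimes gf$ that was analysed in the preceding proposition.

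First, I would argue by contraposition. Assume such a $\qsw'$ exists. Composing with the canonical bilinear map $\pi \colon B(H)\times B(H) \to B(H)\hagtimes B(H)$, we obtain a multiplicatively bounded (resp.\ multiplicatively contractive) bilinear map $\qsw' \circ \pi$, and its value on a pair $(f,g)$ is $\qsw'(f\otimes g) = p_0\otimes fg + p_1\otimes gf = u(f,g)$. So $u = \qsw' \circ \pi$ and by uniqueness in the universal property $\widetilde u = \qsw'$, which gives $\|u\|_\mathrm{mb} = \|\qsw'\|_\mathrm{cb}$.

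Now I split into the two cases. If $\dim H > 1$, taking $n = 2$ in the preceding proposition gives $\|u\|_\mathrm{mb} \geq 2$, so $\|\qsw'\|_\mathrm{cb} \geq 2 > 1$, contradicting the assumption that $\qsw'$ is a complete contraction. If $\dim H = \infty$, the preceding proposition gives $\|u\|_\mathrm{mb} \geq n$ for every $n \in \mathbb N$, hence $\|u\|_\mathrm{mb} = \infty$, so $\qsw'$ cannot be completely bounded.

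There is no substantial obstacle here: all the hard work is done in the preceding proposition (the explicit construction of matrix units $e_{ij}$ and the calculation $\|u_{(n)}(f,f)\| \geq n$). The corollary is essentially a one-line translation from the bilinear (multiplicative) side of the Haagerup universal property to the linear (completely bounded) side. The only thing to be mildly careful about is noting that the bilinear map uniquely determined by $\qsw'$ coincides on elementary tensors with the $u$ of the previous proposition and hence is equal to it, which is immediate since the algebraic tensor product $B(H)\otimes B(H)$ is dense in $B(H)\hagtimes B(H)$ and $\pi$ is linear in each argument on $B(H)\times B(H)$.
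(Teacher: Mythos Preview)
Your proposal is correct and is exactly the intended argument: the paper states the corollary without proof, treating it as an immediate consequence of the preceding proposition together with the completely isometric identification $MB(X\times Y,Z)\cong\CB(X\hagtimes Y,Z)$, which is precisely what you spell out.
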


\end{document}